\numberwithin{equation}{section}
\theoremstyle{plain}
\newtheorem*{thm*}{Theorem}
\theoremstyle{plain}
\newtheorem{thm}{Theorem}[section]
\newtheorem{lem}[thm]{Lemma}
\theoremstyle{definition}
\newtheorem{defn}[thm]{Definition}
\newtheorem{rem}[thm]{Remark}
\newcommand{\susu}{$\text{SU}(2)\!\times\!\text{SU}(3)$}
\newcommand{\sug}{$\text{SU}(2)\!\times\!\text{G}_2$}
\newcommand{\tr}{{\bf tr} }
\numberwithin{equation}{section}
\tikzset{
  big arrow/.style={
    decoration={markings,mark=at position 1 with {\arrow[scale=1.5,#1]{>}}},
    postaction={decorate},
    shorten >=0.4pt},
  big arrow/.default=black}
\begin{document}

\begin{titlepage}
\begin{center}
\vspace{2cm}
{\Huge\bfseries The Geometry of the \sug-model. \\  }
\vspace{2cm}
{\Large
Mboyo Esole$^{\diamond}$ and Monica Jinwoo Kang$^\clubsuit$\\}
\vspace{.6cm}
{\large $^{\diamond}$ Department of Mathematics, Northeastern University}\par
{\large  360 Huttington Avenue, Boston, MA 02115, U.S.A.}\par
\vspace{.3cm}
{\large $^\clubsuit$ Department of Physics,  Harvard University}\par
{ 17 Oxford Street, Cambridge, MA 02138, U.S.A}\par
\vspace{.3cm}
 \scalebox{.95}{\tt  j.esole@northeastern.edu,  jkang@fas.harvard.edu }\par
\vspace{2cm}
{ \bf{Abstract}}\\
\end{center}
We study  elliptic fibrations that geometrically engineer an \sug\  gauge theory realized  by a Weierstrass model for the  collision III+$\text{I}_0^{*\text{ns}}$. 
We find all the distinct crepant resolutions of such a model and the flops connecting them.
We compute the generating function for the Euler characteristic of the \sug-model. 
In the case of a Calabi-Yau threefold, we consider the compactification of M-theory and F-theory on an \sug-model to a five and six-dimensional supergravity theory with eight supercharges. 
By matching each crepant resolution with each Coulomb chamber of the five-dimensional theory, we determine the number of multiplets and compute the prepotential in each Coulomb chamber. 
In particular, we discuss the counting numbers of hypermultiplets in the presence of singularities.
We discuss in detail the cancellation of anomalies of the six-dimensional theory. 

\vspace{2cm}
{{\bf Key words:} Elliptic fibrations, crepant resolutions, flop transitions, Weierstrass models, Tate's algorithm, six-dimensional supergravity, five-dimensional supergravity, anomaly cancellations, Euler characteristic.}\\
\vfill 
\end{titlepage}

\tableofcontents
\newpage
\section{Introduction and Summary}\label{Sec:Intro}

Semi-simple Lie groups  appear naturally  in compactifications of M-theory and F-theory on elliptic fibrations \cite{Bershadsky:1996nh,Bershadsky:1996nu}. 
The Lie group is semi-simple  (but non-simple) when the discriminant of the elliptic fibration contains at least two  irreducible components $\Delta_1$ and $\Delta_2$ for which the dual graph of the singular fiber over the generic point of $\Delta_i$ $(i=1,2)$ is reducible.       These are called {\em collisions of singularities} and were first studied in string theory by Bershadsky and Johanson \cite{Bershadsky:1996nu} using Miranda's regularization of  singular Weierstrass models \cite{Miranda.smooth}.  
If the fiber over the generic point of $\Delta_i$ is a Kodaira fiber of type $T_i$, the collision is called a $T_1+T_2$-model. If $T_i$ has dual graph with Langlands dual $\widetilde{\mathfrak{g}}_i$, where $\mathfrak{g}_i$ is the Lie algebra of a compact simply connected group $G_i$ (assuming that the Mordell--Weil group is trivial), the model is also called a $G_1\times G_2$-model. 

The aim of this article  is to study the geometry and physics of  \sug-models realized by the  collision of singularities of type $\text{III}+\text{I}^{*\text{ns}}_0$:
\begin{equation}
\text{\sug\quad  from } \quad
\text{III}+\text{I}^{*\text{ns}}_0.\nonumber
\end{equation}
In F-theory, we associate to an elliptic fibration of dimension three or higher a group $G$ with Lie algebra $\mathfrak{g}$, and a 
  representation $\bf{R}$ of $\mathfrak{g}$.
 In the case of an \sug-model, the  representation $\bf{R}$ cannot be derived by the Katz--Vafa method (see Section \ref{sec:KatzVafa}) but can be deduced using the method of saturations of weights  obtained geometrically by intersection of fibral divisors and irreducible components of singular fibers over codimension-two points in the base (see Section \ref{sec:Sat}). The matter representation is then the  following  direct sum of  irreducible representations\footnote{We write the tensor product of a representation $\bf{r_1}$ of SU($2$) and $\bf{r_2}$ of G$_2$ as $(\bf{r_1}, \bf{r_2})$. We follow the usual tradition in physics of writing a representation by its dimension as there is no room for ambiguity with the representations used in this paper.  
The $\bf{2}$ and the $\bf{3}$ of SU($2$) are respectively the fundamental and the adjoint representation of SU($3$); the $\bf{7}$ and the $\bf{14}$ of G$_2$ are respectively the fundamental and the adjoint representation of G$_2$. The representations $(\bf{3,1})$,      $(\bf{1,14})$, and  $(\bf{1,7})$ are real while  the representation $(\bf{2,1})$ and  $(\bf{2,7})$ are pseudo-real. } (see Section \ref{sec:Rep})
\begin{equation}
\mathbf{R}=(\bold{3},\bold{1})\oplus(\bold{1},\bold{14})\oplus(\bold{2},\bold{7})\oplus (\bold{2},\bold{1})\oplus (\bold{1},\bold{7}).
\end{equation}
In contrast to the traditional approach \cite{Candelas:1997jz,Sadov:1996zm,Morrison:2012np}, we derive this representation $\bf{R}$ purely geometrically without using the anomaly cancellation conditions of the six-dimensional theory nor a dual heterotic model.  
This method is valid  for both  five and six-dimensional  theories and  has the advantage to work in situations where the   Katz--Vafa method is not applicable (see Section \ref{sec:KatzVafa}). 
In the $5d$ theory, it also has the benefit of not depending on the existence of a  $6d$ uplift. 
For the six-dimensional theory,  we check explicitly that the anomaly cancellation conditions are satisfied.  We also point out that the fundamental representation $(\bf{1,7})$ is a {\em frozen representation} when the curve supporting G$_2$ is a $-3$-curve intersecting the curve supporting SU($2$) transversally and at a unique point. This explains the absence of the representation $(\bf{1,7})$ in \cite{Candelas:1997jz}. 
The number of charged hypermultiplets in each of these representations are given in equation \eqref{eq1}, which generalizes  the spectrum $\tfrac{1}{2}(\bf{2,7})\oplus\tfrac{1}{2}(\bf{2,1})$ expected in a six-dimensional theory when a  \sug-model is locally defined at the collision of a $-2$-curve and a $-3$-curve intersecting transversally at one point and supporting respectively a SU($2$) and a G$_2$ Lie group \cite{Candelas:1997jz,Morrison:2012np}.

The $\text{III}+\text{I}^{*\text{ns}}_0$-model was first discussed in string theory in the early days of F-theory in \cite{Bershadsky:1996nu,Candelas:1996su,Candelas:1997jz} and was  already considered in mathematics in the early 1980's  by  Miranda \cite{Miranda.smooth}.   
  The \sug-model appears naturally in the study of {\em non-Higgsable clusters}  \cite{Morrison:2012np}.  Over non-compact bases, collisions of singularities are used to classify $6d$ ${\cal N}=(1,0)$ superconformal field theories using elliptic fibrations. 
 For example, for elliptic threefolds, such a non-Higgsable model is produced when the discriminant locus contains two rational curves with self-intersection $-3$ and $-2$ intersecting transversally  or three rational curves which form a chain of curves intersecting transversally at  a point with self-intersections $(-3,-2,-2)$ \cite{Bhardwaj:2015xxa,Heckman:2013pva,DelZotto:2014hpa,Morrison:2012np}. 
 
Surprisingly, despite  receiving a significant amount of attention in the last few years for their role in the study of superconformal field theories, many properties of the $\text{III}+\text{I}^{*\text{ns}}_0$ model remain unknown.
 For example, the structure of its Coulomb branch is not known. We show that the Coulomb branch consists of four chambers arranged as a Dynkin diagram of type A$_4$ illustrated on Figure \ref{fig:chambers}. We also compute the four corresponding prepotentials. 
 Geometrically, each Coulomb chamber corresponds to a different crepant resolution of the Weierstrass model. We present for the first time these  four crepant resolutions,  we study their fiber structures (see Figures \ref{fig:NKlist} and Tables \ref{fig:IIII0nsRes1}-\ref{Fig:I0Z3}) and determine the triple intersection numbers of their fibral divisors in Theorem \ref{Thm:TripleInt}.   
By comparing the triple intersection numbers with the prepotential, we determine   the number of charged hypermultiplets in a five-dimensional supergravity theory defined by a compactification of M-theory on a threefold given by an \sug-model.  We then show that these numbers are also consistent with an anomaly-free six-dimensional theory,  which corresponds to a compactification of F-theory on the same variety.

We work with an arbitrary base of dimension $n$ and specialize to the case of Calabi-Yau threefolds only when necessary to connect with the physics. 
Since the resolution  of singularities is a local process, the base can be either compact or non-compact. The compactness of the base will matter when considering anomaly cancellations.

Let $S$ and $T$ be the divisors supporting SU($2$) and G$_2$ respectively. 
Let $\Delta'$ be the third  component of the discriminant locus. Then, the number of charged hypermultiplets are (see Section \ref{sec:Physics})
\begin{align}\label{eq1}
\begin{aligned}
n_{\bf{2,7}} &= \frac{1}{2}S\cdot T, \quad &&  n_{\bf{3,1}} = g(S), &&  \quad  n_{\bf{2,1}} &&=-S\cdot(8 K +2S+ \frac{7}{2}T), \\
& \quad  
   &&   n_{\bf{1,14}}=g(T), && \quad n_{\bf{1,7}} &&= -T \cdot (5 K+2T+S).
 \end{aligned}
 \end{align}
 The numbers $n_{\bf{2,7}}$  and $n_{\bf{2,1}}$ can be half-integer numbers since  the representations  $(\bf{2,7})$  and $({\bf{2,1}})$ are both pseudo-real and thus can have   half-hypermultiplets charged under them.  
 As a sanity check, when $S$ and $T$ are respectively $-2$ and $-3$-curves intersecting transversally at a unique point, we retrieve the  familiar spectrum(see Remark \ref{rem1} below)
$$\bf{R}=\tfrac{1}{2}({\bf{2,7}})\oplus \tfrac{1}{2} ({\bf{2,1}}).$$   
 It is also useful to express $n_{\bf{2,1}}$ and $n_{\bf{1,7}}$ in terms of the genus\footnote{We recall that the genus $g$ of a curve $C$ in a surface of canonical class $K$ satisfies the relation  $2-2g=-C\cdot K-C^2$.}
 and the self-intersection of $T$ and $S$
  \begin{align}\label{eqn:gs}
 n_{\bf{2,1}}=16-16g(S) +6 S^2-\frac{7}{2} S\cdot T, 
 \quad  
 n_{\bf{1,7}} = 10-10g(T)+3 T^2 -S\cdot T.
 \end{align}
 The Hodge numbers of a compact Calabi-Yau threefold that is  an \sug-model are (see Section \ref{sec:Euler})
 \begin{equation}
 h^{1,1}(Y) =14-K^2, \quad   h^{2,1}(Y)=29 K^2+15 K \cdot S+24 K \cdot T+3 S^2+6 S \cdot T+6 T^2+14,
 \end{equation}
 where $K$ is the canonical class of the base of the elliptic fibration $Y$. 
In the six-dimensional supergravity theory, using Sadov's techniques \cite{Sadov:1996zm}, we check that anomalies are canceled explicitly by the Green--Schwarz--Sagnotti-West mechanism. The anomaly polynomial I$_8$ factors as a perfect square:
$$
I_8 =\frac{1}{2} \left(\frac{1}{2} K {\mathrm{tr}} R^2 +2 S {\mathrm{tr}}_{\bf{2}} F^2_1 + T {\mathrm{tr}}_{\bf{7}} F^2_2 \right)^2,
$$
where $F_1$ and $F_2$ are  the field strengths for SU($2$) and G$_2$ respectively.

\begin{rem}\label{rem1}
  the \sug-model, the fundamental representation $(\bold{1},\bold{7})$ is often ignored because it is localized  away from the collision of the curves supporting G$_2$ and SU($2$). However, it is expected from the study of the G$_2$-model as discussed in detail \cite{G2}. 
In the case of an \sug-model for which SU($2$) is supported on a curve $S$ and G$_2$ is supported on a $-3$-curve $T$, such that $S$ and $T$ intersect transversally at one point, we see using equation \eqref{eqn:gs} that the fundamental representation $(\bold{1},\bold{7})$ is a {\em frozen representation}: there are vertical curves carrying the weights of such a representation, whereas the number of hypermultiplets  $n_{\bold{1},\bold{7}}$ charged under that representation  vanishes. 
 If $S$ and $T$ are smooth rational curves, the adjoint representation is frozen ($ n_{\bf{3,1}}= n_{\bf{1,14}}=0$). 
If $T$ is and $S$ are respectively $-3$ and $-2$ curves intersecting transversally at a point, we retrieve  the famous spectrum 
$\tfrac{1}{2}({\bf{2,7}})\oplus \tfrac{1}{2} ({\bf{2,1}})$ first observed in \cite{Candelas:1997jz} by Candelas, Perevalov and Rajesh; and popularized in recent years with the study of non-Higgsable clusters  \cite{Morrison:2012np}.

\end{rem}
\begin{rem}\label{rem2}
In an ${\cal  N}=(1,0)$ six-dimensional theory, the cancellations of anomalies of an \sug-model show that the number of hypermultiplets charged in the representation $(\bold{2},\bold{1})$ receives a contribution $S\cdot T/2$, which suggests that there are curves carrying the weights of that representation at the intersection of $S$ and $T$.  
The formula for $ n_{\bf{2,1}}$ derived using anomaly cancellations of the six-dimensional theory is consistent with the triple intersection numbers of the \sug-model. However, we do not see any evidence of such a representation in the weights carried by curves composing the singular fibers over $S\cap T$. 
They carry instead the weights of the representation $(\bf{2,7})$ which contain as a subset the weights of the representation $(\bf{2,1})$. 
 We note that if $S$ and $\Delta'$ have a non-empty intersection, there are rational curves  carrying  the weight of the representation $(\bold{2},\bold{1})$ away from the intersection of $S$ and $T$, exactly over the points where the fiber III degenerates to a fiber of type IV. 
We discuss this further in Section \ref{sec:odd}.
\end{rem}

\subsection{Outlook}

In the rest of the introduction, we will motivate the model by an argument of simplicity, spell out the questions that we aim to answer, and summarize the key results of the paper for the convenience of the reader.  We discuss the definition of the model, explain the structure of its Coulomb branch (and the geometric derivation of its matter content), present the 
 non-Kodaira fibers produced by the four crepant resolutions of the \sug-model, and discuss aspects of compactifications of M-theory and F-theory on an \sug-model that is a Calabi-Yau threefold. We also summarize  the counting of its charged hypermultiplets, and the cancellations of anomalies of the six-dimensional theory.

The rest of the paper is a detailed development of these points and is structured as follows. In  Section \ref{sec:summary}, we collect our geometric  results. 
In  Section \ref{SU2G2Collision}, we introduce the model that we study in this paper, define its Weierstrass model, its  crepant resolutions, compute the Euler characteristic of the crepant resolutions and the triple intersection of the fibral divisors. 
In the case of a Calabi-Yau threefold, we also compute the Hodge numbers. We compute the adjacency graph of the hyperplane arrangement associated with an \sug-model, and finally match the structure of the hyperplane arrangement with the flopping curves of the crepant resolutions. 
In Section \ref{sec:Physics}, we study the consequences of our geometric results for the physics of F-theory and M-theory compactified on an \sug-model. 
We discuss the subtleties of counting the number of hypermultiplets in presence of singularities in  Section \ref{sec:odd}.

\subsection{The simplest collisions of singularities}

The  \sug-model  is an important model not only in F-theory and M-theory but also in birational geometry. Mathematically, it will  appear naturally as a key model of collision of singularities solely based on the simplicity of its fiber structure. 
It is  natural to organize elliptic fibrations describing collision of singularities by the rank of the associated Lie algebra derived from F-theory. Geometrically, the rank of $G$ counts the number of fibral divisors produced by a crepant resolution and relates  to the relative Picard number of the  elliptic fibration via the Shioda--Tate--Wazir theorem \cite[Corollary 4.1]{Wazir}.

If we organize the collisions of singularities by the rank of the associated Lie algebra,  the simplest collisions will correspond to the collision of two singular fibers with dual graphs $\widetilde{\text{A}}_1$ and the associated gauge group is either
$$\text{Spin($4$)=SU($2$)$\times$SU($2$) \quad or  \quad SO($4$)=SU($2$)$\times$SU($2$)$/(\mathbb{Z}/2\mathbb{Z})$},$$
when the Mordell--Weil group is trivial or  $\mathbb{Z}/2\mathbb{Z}$, respectively. These models are studied in detail in  \cite{SO4}. 
 
The next simplest  case has an associated Lie group of the type SU($2$)$\times$ $G$, where $G$ is a simple Lie group of rank two. 
There are three possibilities when $G$ is a compact simple and simply connected Lie group of rank  two: 
$$
\text{SU($2$)$\times$ SU($3$), \quad SU($2$)$\times$ Sp($4$), \quad \text{or}\quad SU($2$)$\times$ G$_2$.}
$$
The \susu-model is interesting for its connection to the non-Abelian sector of the Standard Model and is studied in \cite{SU2SU3}. 
The others  are  QCD-like theories obtained by replacing  SU($3$) by another simple and simply connected group of rank two. 
The  SU($2$)$\times$ Sp($4$)-model is studied in \cite{EKY2}. 
The group G$_2$ is the smallest  simply connected Lie group with a trivial center and all its representation are real.  
 The G$_2$-model is analyzed in detail in \cite{G2}.

 In Miranda's regularization, the collisions of singularities are organized by their values of the $j$-invariant as only Kodaira fibers sharing the same $j$-invariant are considered in Miranda's model \cite{Miranda.smooth}. 
The $\text{III}+\text{I}^{*\text{ns}}_0$ is a collision of fibers for which the   $j$-invariant is $1728$. The fiber over the generic point of the collision is a  non-Kodaira fiber composed of a chain of five rational curves intersecting transversally with multiplicities 1-2-3-2-1. This fiber is a contraction of a Kodaira fiber of type III$^*$ whose  dual graph is the  affine Dynkin diagram of type  $\widetilde{\text{E}}_7$.

\begin{table}[htb]
\begin{center}
\begin{tabular}{|c|c| }
\hline
Rank & $G$-Model \\
\hline 
2 & Spin($4$)  \\
2 &   SO($4$)\\
3 &  SU($2$)$\times$ SU($3$) \\
3 &   SU($2$)$\times$ Sp($4$) \\
3 &  SU($2$)$\times$ G$_2$\\
\hline 
\end{tabular}
\end{center}
\caption{Simplest collisions of singularities. We assume that $G$ is a semi-simple non-simply connected group of rank  $2$ or $3$. The Spin($4$) and SO($4$)-models are studied in \cite{SO4}. The
$G=$SU($2$)$\times$ SU($3$) is studied in \cite{SU2SU3}, the  $G=$SU($2$)$\times$ Sp($4$) is studied in \cite{EKY2}, and the $G=$SU($2$)$\times$ G$_2$ is the subject of the present paper. }
 \end{table}

\subsection{Canonical problems in  F/M-theory}\label{sec:FMProblem}
 A standard set of questions in  F-theory and M-theory compactifications on  an elliptically fibered Calabi-Yau threefold $Y$ are the following \cite{Esole:2017csj,ES, EY, Anderson:2017zfm,Bonetti:2011mw }:
  \begin{enumerate}[(i)]
  \item{\bf Birational geometry.}
  Does the singular Weierstrass model have a crepant resolution? How many crepant resolutions does it have? How are they flop-connected with each other? What is the graph of the flops?
  \item {\bf Fiber geometry.} 
What is the fiber structure of  each crepant resolution? What are the vertical curves composing the singular fibers? What are  their weights? What representation do they carry?

\item {\bf Coulomb branch and charged hypermultiplets.} 
What is the structure of the Coulomb branch of the five-dimensional  $\mathcal{N}=1$ theory with Lie group $\mathfrak{g}$ and representation $\mathbf{R}$ geometrically engineered by  an elliptic fibration $Y$? 
 How many hypermultiplets transform under each of the irreducible components  of $\mathbf{R}$? Can we completely fix the number of charged multiplets by comparing the triple intersection numbers and the prepotentials?
\item {\bf Anomaly cancellations and uplift.} Is the five-dimensional theory always compatible with an uplift to an anomaly free six-dimensional theory? 
 What are the conditions to ensure  cancellations of  anomalies of a six-dimensional  $\mathcal{N}=(1,0)$ supergravity obtained by compactification of F-theory on $Y$? 
 Can we fix the number of multiplets by the six-dimensional anomaly cancellation conditions?

\item {\bf Topological invariants.} 
What is the Euler characteristic of a crepant resolution? What are the Hodge numbers of $Y$? What are the triple intersection numbers in each Coulomb branch?  
\end{enumerate}
These questions are closely related to each other and 
 have been addressed recently for many geometries, such as the  G$_2$, Spin($7$), and Spin($8$)-models\cite{G2}, F$_4$-models \cite{F4}, SU($n$)-models \cite{ES,ESY1,ESY2,EY,Grimm:2011fx},  and also for (non-simply connected) semi-simple groups such as the SO($4$) and Spin($4$)-models \cite{SO4},  
 SU($2$)$\times$SU($4$),     (SU($2$)$\times$SU($4$))/$\mathbb{Z}_2$,    SU($2$)$\times$Sp($4$), and (SU($2$)$\times$Sp($4$))/$\mathbb{Z}_2$-models \cite{EKY2}. 
  
  We answer these canonical questions for the  \sug-model.
     We first do not restrict ourselves to  Calabi-Yau threefolds, but discuss the crepant resolutions and the Euler characteristic without fixing the base of the fibration in the spirit of \cite{AE1,AE2,EKY1,ESY1,ESY2}. 
The computation of topological invariants such as the triple intersection numbers and the Euler characteristic are streamlined by recent pushforward theorems \cite{Euler}. 

For recent works in F-theory and birational geometry in physics, see  for example \cite{
DelZotto:2018tcj,Lee:2018ihr, Collinucci:2018aho,Bies:2018uzw,Apruzzi:2018oge,
Anderson:2017rpr,
Klevers:2017aku
,Baume:2017hxm
,Jefferson:2018irk
},  see also the recent review on F-theory  \cite{Weigand:2018rez} and reference within.  We refer to \cite{Heckman:2018jxk} for a review of six-dimensional theories. 
\subsection{Defining the \sug-model}

Given a morphism $X\to B$ and an irreducible divisor $S$ of $B$, the generic fiber over $S$ is by definition the fiber over its generic point $\eta$. 
Such a fiber $X_\eta$ is  a scheme over the residue field $\kappa$ of $\eta$.  
The residue field $\kappa$ is not necessarily geometrically closed. Some components of  $X_\eta$ can be irreducible as a $\kappa$-scheme but will decompose further  after a field extension $\kappa\to \kappa'$. 
 We denote the cyclic group (resp. the group of permutations) of $n$ distinct elements by $\mathbb{Z}_n$ (resp.  $S_n$). 
In the case of a flat elliptic fibration, the Galois group of the minimal field extension that allows all the irreducible components of $X_\eta$ to be geometrically irreducible  is $\mathbb{Z}_2$ for all non-split Kodaira fibers with the exception of I$_0^*$ for which the Galois group could also be $S_3$ or $\mathbb{Z}_3$ \cite{G2}.

Kodaira fibers classify geometric generic fibers of an elliptic fibration (the fiber defined over the algebraic closure of the residue field).  
The generic fiber (defined over the residue field $\kappa$) is classified by the Kodaira type of the corresponding geometric generic fiber together with the Galois group of the minimal field extension necessary to make all irreducible components of the fiber geometrically irreducible. 
The Galois group is always $\mathbb{Z}_2$ unless in the case of the fiber I$_0^{*}$ where it can also  be $\mathbb{Z}_3$ or $S_3$. 
Thus,   there are two distinct fibers with dual graph $\widetilde{\text{G}}^\text{t}_2$  as the  Galois group of an irreducible cubic can be the symmetric group $S_3$ or the cyclic group $\mathbb{Z}_3$ \cite{G2}. 
When we specify the type of Galois group, we write  I$_0^{*\text{ns}}$ as  I$_0^{* S_3}$
 or  I$_0^{*\mathbb{Z}_3}$.  An  I$_0^{*\mathbb{Z}_3}$-model is very different from an  
  I$_0^{*\mathbb{S}_3}$-model already at the level of the fiber geometry as discussed in \cite{G2}  and reflected in the Tables of section \S \ref{sec:FibEn} of the present paper.  
   Elliptically fibered threefolds  corresponding to  an I$_0^{*\mathbb{Z}_3}$-model  or an I$_0^{S_3}$-model give the same gauge group, matter representations and number of charged hypermultiplets. The I$_0^{*\mathbb{S}_3}$-model is the more general one and behaves better with respect to resolutions of singularities. We refer to \cite{G2} for more information.  
  In the rest of the paper, when we write $\text{I}^{* \text{ns}}_0$ without further explanation, we always mean the generic $\text{I}^{*S_3}_0$.

  There are five different Kodaira fibers with dual graph $\widetilde{\text{A}}_1$ and thus producing an $\text{SU($2$)}$, namely I$_2^{\text{ns}}$,  I$_2^{\text{s}}$, III, IV$^{\text{ns}}$, and I$_3^{\text{ns}}$.   The \sug\ could be realized by  any of the following ten models

$$\text{
 $\text{I}^{\text{ns}}_2+\text{I}^{*\text{ns}}_0$,  $\text{I}^{\text{s}}_2+\text{I}^{*\text{ns}}_0$,    $\text{III}+\text{I}^{*\text{ns}}_0$, $\text{I}_3^{\text{ns}}+\text{I}^{*\text{ns}}_0$,  or $\text{IV}^{\text{ns}}+\text{I}^{*\text{ns}}_0$
},
$$
where $\text{I}^{*\text{ns}}_0$ could be either $\text{I}^{* \mathbb{Z}_3}_0$ or $\text{I}^{*S_3}_0$.   For example, the non-Higgsable models of type \sug\ studied in the literature are typically of the type  $\text{III}+\text{I}^{* \mathbb{S}_3}_0$.  

A Weierstrass model for the collision $\text{III}+\text{I}^{*S_3}_0$ is  \cite{Miranda.smooth}
\begin{equation}\label{Eq:W}
\text{III}+\text{I}^{*S_3}_0 :\quad y^2 z= x^3 + f s t^2 x z^2 + g s^2 t^3 z^3.
\end{equation}
The discriminant locus is composed of three irreducible components $S$, $T$, and $\Delta'$: 
\begin{equation}
\Delta=s^3 t^6 (4 f^3 + 27 g^2s ),
\end{equation}
where $S=V(s)$ and $T=V(t)$ are two smooth Cartier divisors supporting respectively the fiber of type III and of type I$_0^{* \text{ns}}$. We assume that $S$ and $T$ intersect transversally. The fiber over the generic point of the leftover discriminant $\Delta'=4 f^3 + 27 g^2s$ is a nodal curve (Kodaira type I$_1$). Following Tate's algorithm, the type of the decorated Kodaira fibers depends on the Galois group of the associated associated cubic polynomial 
\begin{equation}
P(q)= q^3 +f s q +  g s^2.
\end{equation} 
Assuming that $P(q)$ is irreducible, the Galois group is    $\mathbb{Z}_3$ if the discriminant of $P(q)$,  $\Delta(P)= s^3(4 f^3  +27 g^2 s)$,  is a perfect square in the residue field of the generic point of $T$ \cite{G2}. 
A simple way to have a $\mathbb{Z}_3$ Galois group is to increase the valuation of $f$ along $T$ \cite{G2}:
\begin{equation}\label{Eq:W2}
\text{III}+\text{I}^{*\mathbb{Z}_3}_0 :\quad y^2 z= x^3 + f s t^3 x z^2 + g s^2 t^3 z^3.
\end{equation}
 In this case, the $j$-invariant will be zero over the generic point of $T$ in contrast to the case of equation \eqref{Eq:W} where the $j$-invariant is $1728$ on both $S$ and $T$.  
The fact that the Galois group of $P(q)$ is $\mathbb{Z}_3$ is clear as it now takes the form
\begin{equation}
P(q)= q^3 +  g s^2.
\end{equation}

\subsection{Representations, Coulomb branches, hyperplane arrangements, and flops}
In F-theory, we associate to an elliptic fibration a group $G$ with Lie algebra $\mathfrak{g}$, and a 
  representation $\bf{R}$ of $\mathfrak{g}$.
 In the case of an \sug-model, the  representation $\bf{R}$ is  the direct sum of the following irreducible representations
  (see Section \ref{sec:SU2G2Res})
\begin{equation}
\mathbf{R}=(\bold{3},\bold{1})\oplus(\bold{1},\bold{14})\oplus(\bold{2},\bold{7})\oplus (\bold{2},\bold{1})\oplus (\bold{1},\bold{7}).
\end{equation}
The $(\bold{3},\bold{1})$ is the adjoint representation of SU($2$) and the $(\bold{1},\bold{14})$ is the adjoint representation of G$_2$. The $(\bold{2},\bold{7})$ is the bifundamental representation of \sug\  supported at the intersection $S\cap T$ of the two divisors supporting G$_2$ and SU($2$). 
The representation $(\bold{2},\bold{1})$  (resp.  $(\bold{1},\bold{7})$) is the fundamental representation of SU($2$)  (resp. G$_2$) supported at the collision of the third component of the discriminant locus  $\Delta'=(4 f^3 + 27 g^2 s)$ with the divisor $S$  (resp. $T$).

The study of the Coulomb branch of the gauge theory geometrically engineered by an elliptic fibration is the study of the minimal models over the Weierstrass model and how they  flop to each other. 
This can also be described 
through the hyperplane arrangement I$(\mathfrak{g},\mathbf{R})$  with hyperplanes 
that are kernels of the weights of $\mathbf{R}$ restricted inside the  dual fundamental Weyl chamber of $\mathfrak{g}$  \cite{ESY1,ESY2,EJJN1,EJJN2}. 
    In the case of the \sug-model, it is interesting to notice that  the generic SU($2$)-model and G$_2$-model do not have any flops \cite{ESY1,G2}. However, the \sug-model has the   bifundamental representation $(\bold{2},\bold{7})$, which contains several weights whose kernels are hyperplanes intersecting the open dual Weyl chamber of $\text{A}_1\oplus\mathfrak{g}_2$ and giving four chambers, whose incidence graph is a chain illustrated  in Figure  \ref{fig:chambers}. 
In this way, we see that the hyperplane arrangement I$(\mathfrak{g},\mathbf{R})$ does not care about the fundamental and adjoint representations, namely $(\bold{2},\bold{1})$, $(\bold{1},\bold{7})$, $(\bold{3},\bold{1})$, and $(\bold{1},\bold{14})$, since only the weights $\varpi$ of the bifundamental representation $(\bold{2},\bold{7})$ define the hyperplanes $\varpi^\bot$ intersecting the interior of the dual Weyl chamber of $\mathfrak{g}=\text{A}_1\oplus\mathfrak{g}_2$; hence, it is enough to study only I($\text{A}_1\oplus\mathfrak{g}_2,(\bold{2},\bold{7}))$. 
\begin{figure}[htb]
\begin{center}
{\begin{tikzpicture}[scale=0.25]
\coordinate (A1) at (90:10cm) {};
\coordinate (A2) at (210:10cm) {};
\coordinate (A3) at (330:10cm) {};
\coordinate (A4) at (0,-5cm) {};
\coordinate (A5) at (-5.768cm,0) {};
\coordinate (A6) at (5.772cm,0) {};
\coordinate (A7) at (-2.89cm, 5cm) {};
\coordinate (A8) at (2.89cm, 5cm) {};
\draw[thick] (A1)--(A2)--(A3)--(A1);
\draw[thick] (A1)--(A4);
\draw[thick] (A5)--(A4);
\draw[thick] (A6)--(A4);
\node at (-3cm, 2.1cm) {$2$};
\node at (3cm, 2.1cm) {$3$};
\node at (-5cm, -2.8cm) {$1$};
\node at (5.2cm, -2.8cm) {$4$};
\end{tikzpicture}}
 \quad\quad\quad
\scalebox{1}{
\begin{tikzpicture}
\node[draw,circle,thick,scale=1.5] (A1) at (0,0) {$1$};
 \node[draw,circle,thick,scale=1.5] (A2)at (2.5,0) {$2$};
 \node[draw,circle,thick,scale=1.5] (A3) at (5,0) {$3$};
 \node[draw,circle,thick,scale=1.5] (A4) at (7.5,0) {$4$};
 \draw[thick] (A1)--node[above] {$\varpi^{(\bf{2},\bf{7})}_{5}$}(A2)--node[above] {$\varpi^{(\bf{2},\bf{7})}_{6}$}(A3)--node[above] {$\varpi^{(\bf{2},\bf{7})}_{7}$}(A4);
\end{tikzpicture}}

\caption{The chamber structure of I$(\text{A}_1\oplus\mathfrak{g}_2, (\mathbf{2},\mathbf{7}))$ and its  adjacency graph. This also represents the structure of the extended K\"ahler cone of an \sug-model.  
Replacing $(\mathbf{2},\mathbf{7})$ with $\mathbf{R}=(\bold{3},\bold{1})\oplus(\bold{1},\bold{14})\oplus(\bold{2},\bold{7})\oplus (\bold{2},\bold{1})\oplus (\bold{1},\bold{7})$ does not change the 
adjacency graph since the adjoint and fundamental representations do not intersect the interior of the Weyl chamber of $\text{A}_1\oplus\mathfrak{g}_2$. 
The interior walls are given by the weights 
$ \varpi^{(\bf{2},\bf{7})}_5=(1;-2,1)$, $\varpi^{(\bf{2},\bf{7})}_6=(1;1,-1)$, and $\varpi^{(\bf{2},\bf{7})}_7=(1;-1,0)$. 
}
\label{fig:chambers} 
\end{center}
\end{figure}
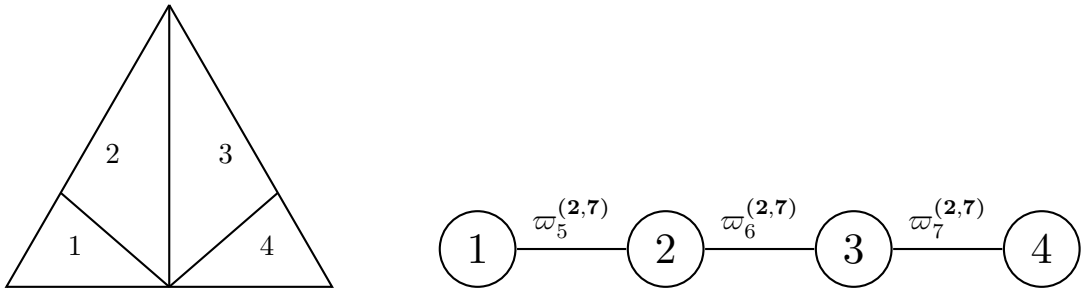

 \subsection{Non-Kodaira fibers}


\begin{figure}[bth]
\begin{center}
\begin{tabular}{cccc}
\raisebox{.8cm}{	\scalebox{.95}{$\begin{array}{c}
 \begin{tikzpicture}
				\node[draw,circle,thick,scale=1.25, fill=black] (1) at (180:1){};
				\node[draw,circle,thick,scale=1.25] (2) at (-60:1){};
								\node[draw,circle,thick,scale=1.25] (3) at (60:1){};
												\draw[thick] (1) to (0,0);
				\draw[thick] (2) to (0,0);
				\draw[thick] (3) to (0,0);
			\end{tikzpicture}\end{array}
		$} }

 &		
 
 \raisebox{.8cm}{	\scalebox{1}{$\begin{array}{c}
 \begin{tikzpicture}
				\node[draw,circle,thick,scale=1.25,label=above:{1}, fill=black] (1) at (0,0){};
				\node[draw,circle,thick,scale=1.25,label=above:{2}] (2) at (1.3,0){};
				\node[draw,circle,thick,scale=1.25] (3) at ($(2)+(30:1.3)$){};
								\node[draw,circle,thick,scale=1.25] (4) at ($(2)+(0:1.2)$){};
												\node[draw,circle,thick,scale=1.25] (5) at ($(2)+(-30:1.3)$){};
				\draw[thick] (1) to (2);
				\draw[thick] (2) to (3);
				\draw[thick] (2) to (4);
				\draw[thick] (2) to (5);
				\draw[red,dashed] (2.45,0) ellipse (.5cm and 1.3cm);
			\end{tikzpicture}\end{array}
		$} }

 &		\raisebox{1cm}{	\scalebox{1.1}{$\begin{array}{c}\begin{tikzpicture}
				\node[draw,circle,thick,scale=1.25,label=above:{1}, fill=black] (1) at (0,0){};
				\node[draw,circle,thick,scale=1.25,label=above:{2}] (2) at (1.3,0){};
				\node[draw,circle,thick,scale=1.25,label=above:{1}] (3) at (2.6,0){};
				\draw[thick] (1) to (2);
				\draw[thick] (1.5,0.09) --++ (.9,0);
				\draw[thick] (1.5,-0.09) --++ (.9,0);
				\draw[thick] (1.5,0) --++ (.9,0);
				\draw[thick]
					(1.9,0) --++ (60:.25)
					(1.9,0) --++ (-60:.25);
			\end{tikzpicture}\end{array}
		$} }		& 

		\raisebox{1cm}{
		\scalebox{1.1}{$\begin{array}{c}\begin{tikzpicture}
				\node[draw,circle,thick,scale=1.25,label=above:{1}, fill=black] (1) at (0,0){};
				\node[draw,circle,thick,scale=1.25,label=above:{2}] (2) at (1.3,0){};
				\node[draw,circle,thick,scale=1.25,label=above:{3}] (3) at (2.6,0){};
				\draw[thick] (1) to (2);
				\draw[thick] (1.5,0.09) --++ (.9,0);
				\draw[thick] (1.5,-0.09) --++ (.9,0);
				\draw[thick] (1.5,0) --++ (.9,0);
				\draw[thick]
					(2,0) --++ (120:.25)
					(2,0) --++ (-120:.25);
			\end{tikzpicture}\end{array}
		$} }\\
		IV&\quad dual graph of I$_0^{*\text{ns}}$ &\quad $\widetilde{\text{G}}^\text{t}_2$ \quad & $\widetilde{\text{G}}_2$ 
\end{tabular}
 \end{center}
		\caption{Conventions for dual graphs. {The black node represents the extra node of the affine Dynkin diagram.  The 
		affine Dynkin diagrams $\widetilde{\text{G}}_2$ and 
		$\widetilde{\text{G}}_2^\text{t}$ are Langlands dual of each other. 
		But only $\widetilde{\text{G}}_2^\text{t}$ is the  dual graph  of a singular fiber over the generic point of a component of the discriminant locusof an  elliptic fibrations. Specifically, $\widetilde{\text{G}}_2^\text{t}$ is the  dual graph of the fiber I$_0^{* \text{ns}}$.
	  } \label{fig:conv}}										
					
\end{figure}
Three rational curves that are  transverse  to each other and meet at the same point (such as the Kodaira fiber of type IV) are represented by three nodes connected to the same point. 
We write  $\widetilde{\text{G}}_2$ and $\widetilde{\text{G}}^\text{t}_2$ for the affine Dynkin diagram of type $\widetilde{\text{G}}_2$ and its Langlands dual, respectively. In Kac's notation \cite{Kac}, they are denoted respectively $\text{G}^{(1)}_2$ and
 ${\text{D}}^{(3)}_4$.  The dual graph of the fiber I$_0^{* \text{ns}}$ is of type $\widetilde{\text{G}}^\text{t}_2$ and not $\widetilde{\text{G}}_2$ as often stated in the F-theory literature but clear from Figure \ref{fig:conv}.

\begin{figure}[H]
\begin{center}
\scalebox{0.9}{

\begin{tabular}{cccc}
& Incomplete $\widetilde{\text{E}}_7$  & $\longrightarrow$ & Incomplete $\widetilde{\text{E}}_8$\\
Res. I&\begin{tabular}{c} \includegraphics[scale=.6]{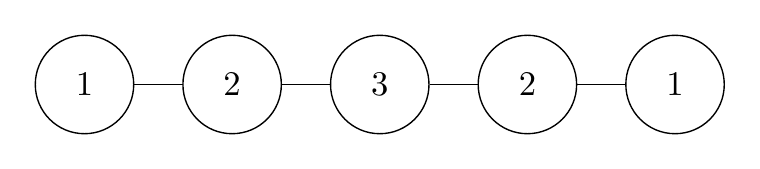} \end{tabular} 
 &  $\longrightarrow$  & 
  \begin{tabular}{c} \includegraphics[scale=.65]{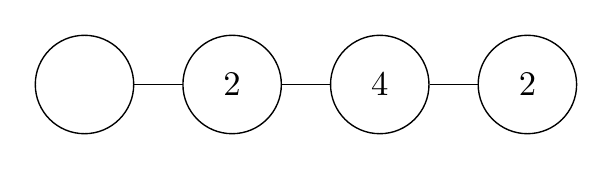} \end{tabular}   \\
Res. II &{\begin{tabular}{c} \includegraphics[scale=.6]{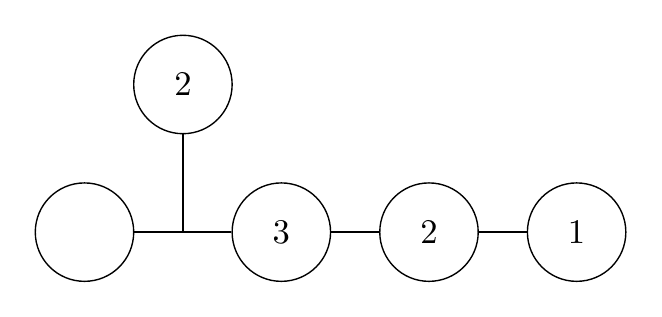} \end{tabular}}  & $\longrightarrow$& 
   \begin{tabular}{c} \includegraphics[scale=.6]{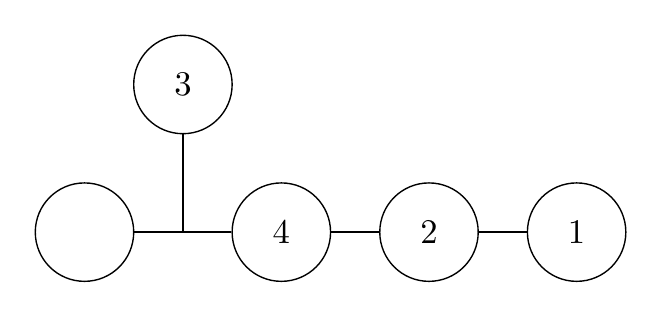} \end{tabular} \\
Res. III &\begin{tabular}{c} \includegraphics[scale=.6]{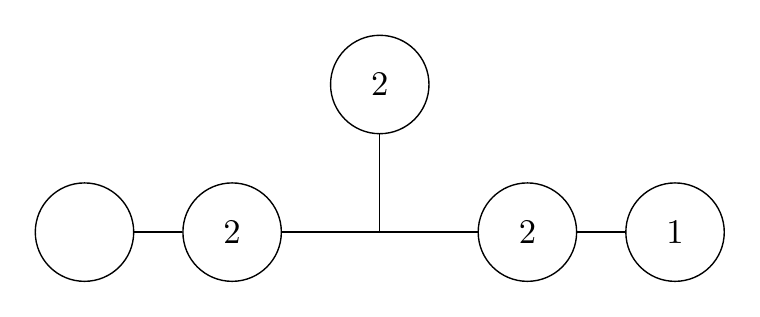} \end{tabular}  &$\longrightarrow$ & 
 \begin{tabular}{c} \includegraphics[scale=.6]{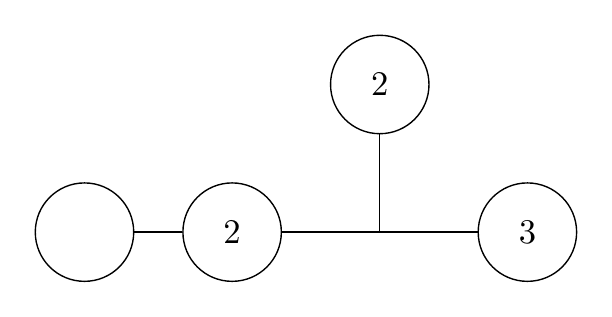} \end{tabular}\\
Res. IV  &  \begin{tabular}{c} \includegraphics[scale=.6]{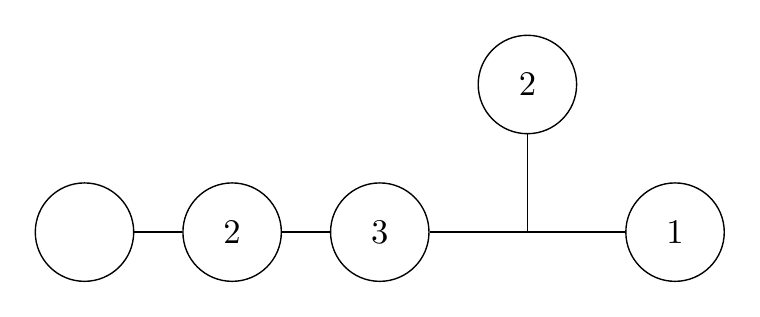} \end{tabular} &$\longrightarrow$ & 
  \begin{tabular}{c}\\  \\  \includegraphics[scale=.7]{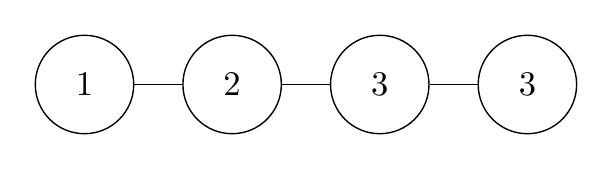} \end{tabular}  
  \end{tabular}}
\end{center}
\caption{Non-Kodaira Fibers for the \sug-model. On the left, we have the generic fiber over the collision  $S\cap T$. On the right, we have its specialization when $f=0$. This specialization is seen only when the base is at least a threefold or if we change the valuation of $t$ along $a_4$.  }
\label{fig:NKlist}
\end{figure}
 
 There are many examples of non-Kodaira fibers in the literature \cite{Miranda.smooth, Szydlo.Thesis,EY,EFY,Cattaneo,Hayashi:2014kca}.
  Over the generic point of $s=f=g=0$, the III-model has a non-Kodaira fiber  that are contractions of a fiber of type I$_0^*$. 
 The G$_2$-model has non-Kodaira fibers that are contractions of an I$_1^*$ and a IV$^*$ fiber \cite{G2}. 
At the collision III+I$_0^{*\text{ns}}$, we get a  non-Kodaira fiber that is an incomplete III$^*$ 
  and specializes further to an incomplete 
 II$^*$. 
 The fibers found by Miranda  at the collision III+I$_0^{*\text{ns}}$  match the ones we find in the resolution I for the generic fiber over $S\cap T$. 
Miranda has already noticed in \cite{Miranda.smooth} that the non-Kodaira fibers of Miranda's model were always contractions of Kodaira fibers. 
The same is true for Miranda's models of arbitrary dimension \cite{Szydlo.Thesis} and for flat elliptic threefolds \cite{Cattaneo}.

\subsection{Compactifications of F-theory and M-theory on an \sug-model}

The crepant resolutions of the Weierstrass model of an \sug-model are listed in equation \eqref{Eq:8Resolutions}. The Euler characteristic of an \sug-model obtained by one of these crepant resolutions is derived in Theorem  \ref{Thm:Eulerchar}.
As explained in the previous subsections, the hyperplane arrangements I($\mathfrak{g},\mathbf{R}$) has four  chambers whose adjacency graph is represented in Figure \ref{fig:chambers}. 
Each chamber corresponds to a specific crepant resolution that we determine explicitly. 
 While three of the crepant resolutions (namely, Resolutions I, III, and IV) are obtained by blowing up smooth centers, one (resolution II) is defined by a blowup with a non-smooth center. 
For each chamber, we match an  explicit crepant resolution of the Weierstrass model, so that the graph of flops matches the adjacency graph of the hyperplane arrangement.

We analyze the physics of compactifications of M-theory and F-theory on elliptically fibered Calabi-Yau threefolds corresponding to  \sug-models. 
These give five- and six-dimensional gauged supergravity  theories with eight supercharges with the gauge group \sug . We determine the matter content of these compactifications and study anomaly cancellations of the six-dimensional theory and their Chern-Simons terms. 

In the five-dimensional theory \cite{Cadavid:1995bk,IMS}, the structure of the Coulomb chambers is isomorphic to the adjacency graph of the hyperplane arrangement I($\mathfrak{g},\mathbf{R}$). 
We compute the one-loop prepotential in each  Coulomb chamber as a function of the number of hypermultiplets in each irreducible representations that add up to $\mathbf{R}$. 
  The Chern-Simons couplings are computed geometrically as triple intersection numbers of the fibral divisors (see Theorem \ref{Thm:TripleInt}). We match the triple intersection polynomial with the prepotential to obtain constraints on the number of charged hypermultiplets (see equation \eqref{eq:numbers}). 
In many cases, such a method will completely fix the number of multiplets, but here, the numbers $n_{\bf{2,1}}$ and $n_{\bf{3,1}}$ are left unfixed but related by a linear relation.  However, they are completely fixed once we use Witten's genus formula, which asserts that the number of multiplets in the adjoint representation is given by the genus of curve supporting the gauge group \cite{Witten:1996qb}.  

In the six-dimensional theory obtained by compactification of F-theory on an \sug-model, we solve the anomaly equations and deduce the number of hypermultiplets. They match perfectly what we found independently in the M-theory compactification (see equation \eqref{eq:numbersReal}).

\section{Geometric Results} \label{sec:summary}
In this section, we collect key geometric results on \sug-models. 

\subsection{Geometric description} \label{SU2G2Collision}

We consider the following  defining equation for an \sug-model:
\begin{eqnarray}\label{Eq:Weierstrass}
\begin{aligned}
\text{III}+\text{I}_0^{*\text{ns}} & : & y^2z =x^{3}+f st^{2}x z^2+g s^{2}t^{3} z^3.
\end{aligned}
\end{eqnarray}
 We assume that the coefficients $f$ and $g$  are 
algebraically independent and $S=V(s)$ and $T=V(t)$ are smooth divisors intersecting transversally. 
The Kodaira fiber over the generic point of  $S$   (resp. $T$) has a respective dual graph  $\widetilde{\text{A}}_1$ (resp. $\widetilde{\text{G}}^\text{t}_2$).
\subsection{Crepant resolutions}
\label{sec:crepres}
We use the following convention. 
 Let $X$ be a nonsingular variety. 
 Let $Z\subset X$ be a complete intersection defined by the transverse intersection of $r$ hypersurfaces $Z_i=V(g_i)$, where $g_i$ is a section of the line bundle $\mathscr{I}_i$ and $(g_1, \cdots, g_r)$ is a regular sequence. 
 We denote the blowup of a nonsingular variety $X$ along the complete intersection $Z$ by 
 $$\begin{tikzpicture}
	\node(X0) at (0,-.3){$X$};
	\node(X1) at (3,-.3){$\widetilde{X}.$};
	\draw[big arrow] (X1) -- node[above,midway]{$(g_1,\cdots ,g_{r}|e_1)$} (X0);	
	\end{tikzpicture}
	$$
The exceptional divisor is $E_1=V(e_1)$.	
 We abuse notation and use the same symbols for $x$, $y$, $s$, $e_i$ and their successive proper transforms. We also do not write the obvious pullbacks.

Each of the  following four sequences of blowups is a different crepant resolution of the   \sug-model  given by the Weierstrass model   of equation \eqref{Eq:Weierstrass}.

\begin{equation} \label{Eq:8Resolutions}
\begin{minipage}[c]{0.8\linewidth}
\begin{description}
\item[Resolution I:] 
\begin{tikzcd}[column sep=huge] X_0 \arrow[leftarrow]{r} {\displaystyle (x,y, s|e_1)} &  X_1 \arrow[leftarrow]{r} {\displaystyle (x,y,t| w_1)} &  X_2 \arrow[leftarrow]{r} {\displaystyle (y, w_1| w_2)} &  X_3  \end{tikzcd},
\item[Resolution II:]  
\begin{tikzcd}[column sep=huge] X_0 \arrow[leftarrow]{r} {\displaystyle (x,y, p_0|p_1)} &  X_1 \arrow[leftarrow]{r} {\displaystyle (y,p_1,t| w_1)} &  X_2 \arrow[leftarrow]{r} {\displaystyle (p_0,t| w_2)} &  X_3  \end{tikzcd}, 
\item[Resolution III:]
\begin{tikzcd}[column sep=huge] X_0 \arrow[leftarrow]{r} {\displaystyle (x,y,t|w_1)} &  X_1 \arrow[leftarrow]{r} {\displaystyle (x,y,s|e_1)} &  X_2 \arrow[leftarrow]{r} {\displaystyle (y, w_1| w_2)} &  X_3  \end{tikzcd}, 
\item[Resolution IV:] 
 \begin{tikzcd}[column sep=huge] X_0 \arrow[leftarrow]{r} {\displaystyle (x,y,t|w_1)} &  X_1 \arrow[leftarrow]{r} {\displaystyle (y,w_1| w_2)} &  X_2 \arrow[leftarrow]{r} {\displaystyle (x,y,s|e_1)} &  X_3  \end{tikzcd}.
\end{description}
\end{minipage}
\end{equation}
 These are embedded resolutions and  $X_0=\mathbb{P}(\mathscr{O}_B\oplus\mathscr{L}^{\otimes 2}\oplus\mathscr{L}^{\otimes 3})$.

\subsection{Intersection theory}
All of our intersection theory computations come down to  the following three theorems. 
The first is a theorem of Aluffi which gives the Chern class after a blowup along a local complete intersection. 
The second theorem is a pushforward theorem that provides a user-friendly method to compute invariants of the blowup space in terms of the original space. 
The last theorem is a direct consequence of functorial properties of the Segre class, and gives a simple method to pushforward analytic expressions in the Chow ring of a projective bundle to  the Chow ring of its base.

\begin{thm}[Aluffi, {
{\cite[Lemma 1.3]{Aluffi_CBU}}}]
\label{Thm:AluffiCBU}
Let $Z\subset X$ be the  complete intersection  of $d$ nonsingular hypersurfaces $Z_1$, \ldots, $Z_d$ meeting transversally in $X$.  Let  $f: \widetilde{X}\longrightarrow X$ be the blowup of $X$ centered at $Z$. We denote the exceptional divisor of $f$  by $E$. The total Chern class of $\widetilde{X}$ is then:
\begin{equation}
c( T{\widetilde{X}})=(1+E) \left(\prod_{i=1}^d  \frac{1+f^* Z_i-E}{1+ f^* Z_i}\right)  f^* c(TX).
\end{equation}
\end{thm}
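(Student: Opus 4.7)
The plan is to combine two ingredients: the general Chern-class formula for a blowup along a smooth subvariety, expressed through the normal bundle and the exceptional divisor, with the observation that in the transverse complete-intersection setting the normal bundle $N_{Z/X}$ splits canonically as a direct sum of restricted line bundles. Concretely, I would verify the asserted identity after restriction to $E$, where both sides reduce to computations on a split projective bundle, and then extend it to all of $\widetilde{X}$ using the structure of $A^*(\widetilde{X})$ as an $A^*(X)$-module.

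For the computation on $E$, set $\mathcal{L}_i = \mathcal{O}_X(Z_i)$ and $Z = Z_1 \cap \cdots \cap Z_d$. Transversality of the $Z_i$ in $X$ forces the canonical splitting
$$N_{Z/X} \;\cong\; \bigoplus_{i=1}^d \mathcal{L}_i|_Z,$$
so that $c(N_{Z/X}) = \prod_i (1 + Z_i|_Z)$. Writing $g = f|_E : E \to Z$ for the projective-bundle projection and chaining the three standard short exact sequences
$$0 \to T_g \to TE \to g^* TZ \to 0, \quad 0 \to TE \to T\widetilde{X}|_E \to N_{E/\widetilde{X}} \to 0, \quad 0 \to TZ \to TX|_Z \to N_{Z/X} \to 0,$$
together with the identification $N_{E/\widetilde{X}} = \mathcal{O}_E(-E)|_E$ and the split relative Euler sequence $c(T_g) = \prod_{i=1}^d (1 + f^*Z_i - E)|_E$, a direct multiplication and cancellation yields
$$c(T\widetilde{X})|_E \;=\; (1+E)\cdot \prod_{i=1}^d \frac{1 + f^*Z_i - E}{1+ f^*Z_i}\cdot f^*c(TX)\,|_E.$$

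To promote this identity from $E$ to all of $\widetilde{X}$, I would observe that on the open set $\widetilde{X}\setminus E$ the map $f$ is an isomorphism and $E$ restricts to zero, so both sides trivially agree with $f^*c(TX)$ there. The difference of the two sides is therefore a class in $A^*(\widetilde{X})$ supported on $E$, and by the projective-bundle decomposition
$$A^*(\widetilde{X}) \;\cong\; f^* A^*(X) \;\oplus\; \bigoplus_{k \geq 0} \iota_*\bigl(\xi^k \cdot g^* A^*(Z)\bigr)$$
associated with $E = \mathbb{P}(N_{Z/X}^\vee)$, such a class is detected by its restriction to $E$, which vanishes by the previous step. The main obstacle is the bookkeeping in the $E$-calculation: one must carefully track the identifications $\xi = -E|_E$, the Chern roots of $g^* N_{Z/X}$ as $f^* Z_i|_E$, and the formal inverses $(1 + f^*Z_i)^{-1}$, which only make sense after expansion as polynomials truncated by the dimension of $\widetilde{X}$. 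Aluffi's own argument in \cite{Aluffi_CBU} bypasses these subtleties by working with Segre classes and pushing both sides forward to $X$ via the projection formula, where the required identity becomes visible directly and can then be lifted back to $\widetilde{X}$ through the decomposition above.
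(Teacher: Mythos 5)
The paper does not actually prove this statement: it is quoted as Lemma 1.3 of Aluffi's \emph{Chern classes of blowups} and used as a black box, so there is no internal argument to compare yours against. Judged on its own terms, your restriction-to-$E$ computation is correct: the splitting $N_{Z/X}\cong\bigoplus_i\mathcal{L}_i|_Z$, the relative Euler sequence giving $c(T_g)=\prod_i(1+f^*Z_i-E)|_E$, and the identification $N_{E/\widetilde{X}}=\mathcal{O}_{\widetilde{X}}(E)|_E$ do combine to yield the restriction of the asserted formula to $E$, and off $E$ both sides are trivially $f^*c(TX)$.

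The gap is in the globalization. A class $D\in A_*(\widetilde{X})$ that vanishes on $\widetilde{X}\setminus E$ and whose restriction to $E$ vanishes need not be zero. Excision only gives $D=\iota_*\beta$ for some $\beta\in A_*(E)$, and then $\iota^*D=E|_E\cdot\beta=-\xi\cdot\beta$; multiplication by $\xi$ on $A_*(E)=\bigoplus_{k=0}^{d-1}\xi^k\, g^*A_*(Z)$ has a large kernel, and $\iota_*$ is not injective on that kernel. For instance, blowing up $\mathbb{P}^3$ along a line ($d=2$), the class of a point of $E$ is supported on $E$ and restricts to zero on $E$ for dimension reasons, yet is nonzero in $A_0(\widetilde{X})\cong\mathbb{Z}$. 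This is not a pedantic objection: the top-degree component of the identity computes $\int c_{\mathrm{top}}(T\widetilde{X})$, i.e.\ the change in Euler characteristic under blowup, and no argument that only looks at $\iota^*D$ and $D|_{\widetilde{X}\setminus E}$ can see top-degree classes at all. Your appeal to the decomposition of $A_*(\widetilde{X})$ does not repair this, because $D$ is not a priori known to lie in the summand complementary to $f^*A_*(X)$ (note $f_*\iota_*(\xi^{d-1}g^*\alpha)\neq0$ in general), and pushing forward to $X$, as you suggest at the end, kills exactly the summands $\iota_*(\xi^k g^*A_*(Z))$ with $k\le d-2$. To close the argument you need a genuinely global input: either Fulton's blowup formula (\emph{Intersection Theory}, Theorem 15.4), which expresses $c(T\widetilde{X})-f^*c(TX)$ as an explicit pushforward from $E$ that you can match against the $E$-divisible part of the right-hand side via the projection formula $E\cdot\gamma=\iota_*\iota^*\gamma$; or, more in the spirit of the complete-intersection hypothesis, realize $\widetilde{X}$ as the zero scheme of a regular section of the universal quotient bundle on $\mathbb{P}_X(\mathcal{L}_1\oplus\cdots\oplus\mathcal{L}_d)$, so that $c(T\widetilde{X})$ is computed globally from the Euler and normal-bundle sequences and the multiplicative formula drops out with no patching.
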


\begin{thm}[Esole--Jefferson--Kang,  see  {\cite{Euler}}] \label{Thm:Push}
    Let the nonsingular variety $Z\subset X$ be a complete intersection of $d$ nonsingular hypersurfaces $Z_1$, \ldots, $Z_d$ meeting transversally in $X$. Let $E$ be the class of the exceptional divisor of the blowup $f:\widetilde{X}\longrightarrow X$ centered 
at $Z$.
 Let $\widetilde{Q}(t)=\sum_a f^* Q_a t^a$ be a formal power series with $Q_a\in A_*(X)$.
 We define the associated formal power series  ${Q}(t)=\sum_a Q_a t^a$, whose coefficients pullback to the coefficients of $\widetilde{Q}(t)$. 
 Then the pushforward $f_*\widetilde{Q}(E)$ is
 $$
  f_*  \widetilde{Q}(E) =  \sum_{\ell=1}^d {Q}(Z_\ell) M_\ell, \quad \text{where} \quad  M_\ell=\prod_{\substack{m=1\\
 m\neq \ell}}^d  \frac{Z_m}{ Z_m-Z_\ell }.
 $$ 
\end{thm}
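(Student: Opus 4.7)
My plan is to first reduce the theorem to the monomial identity $f_*(E^a) = \sum_{\ell=1}^d Z_\ell^a M_\ell$ for each $a \geq 0$. This reduction is immediate from the projection formula $f_*(f^*Q_a \cdot E^a) = Q_a \cdot f_*(E^a)$ and $A_*(X)$-linearity, since then
\[
f_*\widetilde Q(E) \;=\; \sum_a Q_a \cdot f_*(E^a) \;=\; \sum_a Q_a \sum_{\ell=1}^d Z_\ell^a M_\ell \;=\; \sum_{\ell=1}^d Q(Z_\ell)\, M_\ell.
\]

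To compute $f_*(E^a)$ I will use that transversality of $Z_1,\ldots,Z_d$ yields a splitting $N_{Z/X}\cong\bigoplus_{i=1}^d \mathscr O_X(Z_i)|_Z$, so the exceptional divisor is the projective bundle $\pi\colon E=\mathbb P(N_{Z/X})\to Z$. With $j\colon E\hookrightarrow\widetilde X$ and $\xi=c_1(\mathscr O_E(1))$, the self-intersection formula $j^*E=-\xi$ combined with the projection formula gives
\[
f_*(E^a) \;=\; (-1)^{a-1}\,\iota_*\pi_*\xi^{a-1} \qquad (a\geq 1),
\]
with $\iota\colon Z\hookrightarrow X$, and $f_*(E^0)=1$ from birationality of $f$.

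Packaging these into a generating function, the projective-bundle pushforward $\pi_*\xi^{k}=s_{k-d+1}(N_{Z/X})$ combined with the Segre generating series $\sum_j s_j(N_{Z/X})\,t^j=\prod_i(1+Z_i|_Z\, t)^{-1}$ for a direct sum of line bundles, and $\iota_*1=\prod_i Z_i$, yields
\[
\mathcal F(t) \;:=\; \sum_{a\geq 0} f_*(E^a)\,t^a \;=\; 1+(-1)^{d-1}\prod_{i=1}^d\frac{Z_i\, t}{1-Z_i\, t}.
\]
The theorem then reduces to the purely algebraic rational identity
\[
1+(-1)^{d-1}\prod_{i=1}^d\frac{Z_i\, t}{1-Z_i\, t} \;=\; \sum_{\ell=1}^d\frac{1}{1-Z_\ell\, t}\prod_{m\neq\ell}\frac{Z_m}{Z_m-Z_\ell}.
\]

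The final step is a verification by Lagrange interpolation: substituting $u=1/t$ and clearing $\prod_i(u-Z_i)$ reduces both sides to polynomials in $u$ of degree $\leq d$; I would then check that both evaluate to $(-1)^{d-1}\prod_i Z_i$ at $u=Z_\ell$ for each $\ell$, and that their leading $u^d$ coefficients are both $1$ (the right-hand side using $\sum_\ell\prod_{m\neq\ell}Z_m/(Z_m-Z_\ell)=1$, itself obtained by interpolating the constant $1$ at $Z_1,\ldots,Z_d$ and setting $u=0$). The main obstacle is bookkeeping: the sign conventions for $\xi$ versus $-\xi$ and for Segre versus Chern normalizations are easy to flip, and one must justify that the formal fractions $Z_m/(Z_m-Z_\ell)$ — which are not a priori classes in $A_*(X)$ — combine into a genuine Chow class once summed over $\ell$, which is exactly what the polynomial reduction via Lagrange interpolation makes rigorous.
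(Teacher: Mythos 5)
The paper does not actually prove Theorem \ref{Thm:Push}; it imports it from the reference \cite{Euler}, so there is no in-text argument to compare against. Your proof is correct and is essentially the standard route (and, to my knowledge, the one taken in the cited source): reduce by the projection formula to $f_*(E^a)$, use $E=\mathbb{P}(N_{Z/X})$ with $N_{Z/X}\cong\bigoplus_i\mathscr{O}(Z_i)|_Z$ to get the generating function $1+(-1)^{d-1}\prod_i Z_i t/(1-Z_i t)$ via Segre classes and $\iota_*1=\prod_i Z_i$, and then match it with $\sum_\ell M_\ell/(1-Z_\ell t)$ by Lagrange interpolation. All the individual steps check out (including the sign $j^*E=-\xi$, the shift $\pi_*\xi^k=s_{k-d+1}$, and the normalization $\sum_\ell M_\ell=1$), and you correctly flag the one point that needs care, namely that the identity should be established for independent formal variables $Z_i$ so that the a priori non-polynomial fractions $Z_m/(Z_m-Z_\ell)$ are seen to assemble into honest polynomial, hence Chow-ring, expressions before substituting the actual divisor classes.
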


\begin{thm}[{See  \cite{Euler} and  \cite{AE1,AE2,EKY1,Fullwood:SVW}}]\label{Thm:PushH}
Let $\mathscr{L}$ be a line bundle over a variety $B$ and $\pi: X_0=\mathbb{P}[\mathscr{O}_B\oplus\mathscr{L}^{\otimes 2} \oplus \mathscr{L}^{\otimes 3}]\longrightarrow B$ a projective bundle over $B$. 
 Let $\widetilde{Q}(t)=\sum_a \pi^* Q_a t^a$ be a formal power series in  $t$ such that $Q_a\in A_*(B)$. Define the auxiliary power series $Q(t)=\sum_a Q_a t^a$. 
Then 
$$
\pi_* \widetilde{Q}(H)=-2\left. \frac{{Q}(H)}{H^2}\right|_{H=-2L}+3\left. \frac{{Q}(H)}{H^2}\right|_{H=-3L}  +\frac{Q(0)}{6 L^2},
$$
 where  $L=c_1(\mathscr{L})$ and $H=c_1(\mathscr{O}_{X_0}(1))$ is the first Chern class of the dual of the tautological line bundle of  $ \pi:X_0=\mathbb{P}(\mathscr{O}_B \oplus\mathscr{L}^{\otimes 2} \oplus\mathscr{L}^{\otimes 3})\rightarrow B$.
\end{thm}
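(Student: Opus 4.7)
The plan is to exploit that $\pi: X_0 = \mathbb{P}(E) \to B$ with $E = \mathscr{O}_B \oplus \mathscr{L}^{\otimes 2} \oplus \mathscr{L}^{\otimes 3}$ is a $\mathbb{P}^2$-bundle, so that $A^*(X_0)$ is free of rank three over $A^*(B)$ on $\{1,H,H^2\}$. The first step is to pin down the Grothendieck relation in the convention of the statement. Since $\mathscr{O}_{X_0}(-1) \hookrightarrow \pi^* E$ is the universal inclusion, the twist $\mathscr{O}_{X_0} \hookrightarrow \pi^* E \otimes \mathscr{O}_{X_0}(1)$ is a nowhere-vanishing section, so $c_3(\pi^* E \otimes \mathscr{O}_{X_0}(1)) = 0$. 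Expanding via the splitting of $E$ into line bundles with Chern classes $0, 2L, 3L$ yields
$$H(H+2L)(H+3L) = 0 \quad \text{in } A^*(X_0).$$
Combined with the standard $\mathbb{P}^2$-bundle values $\pi_* 1 = \pi_* H = 0$ and $\pi_* H^2 = 1$, this determines $\pi_*$ on all of $A^*(X_0)$.

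Next I would apply Lagrange interpolation modulo this cubic. Because its three roots $\{0, -2L, -3L\}$ are pairwise distinct as formal expressions, any polynomial $f(H)$ with coefficients in $A^*(B)$ is congruent mod $H(H+2L)(H+3L)$ to its Lagrange interpolant
$$\tilde{f}(H) = \sum_{\alpha \in \{0,-2L,-3L\}} f(\alpha) \prod_{\beta \neq \alpha} \frac{H-\beta}{\alpha - \beta},$$
a polynomial of degree $\le 2$ in $H$. Pushing forward extracts exactly the coefficient of $H^2$, giving
$$\pi_* f(H) \;=\; \sum_{\alpha} \frac{f(\alpha)}{\prod_{\beta\neq\alpha}(\alpha-\beta)} \;=\; \frac{f(0)}{6L^2} - \frac{f(-2L)}{2L^2} + \frac{f(-3L)}{3L^2}.$$

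The final step is cosmetic: using $(-2L)^2 = 4L^2$ and $(-3L)^2 = 9L^2$, the rewritings
$$-\frac{f(-2L)}{2L^2} = -2\left.\frac{f(H)}{H^2}\right|_{H=-2L}, \qquad \frac{f(-3L)}{3L^2} = 3\left.\frac{f(H)}{H^2}\right|_{H=-3L}$$
deliver the asymmetric form stated in the theorem. This proves the result for polynomial $f=Q$; the extension to a formal power series $\widetilde{Q}(t) = \sum_a \pi^* Q_a t^a$ is immediate by $A^*(B)$-linearity of $\pi_*$ together with the fact that in any fixed codimension only finitely many monomials contribute. As a sanity check, the three terms cancel for $f=1$ and $f=H$ (matching $\pi_* 1 = \pi_* H = 0$) and sum to $1$ for $f=H^2$; and a conceptually cleaner repackaging of the answer as a residue sum, $\pi_* f(H) = \sum_\alpha \mathrm{Res}_{H=\alpha} \tfrac{f(H)\,dH}{H(H+2L)(H+3L)}$, exhibits the formula as an instance of the general pushforward for projective bundles of line-bundle sums, paralleling Theorem~\ref{Thm:Push}.

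The main obstacle is bookkeeping rather than mathematics: one must commit to the declared convention that $\mathscr{O}_{X_0}(1)$ is the dual of the tautological line bundle, so that the Grothendieck relation has roots at $\{0, -2L, -3L\}$ and not $\{0, 2L, 3L\}$, and then track the compensating signs through the Lagrange calculation. The apparent poles at $L=0$ in individual terms of the right-hand side cancel in the total precisely because of the vanishing of $\pi_* 1$ and $\pi_* H$, so no actual inversion of $L$ is required inside $A^*(B)$ and the formula remains meaningful as a genuine identity in $A^*(B)$.
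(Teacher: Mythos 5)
Your proposal is correct. Note that the paper does not prove Theorem~\ref{Thm:PushH} itself but cites it from the literature, describing it as a consequence of functorial properties of the Segre class; your derivation via the Grothendieck relation $H(H+2L)(H+3L)=0$ followed by Lagrange interpolation is the standard equivalent route, since the interpolation denominators $6L^2$, $-2L^2$, $3L^2$ are exactly the partial-fraction coefficients of $s(E)=c(E)^{-1}=\tfrac{-2}{1+2L}+\tfrac{3}{1+3L}$ appearing in the Segre-class argument. You also correctly flag the one genuine subtlety, namely that the individual terms involve formal inverses of $L$ while the total is a polynomial identity (verifiable monomial by monomial in $H^a$, as your sanity checks for $a=0,1,2$ illustrate), so the formula holds in $A_*(B)$ without inverting $L$.
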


\subsection{Euler characteristics and Hodge numbers}\label{sec:Euler}
In the spirit of \cite{Euler}, the Euler characteristic depends only on the sequence of blowups. Using $p$-adic integration and the Weil conjecture, Batyrev proved the following theorem.

\begin{thm}[Batyrev, \cite{Batyrev.Betti}]
\label{thm:Batyrev}
Let $X$ and $Y$ be irreducible birational smooth $n$-dimensional projective algebraic varieties 
over $\mathbb{C}$. Assume that there exists a birational map $\varphi: X   - \rightarrow Y$ that does not 
change the canonical class. Then $X$ and $Y$ have the same Betti numbers. 
\end{thm}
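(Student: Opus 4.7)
The plan is to deduce equality of Betti numbers from equality of point counts over all finite fields, via the Weil conjectures, after reducing the comparison of point counts to a $p$-adic volume computation on a common resolution of the birational map.

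First I would spread out the entire setup to a scheme over a finitely generated $\mathbb{Z}$-algebra $R$: choose models $\mathcal{X}$ and $\mathcal{Y}$ that are smooth and projective over $\text{Spec}(R)$, whose generic fibers recover $X$ and $Y$, together with an extension of $\varphi$ and of the identification of canonical bundles on a common resolution. This is standard, since all the data in question involve finitely many polynomial equations and so persist after inverting finitely many primes. For any closed point of $\text{Spec}(R)$ with residue field $\mathbb{F}_q$, we obtain smooth projective reductions $X_q$ and $Y_q$ over $\mathbb{F}_q$. The target is to prove
$$|X_q(\mathbb{F}_{q^m})| = |Y_q(\mathbb{F}_{q^m})| \qquad \text{for every } m\geq 1,$$
since then the $\ell$-adic zeta functions of $X_q$ and $Y_q$ agree, and Deligne's form of the Weil conjectures together with smooth proper base change and $\ell$-adic comparison promotes this to the equality of Betti numbers of the complex varieties $X$ and $Y$.

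The key input for comparing point counts is Weil's formula: for a smooth projective $\mathbb{Z}_q$-model $\mathcal{X}$ of relative dimension $n$ and a local nowhere-vanishing top form $\omega$,
$$|X_q(\mathbb{F}_q)| \;=\; q^{n}\int_{\mathcal{X}(\mathbb{Z}_q)} |\omega|_q ,$$
where $|\cdot|_q$ denotes the canonical measure on the $q$-adic analytic manifold $\mathcal{X}(\mathbb{Z}_q)$. To compare the right-hand sides for $\mathcal{X}$ and $\mathcal{Y}$, I would fix a common resolution $Z$ of the birational map with proper birational morphisms $\pi_X:Z\to\mathcal{X}$ and $\pi_Y:Z\to\mathcal{Y}$. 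The hypothesis that $\varphi$ preserves the canonical class translates to the equality $\pi_X^{*}K_{\mathcal{X}} = \pi_Y^{*}K_{\mathcal{Y}}$ of Cartier divisors on $Z$, so that for corresponding local gauge forms $\omega_X$ on $\mathcal{X}$ and $\omega_Y$ on $\mathcal{Y}$ the pullbacks $\pi_X^{*}\omega_X$ and $\pi_Y^{*}\omega_Y$ differ by a unit on $Z$, not merely by a meromorphic function. The $p$-adic change-of-variables formula applied to $\pi_X$ and $\pi_Y$, together with the fact that the exceptional loci are of strictly smaller dimension and hence of $q$-adic measure zero, then yields
$$\int_{\mathcal{X}(\mathbb{Z}_q)}|\omega_X|_q \;=\; \int_{Z(\mathbb{Z}_q)} |\pi_X^{*}\omega_X|_q \;=\; \int_{Z(\mathbb{Z}_q)} |\pi_Y^{*}\omega_Y|_q \;=\; \int_{\mathcal{Y}(\mathbb{Z}_q)}|\omega_Y|_q.$$
Running the same argument over each finite unramified extension $\mathbb{Z}_{q^m}$ gives equal point counts for every $m$, hence equal zeta functions, and hence equal Betti numbers.

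The main obstacle, and the reason the theorem is nontrivial, is the gauge-form comparison on $Z$ together with the measure-zero claim at the exceptional loci. One must verify that the integral identification of canonical classes on $Z$, not merely equality of $\mathbb{Q}$-divisor classes, yields pullbacks differing by a genuine $p$-adic unit, and that the preimage of the exceptional locus has measure zero in $Z(\mathbb{Z}_q)$. Batyrev handles this by passing to a log resolution with simple normal crossings exceptional divisor and evaluating both integrals in explicit local coordinates, where they reduce to the same elementary $p$-adic integral whose value is manifestly symmetric in $\mathcal{X}$ and $\mathcal{Y}$. Once this local computation is in place, the rest of the argument is essentially formal.
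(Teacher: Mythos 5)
Your proposal is correct and follows exactly the strategy the paper attributes to Batyrev: the paper states this theorem as a cited result without proof, noting only that it is proved ``using $p$-adic integration and the Weil conjecture,'' which is precisely the reduction-mod-$p$, canonical-measure comparison on a common resolution, and zeta-function argument you outline. The one point to keep precise in a full write-up is that without a global gauge form one must work with Weil's canonical measure defined by local generators of the relative canonical sheaf (glued via the fact that units have $q$-adic absolute value one), but you flag essentially this issue yourself.
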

 Batyrev's result was strongly inspired by string dualities, in particular by the work of Dixon, Harvey, Vafa, and Witten \cite{Dixon:1986jc}. 
As a direct consequence of Batyrev's theorem, the Euler characteristic of a crepant resolution of a variety with Gorenstein canonical singularities is independent on the choice of a crepant resolution. 
We identify the Euler characteristic as the degree  of the total  (homological) Chern class of a crepant resolution $f: \widetilde{Y }\longrightarrow Y$ of a Weierstrass model $Y\longrightarrow B$:
$$
\chi(\widetilde{Y})=\int c(\widetilde{Y}).
$$
We then use the birational invariance of the degree under the pushfoward to express the Euler characteristic as a class in the Chow ring of the projective bundle $X_0$. We subsequently push this class forward to the base to obtain a rational function depending only on the total Chern class of the base $c(B)$, the first Chern class $c_1(\mathscr L)$, and the class $S$ of the divisor in $B$:
$$
\chi(\widetilde{Y})=\int_B \pi_* f_* c(\widetilde{Y}).
$$
In view of Theorem \ref{thm:Batyrev}, this Euler characteristic is independent of the choice of a crepant resolution.

\begin{thm} \label{Thm:Eulerchar}
 The generating polynomial of the Euler characteristic of an \sug-model  obtained by a crepant resolution of a Weierstrass model given in Section \ref{sec:crepres}:
\begin{align}\nonumber
\chi (Y)=6 \ \frac{S^2-2 L-3 S L+2 (S^2-3 S L+S-2 L) T+(3 S+2) T^2}{(1+S) (1+T) (-1-6 L+2 S+3 T)} \ c(TB) .
\end{align}
\end{thm}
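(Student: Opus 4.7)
The plan is to exploit Batyrev's theorem (Theorem \ref{thm:Batyrev}) so that we only need to compute the Euler characteristic for one of the four crepant resolutions listed in equation \eqref{Eq:8Resolutions}. I will work with Resolution I, since its three centers are smooth complete intersections and are therefore directly amenable to Aluffi's formula. Writing the resolution as $f=f_3\circ f_2\circ f_1: \widetilde Y \to X_0$ composed with the projective bundle $\pi: X_0 = \mathbb{P}(\mathscr O_B\oplus \mathscr L^{\otimes 2}\oplus\mathscr L^{\otimes 3})\to B$, the Euler characteristic equals $\int_{\widetilde Y}c(T\widetilde Y) = \int_B \pi_* f_* c(T\widetilde Y)$ by birational invariance of the degree under proper pushforward.

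First I would compute $c(TX_0)$ from the relative Euler sequence of $\pi$, obtaining $c(TX_0)=\pi^* c(TB)\cdot (1+H)(1+H+2L)(1+H+3L)$ with $H=c_1(\mathscr O_{X_0}(1))$ and $L=c_1(\mathscr L)$. Next, I would apply Theorem \ref{Thm:AluffiCBU} three times in succession to the blowups of Resolution I. Each center is the transverse intersection of hypersurfaces whose classes are combinations of $H$, $L$, $S$, $T$, and the previous exceptional classes $E_1,W_1,W_2$. For instance the first center $(x,y,s)$ has classes $\{2H+2L,\,3H+3L,\,S\}$; the subsequent blowups introduce shifts by $E_1$ (since $x,y$ have been transformed). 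Carrying out the Aluffi product one step at a time produces a manageable expression for $c(T\widetilde Y)$ as a rational expression in $\{H,L,S,T,E_1,W_1,W_2\}$ times $\pi^*c(TB)$.

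I would then pushforward through $f_3,f_2,f_1$ by repeated application of Theorem \ref{Thm:Push}. At each stage the pushforward replaces the exceptional class $E$ by a sum over the two hypersurface classes cutting out the center, with the universal weights $M_\ell$. Because the relevant centers have $d=2$ or $d=3$, each invocation of the theorem reduces the number of independent classes by one, yielding an expression in $H,L,S,T$ only. Finally, Theorem \ref{Thm:PushH} handles the projective bundle pushforward: I substitute $H=-2L$ and $H=-3L$ in the two displayed evaluations and add the residue term $Q(0)/(6L^2)$.

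The main obstacle is organizational: one must keep track of which proper transforms contribute to each blowup center, and the Aluffi plus pushforward steps generate long rational expressions that must be simplified carefully. I expect that after the three blowup pushforwards the answer takes the form of a rational function in $H,L,S,T$ whose denominator reflects the projective bundle structure, and that after applying Theorem \ref{Thm:PushH} enormous cancellation occurs, producing the compact closed form
\begin{equation*}
6\,\frac{S^2-2L-3SL+2(S^2-3SL+S-2L)T+(3S+2)T^2}{(1+S)(1+T)(-1-6L+2S+3T)}\,c(TB).
\end{equation*}
As a consistency check, I would reduce to the $T\to 0$ limit to recover the Euler characteristic of the pure III-model, and verify that substituting a Calabi--Yau condition $L=-K_B$ together with $\dim B=2$ reproduces the Hodge-number formula announced in Section \ref{sec:Euler}. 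By Theorem \ref{thm:Batyrev}, the fact that the three remaining resolutions of \eqref{Eq:8Resolutions} yield birational crepant models guarantees that the same answer is obtained from any of them, so no further verification across resolutions is required.
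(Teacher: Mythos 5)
Your strategy is exactly the paper's: invoke Batyrev's theorem to reduce to a single resolution (the paper also chooses Resolution I), compute $c(TX_0)$ from the Euler sequence, iterate Aluffi's blowup formula, and push forward with Theorems \ref{Thm:Push} and \ref{Thm:PushH}. However, two concrete points would derail the computation as written. First, your classes for the first blowup center are wrong: in $X_0=\mathbb{P}(\mathscr{O}_B\oplus\mathscr{L}^{\otimes 2}\oplus\mathscr{L}^{\otimes 3})$ the coordinates $x$ and $y$ are sections of $\mathscr{O}_{X_0}(1)\otimes\mathscr{L}^{\otimes 2}$ and $\mathscr{O}_{X_0}(1)\otimes\mathscr{L}^{\otimes 3}$, so $[V(x)]=H+2L$ and $[V(y)]=H+3L$, not $2H+2L$ and $3H+3L$. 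Your classes are inconsistent with the Weierstrass hypersurface having class $3H+6L$ (compare the classes of $y^2z$ and $x^3$), and carrying them through Aluffi's formula would produce an incorrect Euler characteristic.

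Second, iterating Theorem \ref{Thm:AluffiCBU} only yields $c(TX_3)$, the Chern class of the ambient blown-up $\mathbb{P}^2$-bundle, not $c(T\widetilde Y)$. You must in addition track the class of the proper transform of the Weierstrass hypersurface through the three blowups, $[\widetilde Y]=f^*(3H+6L)-2E_1-2W_1-W_2$, and then apply adjunction,
\begin{equation*}
c(T\widetilde Y)\cap[\widetilde Y]=\frac{[\widetilde Y]}{1+[\widetilde Y]}\,c(TX_3)\cap[X_3],
\end{equation*}
before pushing forward. This step is not merely organizational: it is also where one verifies that the resolution is crepant, i.e. $c_1(\widetilde Y)=f^*c_1(Y_0)$, which is the hypothesis needed to invoke Batyrev's theorem at the outset. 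With those two corrections, the remainder of your plan (successive applications of Theorem \ref{Thm:Push} for the three blowups followed by Theorem \ref{Thm:PushH} for the bundle projection) is precisely the paper's proof.
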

\begin{proof}
The total Chern class of $X_0=\mathbb{P}_B[\mathscr{O}_B\oplus \mathscr{L}^{\otimes 2} \oplus \mathscr{L}^{\otimes 3}]$ is 
$$
c(TX_0) = (1+H)(1+H+2\pi^*L)(1+H+3\pi^*L)\pi^* c(TB).
$$
where $L=c_1(\mathscr{L})$ and $H=c_1\Big(\mathscr{O}_{X_0}(1)\Big)$. The class of the Weierstrass equation is 
$[Y_0]=3H+6L$. 
Since all the resolutions are crepant, it is enough to do the computation in one of them. We consider Resolution I.  
We denote the blowups by 
$$f_1: X_1\to X_0, \quad  f_2:X_2\to X_1, \quad \text{and} \quad f_3: X_3\to X_2,$$ 
where $E_1$, $W_1$ and $W_2$ are respectively the classes of the first, second, and third blowups. 
  The center of the three blowups have  respectively classes: 
$$
\begin{array}{lll}
 Z_1^{(1)}=H+2\pi^* L, \quad &  Z_2^{(1)}= H+3\pi^*L, \quad  &Z_3^{(1)}=\pi^*S, \\
  Z_1^{(2)}=f_1^*H+2f_1^*\pi^*L-E_1,\quad & Z_2^{(2)}=f_1^* H+3f_1^* \pi^* L-E_1, \quad & Z_3^{(2)}=f_1^* \pi^* T, \\
   Z_1^{(3)}=f_2^*f_1^*H+3f_2^*f_1^*\pi^* L-f_2^*E_1-W_1, \quad & Z_2^{(3)}=W_1. & 
 \end{array}
$$
The successive blowups give (see Theorem \ref{Thm:AluffiCBU})
$$
\begin{aligned}
c(TX_1) &= \frac{(1+E_1)(1+Z_1^{(1)}-E_1) (1+Z_2^{(1)}-E_1) (1+Z_3^{(1)}-E_1)}{(1+Z_1^{(1)}) (1+Z_2^{(1)}) (1+Z_3^{(1)})} f_1^* c(T X_0),\\
c(TX_2) &= \frac{ (1+W_1)(1+Z_1^{(2)}-W_1) (1+Z_2^{(2)}-W_1) (1+Z_3^{(2)}-W_1)}{(1+Z_1^{(2)}) (1+Z 2^{(2)}) (1+Z 3^{(2)})} f_2^* c(T X_1),\\
c(T X_3) &= \frac{(1+W_2)(1+Z_1^{(3)}-W_2) (1+Z_2^{(3)}-W_2)}{(1+Z_1^{(3)}) (1+Z_2^{(1)})}f_3^* c(T X_2) .
\end{aligned}
$$
After the first blowup, the proper transform of $Y_0$ is of class  $Y_1=f_1^* Y_0-2E_1$.
After the second   blowup, the proper transform of $Y_1$ is of class  $Y_2=f_2^* Y_1-2W_1$.
And finally, after the third blowup, the proper transform of $Y_2$ is $Y=f_3^* Y_2-W_2$.
Altogether, we have 
$$
[Y]=(f_3^*f_2^*f_1^*(3H+6\pi^*L)-2 f_3^* f_2^* E_1-2f_3^*W_1-W_2)\cap [X_3].
$$
We also have that $c_1(X_3)= f_3^* f_2^* f_1^* c_1(X_0)-2 f_3^* f_2^* E_1-2f_3^*W_1-W_2$. Hence $c_1(Y)= f_3^* f_2^* f_1^*c_1(Y_0)$, which prove that the  resolution is crepant. 
The total Chern class of $Y$ is (see Theorem \ref{Thm:AluffiCBU})
$$
c(TY)\cap[Y]= 
\frac{ f_3^*f_2^*f_1^*(3H+6\pi^*L)-2 f_3^* f_2^* E_1-2f_3^*W_1-W_2}{1+f_3^*f_2^*f_1^*(3H+6\pi^*L)-2 f_3^* f_2^* E_1-2f_3^*W_1-W_2} c(TX_3)\cap [X_3].
$$
Then, 
$$
\chi(Y)= \int_{Y} c(TY)\cap[Y]= \int_B \pi_* f_{1*} f_{2^*} f_{3^*} c(TY)\cap[Y]
$$
The final formula for the Euler characteristic  follows directly from the pushforward  Theorems \ref{Thm:Push} and \ref{Thm:PushH}. 
\end{proof}
By direct expansion and specialization, we have the following three lemmas:
\begin{lem}
For an elliptic threefold, the Euler characteristic of  of an \sug-model  obtained by a crepant resolution of a Weierstrass model given in Section \ref{sec:crepres} is:
\begin{align}\nonumber
\chi (Y_3) =-6 (-2 c_1 \cdot L+12 L^2+S^2-5 S \cdot L+2 S \cdot T-8 L \cdot  T+2 T^2) .
\end{align}
\end{lem}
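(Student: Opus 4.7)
The plan is to deduce this lemma directly from Theorem \ref{Thm:Eulerchar} by specializing to the case where the base $B$ is a surface, so that the elliptic fibration $Y\to B$ is a threefold. Since $\chi(Y)=\int_B \pi_* f_* c(TY)\cap [Y]$ picks up only the degree-$\dim B$ part of the integrand, one must extract the degree-$2$ component of the expression
$$
6\,\frac{N(L,S,T)}{D(L,S,T)}\,c(TB),
$$
where $N=S^2-2L-3SL+2(S^2-3SL+S-2L)T+(3S+2)T^2$, the denominator is $D=(1+S)(1+T)(-1-6L+2S+3T)$, and $c(TB)=1+c_1+c_2+\cdots$ with $c_i=c_i(TB)$.

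First I would expand $1/D$ as a formal power series in $L,S,T$ up to total degree $2$. Writing $D=-(1+S)(1+T)(1+6L-2S-3T)$ and using the standard geometric-series expansions of the three factors, the product is easily truncated at total degree $2$. Then I would multiply by $N$. Crucially, every monomial of $N$ contains at least one factor of $L$, $S$, or $T$, so the constant term of $6N/D$ in $L,S,T$ vanishes. This is what kills any $c_2(TB)$ contribution in the final answer and explains its absence from the stated formula.

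Next I would multiply by $c(TB)$ and retain only the degree-$2$ part, thinking of $L,S,T$ as degree-$1$ classes in $A^*(B)$ and $c_i$ as degree-$i$ classes. The surviving contributions are of two types: the degree-$2$-in-$(L,S,T)$ terms of $6N/D$ paired with $c_0=1$, and the degree-$1$-in-$(L,S,T)$ terms of $6N/D$ paired with $c_1$. Collecting and simplifying these two contributions should produce
$$
-6\bigl(-2\,c_1\!\cdot\! L+12L^2+S^2-5\,S\!\cdot\! L+2\,S\!\cdot\! T-8\,L\!\cdot\! T+2\,T^2\bigr),
$$
and reading this as a degree on $A^2(B)$ gives the Euler characteristic.

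The main obstacle is purely bookkeeping in the truncated expansion: keeping careful track of the overall sign coming from $-1-6L+2S+3T$, and collecting every cross-term between the three geometric series when forming the quadratic piece. A useful sanity check is to test the result on a simple base such as $B=\mathbb{P}^2$ with explicit divisor classes for $S$ and $T$, where $\chi(Y)$ can be independently tabulated from one of the crepant resolutions in equation \eqref{Eq:8Resolutions}.
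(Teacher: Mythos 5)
Your proposal is correct and is exactly what the paper does: the lemma is obtained from Theorem \ref{Thm:Eulerchar} ``by direct expansion and specialization,'' i.e., truncating the generating function at degree $\dim B=2$ and pairing the degree-$1$ and degree-$2$ pieces of $6N/D$ with $c_1$ and $c_0$ respectively (your observation that $N$ has no constant term, killing the $c_2$ contribution, is the right reason that class is absent). The expansion checks out term by term, so nothing further is needed.
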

 By applying $c_1=L=-K$, we have the following Lemma.
\begin{lem}\label{lemEulerCY3}
In the case of a Calabi-Yau threefold,  The Euler characteristic 
 of an \sug-model  obtained by a crepant resolution of a Weierstrass model given in Section \ref{sec:crepres} is:
\begin{align}\nonumber
\chi (Y_3) =-6 (10 K^2+S^2+5 S \cdot  K+2 S \cdot T+8 K \cdot T+2 T^2) .
\end{align}
\end{lem}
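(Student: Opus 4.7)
The plan is to obtain this lemma as a pure specialization of the preceding elliptic threefold Euler characteristic formula by imposing the Calabi--Yau constraint on the Weierstrass model. No new resolution computation is needed; the work is entirely at the level of Chern classes in $A^*(B)$.

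First, I would recall the Calabi--Yau constraint for the ambient projective bundle $X_0 = \mathbb{P}_B(\mathscr{O}_B\oplus\mathscr{L}^{\otimes 2}\oplus\mathscr{L}^{\otimes 3})$. A short adjunction computation, already implicit in the proof of Theorem \ref{Thm:Eulerchar} where it is shown that $c_1(X_3) = f_3^*f_2^*f_1^*c_1(X_0) - 2f_3^*f_2^*E_1 - 2f_3^*W_1 - W_2$ and hence $c_1(Y) = f_3^*f_2^*f_1^*c_1(Y_0)$, reduces the Calabi--Yau condition to $c_1(Y_0) = 0$. Since $c_1(Y_0) = \pi^*(c_1(TB) - c_1(\mathscr{L}))$ by adjunction on the hypersurface of class $3H + 6\pi^*L$, the condition is equivalent to $\mathscr{L}\cong K_B^{-1}$, i.e.\ $L = c_1(\mathscr{L}) = -K$ and simultaneously $c_1 = c_1(TB) = -K = L$. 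Crepancy of each of the four resolutions was already established, so the same substitution applies uniformly regardless of which resolution one uses to compute $\chi$.

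Next, I would substitute $c_1 = L = -K$ into the formula from the preceding lemma,
$$
\chi(Y_3) = -6\bigl(-2\, c_1\cdot L + 12\, L^2 + S^2 - 5\, S\cdot L + 2\, S\cdot T - 8\, L\cdot T + 2\, T^2\bigr),
$$
and collect terms. The pure quadratic-in-$L$ contribution assembles as $-2K^2 + 12K^2 = 10K^2$; the signs of the linear-in-$L$ contributions flip to give $+5\, S\cdot K$ and $+8\, K\cdot T$; and the terms $S^2$, $T^2$, and $2\, S\cdot T$ are unaffected. This yields exactly
$$
\chi(Y_3) = -6\bigl(10K^2 + S^2 + 5\, S\cdot K + 2\, S\cdot T + 8\, K\cdot T + 2\, T^2\bigr),
$$
as claimed.

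There is no real obstacle to overcome: the statement is bookkeeping once one is careful with the sign convention $L = -K$ (rather than $+K$) and with the identification $c_1(TB) = -K$. The only conceptual point worth noting is that Batyrev's theorem (Theorem \ref{thm:Batyrev}) together with the crepancy verified in Theorem \ref{Thm:Eulerchar} guarantees that the resulting Euler characteristic is a genuine invariant of the singular Calabi--Yau Weierstrass model, independent of which of the four resolutions of Section \ref{sec:crepres} one applies.
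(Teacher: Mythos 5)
Your proposal is correct and follows exactly the paper's route: the paper obtains this lemma by the one-line substitution $c_1 = L = -K$ (the Calabi--Yau condition) into the preceding elliptic-threefold formula, which is precisely the computation you carry out. Your additional remarks justifying why crepancy reduces the Calabi--Yau condition to $\mathscr{L}\cong K_B^{-1}$ and invoking Batyrev's theorem for resolution-independence are consistent with, and slightly more explicit than, what the paper states.
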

\begin{lem}
The Euler characteristic for an elliptic fourfold, 
 the Euler characteristic 
 of an \sug-model  obtained by a crepant resolution of a Weierstrass model given in Section \ref{sec:crepres} 
  is given by
\begin{align}\nonumber
\begin{split}
\chi (Y_4)=-6 & \left( -2 c_2 \cdot L-72 L^3+12 c_1 \cdot L^2+ c_1 \cdot S^2-5 c_1 \cdot S \cdot  L+2 c_1 \cdot S \cdot T-8 c_1 \cdot L \cdot  T+2 c_1 \cdot T^2   \right. \\
& \ \ \left. +S^3-15 S^2 \cdot L +6 S^2 \cdot T+54 S \cdot L^2-44 S \cdot L \cdot T+9 S  \cdot T^2+84 L^2 \cdot T-34 L  \cdot T^2+4 T^3\right) .
\end{split}
\end{align}
\end{lem}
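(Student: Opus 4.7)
The plan is to directly specialize Theorem~\ref{Thm:Eulerchar} to a base $B$ of complex dimension $3$ (so that the total space is a fourfold). For such $B$, the total Chern class is $c(TB)=1+c_1+c_2+c_3$, and it suffices to extract the degree-$3$ component of the product of $c(TB)$ with the rational expression
$$
6\,R(L,S,T)\;=\;6\,\frac{S^2-2L-3SL+2(S^2-3SL+S-2L)T+(3S+2)T^2}{(1+S)(1+T)(-1-6L+2S+3T)},
$$
regarded as a formal power series in $L,S,T\in A^{*}(B)$, truncated at total degree $3$.

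The concrete procedure is as follows. First, rewrite the denominator as $-(1+S)(1+T)\bigl(1-(2S+3T-6L)\bigr)$ and invert each factor using the geometric series, keeping only monomials of degree $\le 3$. Second, multiply by the numerator $N(L,S,T)=S^2-2L-3SL+2(S^2-3SL+S-2L)T+(3S+2)T^2$ (which has no constant term, its leading piece being $-2L$ of degree one). Third, pair the degree-$k$ component of the resulting polynomial in $L,S,T$ with $c_{3-k}$ for $k=0,1,2,3$ and sum.

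Two structural observations make the result transparent and give a useful cross-check. Because $N$ has vanishing constant term, the degree-$0$ component of $6R$ is $0$, so the $c_3$-term automatically drops out of $\chi(Y_4)$ — consistent with the stated formula. Likewise the degree-$1$ component of $6R$ equals $12L$, giving the $-2\,c_2\!\cdot\! L$ contribution inside the overall factor $-6$. In the same manner the degree-$2$ and degree-$3$ extractions reproduce respectively the $c_1$-terms and the pure monomials in $L,S,T$ listed in the lemma. As an additional sanity check, truncating the same expansion at degree $2$ must reproduce the threefold formula proved just above, which it does upon inspection of the $L,\,ST,\,LT,\,S^2,\,T^2,\,c_1L$ coefficients.

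The only real obstacle is bookkeeping: the degree-$3$ truncation of the triple geometric series produces many monomials in $L,S,T$, and the signs arising from the $(-1)$ factor in the denominator and from subtracting $(2S+3T-6L)^k$ must be tracked carefully. This step is mechanical and is most safely handled with a computer algebra system, exactly as indicated by the phrase \emph{``by direct expansion and specialization''} introducing the preceding lemmas. Collecting the surviving terms and extracting a global factor of $-6$ yields the displayed expression for $\chi(Y_4)$.
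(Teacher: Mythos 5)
Your proposal is correct and is precisely what the paper does: the lemma is obtained ``by direct expansion and specialization'' of Theorem~\ref{Thm:Eulerchar}, i.e.\ by expanding the rational generating function as a power series in $L,S,T$, truncating at the dimension of the base, and pairing the degree-$k$ piece with $c_{3-k}(TB)$. Your structural checks (vanishing constant term killing the $c_3$ contribution, the degree-one piece $12L$ producing the $c_2\cdot L$ term, and consistency of the degree-two piece with the threefold lemma) all hold.
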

 Again, by the Calabi-Yau condition $c_1=L=-K$, we have the following Lemma.
\begin{lem}
In the case of a Calabi-Yau fourfold, the Euler characteristic 
 of an \sug-model  obtained by a crepant resolution of a Weierstrass model given in Section \ref{sec:crepres} is 
\begin{align}\nonumber
\begin{split}
\chi (Y_4)=-6 & \left(2 c_2  K+60 K^3+S^3+14 S^2  K+6 S^2  T+49 S  K^2+42 S K  T+9 S T^2  +76 K^2  T+32 K  T^2+4 T^3 \right) .
\end{split}
\end{align}
\end{lem}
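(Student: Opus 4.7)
\smallskip
\noindent\textbf{Proof proposal.}
The plan is to deduce the stated fourfold Calabi--Yau formula as an immediate specialization of the preceding lemma, which records the Euler characteristic of an \sug-model for a general elliptic fourfold. That lemma in turn was obtained as the degree-four truncation of the generating polynomial in Theorem \ref{Thm:Eulerchar}, so no further resolution-theoretic input is required: all the heavy lifting (pushing forward via Theorem \ref{Thm:Push} and Theorem \ref{Thm:PushH}) has already been done.

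First I would recall that the Calabi--Yau condition on $Y_4$ translates, at the level of the base $B$ of the fibration, into the adjunction-type relation $\mathscr{L}=\mathscr{O}_B(-K_B)$, equivalently $L=-K$, and additionally $c_1(TB)=-K$. Substituting $c_1=L=-K$ into the general fourfold expression
\begin{align*}
\chi(Y_4)=-6\bigl(&-2c_2 L-72 L^3+12 c_1 L^2+c_1 S^2-5 c_1 S L+2 c_1 S T-8 c_1 L T+2 c_1 T^2 \\
&+S^3-15 S^2 L+6 S^2 T+54 S L^2-44 S L T+9 S T^2+84 L^2 T-34 L T^2+4 T^3\bigr)
\end{align*}
and grouping monomials in $K,S,T$ yields the claimed formula. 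For instance the pure $K^3$ coefficient is $-72(-1)^3+12(-1)(-1)^2=72-12=60$, the $K^2 T$ coefficient is $-8(-1)(-1)+84(-1)^2=-8+84=76$, and similarly for the mixed terms; this is a routine verification that I would carry out term by term.

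The only real step is bookkeeping, and the main (mild) obstacle is keeping track of sign flips from $L=-K$ versus $L^2=K^2$ when combining the three types of contributions: the purely $L$-monomials from Theorem \ref{Thm:PushH}, the $c_1$-monomials coming from $c(TB)$, and the mixed $c_2$ contribution. Because both Theorems \ref{Thm:Push} and \ref{Thm:PushH} are functorial and respect the Calabi--Yau specialization (they depend on $L$ polynomially), no reordering of pushforward and specialization is necessary. Once the substitution is performed, collecting like monomials in the basis $\{K^3,\,c_2 K,\,S^2K,\,SK^2,\,K^2T,\,KT^2,\,SKT,\,S^3,\,S^2T,\,ST^2,\,T^3\}$ and factoring out the overall $-6$ produces exactly the displayed expression, completing the proof.
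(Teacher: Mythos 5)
Your proposal is correct and follows exactly the paper's route: the paper likewise obtains this lemma by substituting $c_1=L=-K$ into the general elliptic fourfold formula of the preceding lemma, and your term-by-term checks (e.g.\ $60K^3$, $76K^2T$) are accurate.
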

\begin{thm}\label{Thm:Hodge}
In the Calabi-Yau case, the Hodge numbers of an \sug-model 
 obtained by the crepant resolution of a Weierstrass model given in Section \ref{sec:crepres} are
$$
h^{1,1}(Y)=14-K^2, \quad h^{2,1}(Y)=29 K^2+15 K S+24 K T+3 S^2+6 S T+6 T^2+14.
$$
\end{thm}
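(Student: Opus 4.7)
The plan is to obtain $h^{1,1}(Y)$ from the Shioda--Tate--Wazir theorem, obtain $\chi(Y)$ from Lemma \ref{lemEulerCY3}, and then recover $h^{2,1}(Y)$ from the identity $\chi(Y)=2(h^{1,1}(Y)-h^{2,1}(Y))$ that holds for any smooth Calabi--Yau threefold.

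For the first step, I would invoke the Shioda--Tate--Wazir decomposition of the Picard group of an elliptic fibration with a section. Since the Mordell--Weil group is trivial and the fibral contribution has rank equal to the total rank of the Lie algebra $\mathfrak{su}(2)\oplus\mathfrak{g}_2$, which is $1+2=3$, this yields
\begin{equation*}
h^{1,1}(Y)=h^{1,1}(B)+1+3=h^{1,1}(B)+4.
\end{equation*}
For a smooth Calabi--Yau elliptic threefold, the base $B$ is a smooth rational surface, so $h^{1,0}(B)=h^{2,0}(B)=0$ and Noether's formula gives $\chi_{\mathrm{top}}(B)=12-K^2$, which combined with $\chi_{\mathrm{top}}(B)=2+h^{1,1}(B)$ yields $h^{1,1}(B)=10-K^2$. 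This produces $h^{1,1}(Y)=14-K^2$ as claimed.

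For the second step, Lemma \ref{lemEulerCY3} gives
\begin{equation*}
\chi(Y)=-6\bigl(10 K^2+S^2+5 S\cdot K+2 S\cdot T+8 K\cdot T+2 T^2\bigr).
\end{equation*}
Substituting into $h^{2,1}(Y)=h^{1,1}(Y)-\tfrac{1}{2}\chi(Y)$ and using the value of $h^{1,1}(Y)$ from the first step gives
\begin{equation*}
h^{2,1}(Y)=(14-K^2)+3\bigl(10 K^2+S^2+5 S\cdot K+2 S\cdot T+8 K\cdot T+2 T^2\bigr),
\end{equation*}
which simplifies to the asserted expression $29K^2+15 K\cdot S+24 K\cdot T+3 S^2+6 S\cdot T+6 T^2+14$.

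The main conceptual point (rather than obstacle) is justifying the use of Shioda--Tate--Wazir in this setting: one must verify that all the fibral divisors produced by the crepant resolutions in Section \ref{sec:crepres} are genuine Cartier divisors on $Y$ independent in $\mathrm{Pic}(Y)$ modulo pullbacks from the base and the zero section. This can be read off the fiber structure of the explicit resolutions (Figure \ref{fig:NKlist} and the fiber tables), where one sees three fibral divisors not meeting the zero section. The remaining verification is the purely arithmetic simplification, which is a direct expansion. Since $\chi(Y)$ and $h^{1,1}(Y)$ are both birational invariants among crepant resolutions (Theorem \ref{thm:Batyrev}), the answer is independent of which of the four resolutions one uses to carry out the check.
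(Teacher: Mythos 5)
Your proposal is correct and follows essentially the same route as the paper: the paper also obtains $h^{1,1}(Y)=10+1+3-K^2$ from the Shioda--Tate--Wazir theorem (three fibral divisors not meeting the zero section, matching the rank of $\mathfrak{su}(2)\oplus\mathfrak{g}_2$) and then sets $h^{2,1}(Y)=h^{1,1}(Y)-\tfrac{1}{2}\chi(Y)$ using the Euler characteristic of Lemma \ref{lemEulerCY3}. The arithmetic checks out.
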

 There are three fibral divisors not touching the section  of the elliptic fibration. This number is exactly the rank of \sug . Hence, using the Shioda-Tata-Wazir theorem, we have 
$$
h^{1,1}(Y)=10+1+3-K^2, \quad h^{2,1}(Y)=h^{1,1}(Y)-\frac{1}{2}\chi(Y).
$$

\begin{thm}[{Shioda-Tate-Wazir; see Corollary 4.1. of \cite{Wazir}}]\label{Thm:STW}
Let $\varphi:Y\rightarrow B$ be a smooth elliptic threefold. Then, 
$$
\rho(Y)=\rho(B)+f+\mathrm{rank}(\text{MW} (\varphi))+1,
$$
where $f$ is the number of geometrically irreducible fibral divisors not touching the zero section.
\end{thm}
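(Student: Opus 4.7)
The plan is to establish a Shioda--Tate style exact sequence
$$0 \longrightarrow L(Y) \longrightarrow \operatorname{NS}(Y) \longrightarrow \operatorname{MW}(\varphi) \longrightarrow 0,$$
in which the ``trivial lattice'' $L(Y)$ is generated by divisors pulled back from $B$, the zero section $\sigma_0(B)$, and the fibral components of reducible fibers not meeting $\sigma_0(B)$. Taking ranks then immediately yields the claim, so the work is to construct the sequence, identify $L(Y)$ explicitly, and check linear independence of its stated generators.

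First I would restrict divisors to the generic fiber. Since $\varphi$ is smooth and $Y$ is a threefold, the generic fiber $Y_\eta$ is a smooth elliptic curve over $K(B)$. Restriction yields a map $\operatorname{NS}(Y)\to \operatorname{Pic}(Y_\eta)$; composing with subtraction of the zero point gives a homomorphism $\operatorname{NS}(Y)\to \operatorname{Pic}^0(Y_\eta)=Y_\eta(K(B))=\operatorname{MW}(\varphi)$. Standard algebraic geometry (spreading out sections from the generic fiber to $Y$, using smoothness and properness to extend rational sections) shows that this map is surjective. The kernel consists precisely of the classes that restrict to $0$ on $Y_\eta$, i.e.\ classes supported on $\varphi^{-1}(D)$ for some proper closed $D\subset B$ together with multiples of $\sigma_0$.

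Next I would decompose this kernel into independent pieces. A vertical divisor is a sum of irreducible components of fibers over codimension-one points of $B$. For each irreducible component $\Delta_i\subset B$ over which the fiber is reducible with $m_i$ irreducible components, the total transform $\varphi^*\Delta_i$ equals the sum of these components weighted by their multiplicities, so the components contribute $m_i-1$ new independent classes beyond $\varphi^*\Delta_i$. Choosing, in each reducible fiber, the unique component hit by $\sigma_0$ (the ``identity component'') as the one to eliminate, the remaining $m_i-1$ components are exactly the fibral divisors not touching the zero section. Summing over $i$ this gives $f$ classes. Together with $\rho(B)$ classes pulled back from $\operatorname{NS}(B)$ and the single class $[\sigma_0(B)]$, we obtain $\rho(B)+f+1$ generators of $L(Y)$.

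The main obstacle is linear independence: one must show that these $\rho(B)+f+1$ generators are numerically independent in $\operatorname{NS}(Y)$. For the pullback part this follows from the projection formula applied to the generic fiber. For $[\sigma_0(B)]$, independence from the vertical lattice follows because restriction to $Y_\eta$ sends $\sigma_0$ to a nontrivial point while vertical classes go to zero. The delicate part is the fibral contribution: one needs that the classes of the non-identity components in each reducible fiber are numerically independent modulo $\varphi^*\Delta_i$. This is a higher-dimensional analogue of Zariski's lemma for elliptic surfaces, and follows by intersecting with a general curve in the fiber and invoking negative semidefiniteness of the intersection form on the components of a singular fiber of an elliptic curve, whose radical is spanned by the full fiber class itself. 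Once this is established, the rank computation gives $\rho(Y)=\rho(B)+f+\operatorname{rank}\operatorname{MW}(\varphi)+1$ as required.
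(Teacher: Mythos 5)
The paper does not actually prove this statement---it is quoted from Wazir (Corollary 4.1 of \cite{Wazir})---so there is no in-text proof to compare against; your argument is essentially the standard Shioda--Tate proof that Wazir extends from elliptic surfaces to elliptic threefolds (restrict to the generic fibre, identify the trivial lattice as the kernel, count ranks), and its overall structure is sound. The one step you should repair is the linear-independence argument for the fibral contribution, which you state in surface language: on a threefold the irreducible components of a fibre over a codimension-one point are \emph{curves}, not divisors, so there is no intersection form ``on the components of a singular fiber'' to which Zariski's lemma applies directly. The standard fixes are either (a) to restrict everything to the elliptic surface $\varphi^{-1}(C)\to C$ over a general curve $C\subset B$ and apply the surface-theoretic statement there, or (b) to pair the fibral divisors $D_a$ lying over a discriminant component $\Delta_i$ with the irreducible curves $C_b$ of the fibre over a general closed point of $\Delta_i$; the resulting matrix $(D_a\cdot C_b)$ is the negative of an affine Cartan matrix, whose corank is exactly one with radical spanned by the class of the full fibre $\varphi^*\Delta_i$, and this yields the independence of the $m_i-1$ non-identity components modulo $\varphi^*\mathrm{NS}(B)$ (components over distinct $\Delta_i$ decouple because they meet the test curves over $\Delta_j$, $j\neq i$, trivially). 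A second, minor point: your identification of the kernel with ``$\sigma_0$ plus divisors supported over a proper closed $D\subset B$ of codimension one'' tacitly assumes equidimensionality; a smooth elliptic threefold may have two-dimensional fibres over points, and such components are vertical divisors not dominating any divisor of $B$. They are still irreducible components of $\varphi^{*}(\Delta_{\mathrm{red}})$ and hence counted in $f$ under the paper's definition of fibral divisor, so the formula survives, but your kernel decomposition should account for them explicitly.
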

\begin{thm}\label{Thm:STW2} Let $Y$ be a smooth Calabi-Yau threefold elliptically fibered over a smooth variety $B$. 
Assuming  the Mordell-Weil group of $Y$ has rank zero, then 
\begin{equation}\nonumber
h^{1,1}(Y)=h^{1,1}(B)+f+1, \quad h^{2,1}(Y)=h^{1,1}(Y)-\frac{1}{2}\chi(Y),
\end{equation}
where $f$ is the number of geometrically irreducible fibral divisors not touching the zero section. In particular, if $Y$ is a $G$-model with $G$ a simple group, $f$ is the rank of $G$.
\end{thm}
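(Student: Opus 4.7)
The plan is to deduce Theorem \ref{Thm:STW2} as a specialization of the Shioda--Tate--Wazir theorem (Theorem \ref{Thm:STW}) to the Calabi--Yau setting, combined with standard Hodge-theoretic identities that hold on any smooth Calabi--Yau threefold. Since both formulas are purely topological/cohomological, no additional geometric input about the \sug-model is needed beyond the hypotheses already in force.

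For the first identity, I would start from Theorem \ref{Thm:STW}, which gives
$\rho(Y)=\rho(B)+f+\mathrm{rank}(\mathrm{MW}(\varphi))+1$.
Since $\mathrm{rank}(\mathrm{MW}(\varphi))=0$ by hypothesis, this reduces to $\rho(Y)=\rho(B)+f+1$. The next step is to identify $\rho$ with $h^{1,1}$ on both $Y$ and $B$. On the Calabi--Yau threefold $Y$, $h^{1,0}(Y)=h^{2,0}(Y)=0$, so the exponential sequence yields $\mathrm{Pic}(Y)\otimes\mathbb{Q}\cong H^{2}(Y,\mathbb{Q})$; combined with the Hodge decomposition this gives $\rho(Y)=h^{1,1}(Y)$. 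For the base $B$ of an elliptically fibered Calabi-Yau threefold (necessarily a smooth rational surface, up to standard restrictions), one similarly has $h^{1,0}(B)=h^{2,0}(B)=0$ and hence $\rho(B)=h^{1,1}(B)$. Substituting yields the first formula.

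For the second identity, I would invoke the standard Euler characteristic formula for a Calabi--Yau threefold. Using $h^{0,0}=h^{3,0}=1$, $h^{1,0}=h^{2,0}=0$, Serre duality, and complex conjugation on Hodge numbers, the Betti numbers are
\begin{equation*}
b_0=1,\quad b_1=0,\quad b_2=h^{1,1},\quad b_3=2+2h^{2,1},
\end{equation*}
together with $b_k=b_{6-k}$. The alternating sum then gives $\chi(Y)=2\bigl(h^{1,1}(Y)-h^{2,1}(Y)\bigr)$, which immediately rearranges to $h^{2,1}(Y)=h^{1,1}(Y)-\tfrac{1}{2}\chi(Y)$.

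For the last assertion about $G$-models with $G$ simple, the count of geometrically irreducible fibral divisors not meeting the zero section corresponds to the simple roots of $\mathfrak{g}$ (one divisor per node of the affine Dynkin diagram, minus the affine node, which is the component meeting the zero section). For a simple Lie algebra this equals $\mathrm{rank}(G)$. There is no real obstacle in this proof: the content lies entirely in Theorem \ref{Thm:STW} and the Calabi--Yau Hodge-diamond computation; the only subtle point worth stating carefully is the justification that $\rho=h^{1,1}$ on $B$, which uses the regularity and rationality hypotheses implicit in saying $B$ is the base of a Calabi--Yau elliptic fibration with trivial Mordell--Weil group.
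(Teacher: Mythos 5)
Your proposal is correct and follows exactly the route the paper intends: the paper states Theorem \ref{Thm:STW2} as an immediate consequence of the Shioda--Tate--Wazir theorem (Theorem \ref{Thm:STW}) with $\mathrm{rank}(\mathrm{MW})=0$, the identification $\rho=h^{1,1}$ on both $Y$ and its rational base via the vanishing of $h^{1,0}$ and $h^{2,0}$, and the Calabi--Yau Hodge-diamond identity $\chi(Y)=2\bigl(h^{1,1}(Y)-h^{2,1}(Y)\bigr)$. You have simply made explicit the steps the paper leaves implicit, so there is nothing to correct.
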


\subsection{Triple intersection numbers} \label{sec:triple}
 Let $f:Y\to Y_0$ be one of the  crepant resolutions of a Weierstrass model given in Section \ref{sec:crepres}. 
Assuming that $Y$ is a threefold, we collect the triple intersection numbers $(D_a\cdot D_b \cdot D_c)\cap [Y]$ of the fibral divisors as the coefficient of  a cubic polynomial in $\psi_0$, $\psi_1$, $\phi_0$, $\phi_1$,  and $\phi_2$ that couples respectively with the fibral divisors $D_0^{\text{s}}$, $D_1^{\text{s}}$, $D_0^\text{t}$, $D_1^\text{t}$, and $D_2^\text{t}$. 
We pushforward to the Chow ring of $X_0$ and then to the base $B$. 
We recall that $\pi:X_0\to B$ is the projective bundle in which the Weierstrass model is defined. Then,
$$
\mathscr{F}_{trip}=
\int_Y\Big[\Big(
\psi_0 D_0^{\text{s}} +\psi_1 D_1^{\text{s}} +\phi_0 D_0^\text{t} +\phi_1 D_1^\text{t} +\phi_2 D_2^\text{t}  
\Big)^3\Big]=
\int_B\pi_* f_* \Big[\Big(
\psi_0 D_0^{\text{s}} +\psi_1 D_1^{\text{s}} +\phi_0 D_0^\text{t} +\phi_1 D_1^\text{t} +\phi_2 D_2^\text{t}  
\Big)^3\Big].
$$
Once the classes of the fibral divisors are determined, we  compute the pushforward of the triple intersection numbers using Theorem \ref{Thm:Push}  successively for each blowups $X_{i+1}\to X_i$  ($i=2,1,0$). We are then in the Chow ring of $X_0$ and we use Theorem \ref{Thm:PushH} to push forward to the Chow ring of $B$ using the projective bundle map $\pi: X_0\to B$. We can then take the degree to end up with a number. 

\begin{thm}\label{Thm:TripleInt}
The triple intersection polynomial of an \sug-model depends on the choice of a crepant resolution (they are listed in Section \ref{sec:crepres} ) and are as follows:
\quad
\begin{description}
\item[Resolution I:]
\begin{align}\nonumber
\begin{aligned}
\mathscr{F}^{\text{(I)}}_{trip}=& -9 T \phi _1 \phi _2 \left(\phi _2 (-6 L+S+3 T)-2 S \psi _1\right)+3 T \phi _1^2 \left(\phi _2 (-9 L+S+6 T)-2 S \psi _1\right) \\
& -2 \left(2 T \phi _2^3 (9 L-2 S-3 T)+S \left(\psi _0-\psi _1\right)^2 \left(2 \psi _0 (S-L)+\psi _1 (2 L+S)\right)+9 S T \psi _1 \phi _2^2\right) \\
&+3 T \phi _0 \left(\phi _1^2 (L-S)+2 S \phi _1 \left(\psi _1+\phi _2\right)-2 S \left(\left(\psi _0-\psi _1\right)^2+\phi _2^2\right)\right) \\
&+2 T \phi _0^3 (2 L-S-2 T)+3 T \phi _0^2 \left(\phi _1 (-2 L+S+T)-2 S \psi _1\right),
\end{aligned}
\end{align}
\item[Resolution  II:] 
\begin{align}\nonumber
\begin{aligned}
\mathscr{F}^{\text{(II)}}_{trip}=& S \psi _1^3 (-4 L-2 S+T)+T \phi _2^3 (-36 L+7 S+12 T) +4 T \phi _1^3 (L-T) -15 S T \psi _1 \phi _2^2 -3 S T\psi _1^2 \phi _2 \\
&+ 3 T(-9 L+S+6 T) \phi _1^2 \phi _2 -6ST \phi _1^2 \psi _1+9 T \phi _1 \phi _2 \left(2 S \psi _1 -(-6 L+S+3 T)\phi _2 \right)\\
& -T \phi _0^3 (-4 L+S+4 T) +3 T \phi _0^2 \left(\phi _1 (-2 L+S+T)-S \left(\psi _0+\psi _1+\phi _2\right)\right) \\
& -3 T \phi _0 \left(\phi _1^2 (S-L)-2 S \phi _1 \left(\psi _1+\phi _2\right)+S \left(-\psi _0+\psi _1+\phi _2\right)^2\right) +6ST \psi_0 \psi _1\phi _2 -3 S T \psi _0\phi _2^2  \\
& -3 S T\psi _0^2 \phi _2+3 S \psi _1 \psi _0^2 (-4 L+2 S+T)++3 S \psi _1^2 \psi _0 (4 L-T)-S \psi _0^3 (-4 L+4 S+T) ,
\end{aligned}
\end{align}
\item[Resolution III:]
\begin{align}\nonumber
\begin{aligned}
\mathscr{F}^{\text{(III)}}_{trip}=& -3 T(9 L-2 (S+3 T)) \phi _1^2 \phi _2 -3 T \phi _1 \phi _2^2 (-18 L+4 S+9 T) -3ST \psi _1 \phi _1^2 -12ST \psi _1\phi _2^2\\
& +T \phi _1^3 (4 L-S-4 T) -4ST \phi _2^3 (9 L-2 S-3 T) -3ST \psi_1^2 \phi _1 +12ST \psi _1\phi _1\phi _2 -2S(2 L+S-T)\psi_1^3 \\
& + 3 T \phi _0^2 \left(\phi _1 (T-2 L)-2 S \psi _0\right)+3 T \phi _0 \phi _1 \left(L \phi _1+2 S \psi _0\right) -3ST \psi _0 \phi _1^2 +6ST \psi _0 \phi _1\phi _2 \\
& +6ST \psi_0 \psi _1\phi _1  -3ST\psi _0^2 \phi _1 -2S(-2 L+2 S+T) \psi _0 \left(\psi _0^2 -2\psi_0^1 \psi _1+\psi_1^2\right) \\
& -2S(2 L+S-T) \psi _0 (\psi _0\psi _1 -2 \psi _1^2) -6ST\psi _0\phi _2^2 +4 T \phi _0^3 (L-T),
\end{aligned}
\end{align}
\item[Resolution IV:]
\begin{align}\nonumber
\begin{aligned}
\mathscr{F}^{\text{(IV)}}_{trip}=& S (-4 L-2 S+3 T) \psi _1^3 +4T (L-T)\phi _1^3 +9T(2 T-3 L) \phi _1^2 \phi _2  +27T(2 L-T) \phi _1 \phi _2^2 \\
& +12T(T-3 L) \phi _2^3 -6 S T \psi _1^2 \phi _2\\ 
& +S \psi _0^3 (4 L-4 S-3 T) +4 T \phi _0^3 (L-T)+\phi _0^2 \left(3 T \phi _1 (T-2 L)-6 S T \psi _0\right) \\
& +\psi _0^2 \left(3 S(-4 L+2 S+3 T) \psi _1 -6 S T \phi _2\right) +3 T \phi _0 \phi _1 \left(L \phi _1+2 S \psi _0\right) \\
& +3 S \psi _0 \left(\psi _1^2 (4 L-3 T)+4 T \psi _1 \phi _2 -2 T \phi _1^2 +6T \phi _1 \phi _2 -6T \phi _2^2\right). 
\end{aligned}
\end{align}
\end{description}
\end{thm}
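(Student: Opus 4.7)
The plan is to carry out, for each of the four resolutions listed in Section \ref{sec:crepres}, the same three-step reduction that was used for the Euler characteristic in Theorem \ref{Thm:Eulerchar}. For a given resolution $f=f_3\circ f_2\circ f_1:Y\to Y_0\hookrightarrow X_0$, I would first determine the classes of the five fibral divisors $D_0^{\text{s}},D_1^{\text{s}},D_0^{\text{t}},D_1^{\text{t}},D_2^{\text{t}}$ inside the Chow ring $A^*(X_3)$, by tracing the components of the pullback of $V(s)$ and $V(t)$ through the successive blowups. Concretely, for Resolution I (blowups $(x,y,s|e_1)$, $(x,y,t|w_1)$, $(y,w_1|w_2)$), the pullback $f^*V(s)$ splits as a sum of the proper transform $D_0^{\text{s}}$ and the divisor $D_1^{\text{s}}=E_1$ (with appropriate multiplicities coming from the Weierstrass equation), while $f^*V(t)$ splits into the proper transform $D_0^{\text{t}}$ and the exceptional divisors $W_1,W_2$, which get identified with $D_1^{\text{t}}$ and $D_2^{\text{t}}$ after intersecting with $[Y]$. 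The analogous decompositions are obtained for Resolutions II--IV by simply following the corresponding blowup centers. This step is entirely local and amounts to substituting the exceptional variables into the defining Weierstrass equation and collecting the components.

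Once the classes of the $D_i$ are identified, set
\begin{equation*}
\mathcal{D}=\psi_0 D_0^{\text{s}}+\psi_1 D_1^{\text{s}}+\phi_0 D_0^{\text{t}}+\phi_1 D_1^{\text{t}}+\phi_2 D_2^{\text{t}}\in A^*(X_3)[\psi_0,\psi_1,\phi_0,\phi_1,\phi_2].
\end{equation*}
Then
\begin{equation*}
\mathscr{F}_{trip}=\int_B \pi_* f_{1*} f_{2*} f_{3*}\bigl(\mathcal{D}^3\cap [Y]\bigr),
\end{equation*}
where $[Y]$ is the class of the proper transform of the Weierstrass hypersurface, as computed in the proof of Theorem \ref{Thm:Eulerchar}. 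Expanding the cube gives a polynomial in the formal variables $\psi_i,\phi_j$ whose coefficients are classes in $A^*(X_3)$ expressible as monomials in $H$, $\pi^*L$, $\pi^*S$, $\pi^*T$, $E_1$, $W_1$, $W_2$ (and the analogous exceptional classes in the other resolutions). Each such coefficient is then pushed forward to $A^*(X_0)$ by applying Theorem \ref{Thm:Push} iteratively to $f_3,f_2,f_1$, using the list of centers $Z_i^{(j)}$ recorded in the proof of Theorem \ref{Thm:Eulerchar} (suitably adapted to the order of blowups of the relevant resolution). Finally, the projective bundle pushforward $\pi_*$ is handled by Theorem \ref{Thm:PushH}, yielding a rational polynomial in $L,S,T,c_1(TB),\dots$; restriction to the threefold case and use of the Calabi--Yau condition can be applied a posteriori, but the formulas in the statement are written already without $c_1(TB)$ since the Chern classes of $B$ drop out of triple intersections supported on fibral divisors.

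The hardest step is the bookkeeping of the first one: correctly splitting $f^*V(s)$ and $f^*V(t)$ into their fibral components with the right multiplicities, and correctly identifying which exceptional divisor of a blowup sequence restricts to which node of the fiber over $S\cap T$. In Resolution II this is particularly delicate because the second blowup $(y,p_1,t|w_1)$ has a non-Cartier center (it lives on the exceptional divisor of the previous blowup), so the relation between $W_1$ and the fibral divisor $D_1^{\text{t}}$ must be read off from the strict transform equation rather than from a simple pullback formula; similarly for Resolutions III and IV, the order in which $t$ and $s$ are blown up swaps the roles of the $D_i^{\text{t}}$ with those of Resolution I, producing the sign patterns visible in the stated polynomials. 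Once the fibral classes are pinned down, the remaining work is a mechanical application of Theorems \ref{Thm:AluffiCBU}, \ref{Thm:Push} and \ref{Thm:PushH}, and the four polynomials in the statement follow by direct expansion.
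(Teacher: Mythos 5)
Your proposal follows essentially the same route as the paper's own proof: identify the classes of the five fibral divisors in $A^*(X_3)$ (e.g.\ $[D_0^{\text{s}}]=S-E_1$, $[D_1^{\text{s}}]=E_1$, $[D_0^\text{t}]=T-W_1$, $[D_1^\text{t}]=W_1-W_2$, $[D_2^\text{t}]=2W_2-W_1$ in Resolution I), cube the formal combination against the class $3H+6L-2E_1-2W_1-W_2$ of $Y$, and push forward with Theorems \ref{Thm:Push} and \ref{Thm:PushH}. You correctly flag the only delicate point (matching exceptional classes to fibral components, which is not the naive $W_i\leftrightarrow D_i^\text{t}$), so the proposal is sound and matches the paper's argument.
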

\begin{proof}
We give the proof for the case of resolution I discussed in detail in Section  \ref{Sec:IIII0SResI}; the other cases follow the same pattern. 
$$
\begin{aligned}
\mathscr{F}_{trip} &=
\int_Y\Big[\Big(
\psi_0 D_0^{\text{s}} +\psi_1 D_1^{\text{s}} +\phi_0 D_0^\text{t} +\phi_1 D_1^\text{t} +\phi_2 D_2^\text{t}  
\Big)^3\Big]\\
&=\int_{X_3}\Big[\Big(
\psi_0 D_0^{\text{s}} +\psi_1 D_1^{\text{s}} +\phi_0 D_0^\text{t} +\phi_1 D_1^\text{t} +\phi_2 D_2^\text{t}  
\Big)^3(3H+6L-2E_1-2W_1-W_2)\Big] \\
&=\int_{X_0}f_{1*}f_{2*} f_{3*}\Big[\Big(
\psi_0 D_0^{\text{s}} +\psi_1 D_1^{\text{s}} +\phi_0 D_0^\text{t} +\phi_1 D_1^\text{t} +\phi_2 D_2^\text{t}  
\Big)^3(3H+6L-2E_1-2W_1-W_2)\Big] \\
&=\int_B\pi_* f_{1*}f_{2*} f_{3*} \Big[\Big(
\psi_0 D_0^{\text{s}} +\psi_1 D_1^{\text{s}} +\phi_0 D_0^\text{t} +\phi_1 D_1^\text{t} +\phi_2 D_2^\text{t}  
\Big)^3   (3H+6L-2E_1-2W_1-W_2)\Big].
\end{aligned}
$$
The classes of the fibral divisors in the Chow ring of $X_3$ are  
$$
[D_0^{\text{s}}]=S-E_1 , \quad [D_1^{\text{s}}]=E_1,\quad  [D_0^\text{t}]=T-W_1, \quad [D_1^\text{t}]=W_1-W_2, \quad  [D_2^\text{t}]=2W_2-W_1. 
$$
Denoting by $M$ an arbitrary divisor in the  class of the Chow ring of the base, The nonzero product intersection numbers of $M$, $H$, $E_1$, $W_1$, and $W_2$ are  
$$
\begin{aligned}
& \int_Y H^3=27L^2, \quad \int_Y E_1^3=-2 S (2 L+S), \quad \int_Y W_1^3=-2 T (2 L-S+T),    \quad \int_Y W_2^3=-T (5 L-2 S+T),\\
&  \int_Y W_1^2 E_1= -2 S T, \quad \int_Y W_1^2 W_2=T (-2 L+S-T), \quad \int_Y W_2^2 E_1=-2 S T, \quad \int_Y W_2^2 W_1= T (-L+S-2 T),\\
& \int_Y H^2 M=-9 L M, \quad \int_Y H M^2=3 M^2,\quad \int_Y E_1 W_1 W_2 =-ST, \\
& \int_Y MH^2=-9 L M, \quad \int_Y M E_1^2=-2S M,\quad \int_Y M W_1^2=-2T M,\quad \int_Y M W_2^2 =-2T M.
\end{aligned}
$$
The triple intersection numbers of the fibral divisors follow from these by simple linearity. 

\end{proof}
The triple intersection polynomials computed in Theorem \ref{Thm:TripleInt} are very different from each other in chambers I, II, III, and IV.

\subsection{Hyperplane arrangement} \label{sec:hyper}
We consider the  semi-simple Lie algebra $$\mathfrak{g}=\text{A}_1\oplus \mathfrak{g}_2.$$
 An irreducible representation of $\text{A}_1\oplus \mathfrak{g}_2$ is always a tensor product $\bf{R}_1\otimes \bf{R}_2$, where   $\bf{R}_1$ and $\bf{R}_2$ are respectively  irreducible representations of A$_1$  and $\mathfrak{g}_2$.  Following a common convention in physics, we denote a  representation  by its dimension in bold character.
The weights are denoted by $\varpi^I_j$ where the upper index I denotes the representation $\bf{R}_I$ and the lower index $j$ denotes a particular  weight of the representation $\bf{R}_I$. 
A weight of a representation of $\text{A}_1\oplus \mathfrak{g}_2$ is denoted by a triple $(a;b,c)$ such that $(a)$ is a weight of A$_1$ and $(b,c)$ is a weight of $\mathfrak{g}_2$, all in the basis of fundamental weights. 
 A vector $\phi$ of the coroot space of $\text{A}_1 \oplus\mathfrak{g}_2$   is written as $\phi=(\psi_1; \phi_1, \phi_2)$, where $\psi_1$ is the projection of $\phi$ on the fundamental coroot of A$_1$ and $(\phi_1,\phi_2)$ are the coordinates of the projection of $\phi$ along the coroot of G$_2$ expressed in the basis of fundamental coroots.  
 Each weight $\varpi$ defines a linear form by the natural evaluation on coroot vectors. We recall that fundamental coroots are dual to fundamental weights.

 We attach to an \sug-model the representation 
\begin{equation}
\mathbf{R}=(\bold{3},\bold{1})\oplus (\bold{1},\bold{14}) \oplus(\bold{2},\bold{7})\oplus (\bold{2},\bold{1})\oplus (\bold{1},\bold{7}).
\end{equation}
 The weights $\varpi^I_j$ of each summand are  listed in Table \ref{Table:Weight}.

\begin{table}[H]
\begin{center}
$
\begin{array}{|c|rrr|}
\hline
\text{Representation}& \multicolumn{3}{c|}{\text{Weights}}\\
\hline
(\bold{2},\bold{1}) &\quad \varpi^{(\bf{2},\bf{1})}_1=(1;0,0)\  \   &\varpi^{(\bf{2},\bf{1})}_2=(-1;0,0)    \  \   & \\
\hline
& \quad \varpi^{(\bf{1},\bf{7})}_1=(0;1,0)\   \  & \varpi^{(\bf{1},\bf{7})}_2=(0;-1,1)\       \  &  \varpi^{(\bf{1},\bf{7})}_3=(0;2,-1) \\
(\bold{1},\bold{7}) & &  \varpi^{(\bf{1},\bf{7})}_4=(0;0,0)& \\
& \quad \varpi^{(\bf{1},\bf{7})}_5=(0;-2,1)\   \  & \varpi^{(\bf{1},\bf{7})}_6=(0;1,-1)\       \  &  \varpi^{(\bf{1},\bf{7})}_7=(0;-1,0) \\
\hline
& \quad \varpi^{(\bf{2},\bf{7})}_1=(1;1,0)\   \  & \varpi^{(\bf{2},\bf{7})}_2=(1;-1,1)\       \  &  \varpi^{(\bf{2},\bf{7})}_3=(1;2,-1) \\
& &  \varpi^{(\bf{2},\bf{7})}_4=(1;0,0)& \\
(\bold{2},\bold{7})& \quad \varpi^{(\bf{2},\bf{7})}_5=(1;-2,1)\   \  & \varpi^{(\bf{2},\bf{7})}_6=(1;1,-1)\       \  &  \varpi^{(\bf{2},\bf{7})}_7=(1;-1,0) \\
& \quad \varpi^{(\bf{2},\bf{7})}_8=(-1;1,0)\   \  & \varpi^{(\bf{2},\bf{7})}_9=(-1;-1,1)\       \  &  \varpi^{(\bf{2},\bf{7})}_{10}=(-1;2,-1) \\
& &  \varpi^{(\bf{2},\bf{7})}_{11}=(-1;0,0)& \\
& \quad \varpi^{(\bf{2},\bf{7})}_{12}=(-1;-2,1)\   \  & \varpi^{(\bf{2},\bf{7})}_{13}=(-1;1,-1)\       \  &  \varpi^{(\bf{2},\bf{7})}_{14}=(-1;-1,0) \\
\hline
\end{array}
$
\end{center}
\caption{Weights of the representations of $\text{A}_1\oplus \mathfrak{g}_2$. \label{Table:Weight}}
\end{table}

We would like to  study the arrangement of hyperplanes  perpendicular to the weights of the representation $\mathbf{R}$ inside the dual  fundamental Weyl chamber of $\text{A}_1 \oplus \mathfrak{g}_2$. 
The only representation that will contribute to the interior walls is the bifundamental $(\bold{2},\bold{7})$.

\begin{thm}
The hyperplane arrangement $\text{I}(\text{A}_1\oplus\mathfrak{g}_2,\mathbf{R})$ with $\mathbf{R}=(\bold{2},\bold{7})$ has four chambers whose sign vectors and whose adjacency graph is  
given in Figure  \ref{fig:chambers}. A choice of a sign vector is  
$(\varpi^{(\bf{2},\bf{7})}_5,   \varpi^{(\bf{2},\bf{7})}_6,  \varpi^{(\bf{2},\bf{7})}_7)$. With respect to it, the chambers are listed in Table \ref{Table:Phases}. 
\end{thm}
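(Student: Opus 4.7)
My strategy is to identify the weights of $(\mathbf{2},\mathbf{7})$ that genuinely cut the dual fundamental Weyl chamber $\mathscr{C}$ of $\text{A}_1\oplus\mathfrak{g}_2$, and then analyze the combinatorics of the resulting arrangement. In the coroot basis $(\psi_1;\phi_1,\phi_2)$, positivity of the simple roots reads $\psi_1>0$ together with the $\mathfrak{g}_2$ Cartan-matrix inequalities $2\phi_1-\phi_2>0$ and $-3\phi_1+2\phi_2>0$, i.e.\ $\tfrac{3}{2}\phi_1<\phi_2<2\phi_1$. Thus $\mathscr{C}$ is a cone strictly contained in the positive orthant.

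Next I would test each of the fourteen weights $\varpi=(a;b,c)$ of $(\mathbf{2},\mathbf{7})$ by checking whether the linear form $a\psi_1+b\phi_1+c\phi_2$ takes both signs on $\mathscr{C}$; opposite weights define the same hyperplane, so only seven pairs need examination. Weights of definite sign (the pair through $\varpi_1^{(\mathbf{2},\mathbf{7})}$) and weights whose kernel lies on the boundary face $\psi_1=0$ (the pair through $\varpi_4^{(\mathbf{2},\mathbf{7})}$) are immediately eliminated. The subtle pairs are those of $\varpi_2^{(\mathbf{2},\mathbf{7})}=(1;-1,1)$ and $\varpi_3^{(\mathbf{2},\mathbf{7})}=(1;2,-1)$, which would be interior walls of the mere positive orthant but are killed by the $\mathfrak{g}_2$ inequalities: the bound $\phi_2>\tfrac{3}{2}\phi_1>\phi_1$ makes $\varpi_2(\phi)=\psi_1+(\phi_2-\phi_1)$ strictly positive, and $2\phi_1>\phi_2$ makes $\varpi_3(\phi)=\psi_1+(2\phi_1-\phi_2)$ strictly positive. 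This leaves exactly the three pairs associated with $\varpi_5^{(\mathbf{2},\mathbf{7})}$, $\varpi_6^{(\mathbf{2},\mathbf{7})}$, $\varpi_7^{(\mathbf{2},\mathbf{7})}$.

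To count the subchambers cleanly I would change to the fundamental coweight basis $(\psi_1;t_1,t_2)=(\psi_1;\,2\phi_1-\phi_2,\,-3\phi_1+2\phi_2)$, which sends $\mathscr{C}$ to the positive orthant. Direct substitution yields
\begin{equation*}
\varpi_5^{(\mathbf{2},\mathbf{7})}(\phi)=\psi_1-t_1,\qquad \varpi_6^{(\mathbf{2},\mathbf{7})}(\phi)=\psi_1-(t_1+t_2),\qquad \varpi_7^{(\mathbf{2},\mathbf{7})}(\phi)=\psi_1-(2t_1+t_2),
\end{equation*}
and since $t_1,t_2>0$ the strict chain $0<t_1<t_1+t_2<2t_1+t_2$ shows that $\psi_1$ falls into one of four intervals, producing four open subchambers arranged in a chain. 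Reading off the signs along $(\varpi_5^{(\mathbf{2},\mathbf{7})},\varpi_6^{(\mathbf{2},\mathbf{7})},\varpi_7^{(\mathbf{2},\mathbf{7})})$ gives $(-,-,-)$, $(+,-,-)$, $(+,+,-)$, $(+,+,+)$; adjacent subchambers differ in exactly one sign, matching the adjacency graph in Figure~\ref{fig:chambers} and the enumeration of Table~\ref{Table:Phases}.

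The main obstacle is the discarding of $\varpi_2^{(\mathbf{2},\mathbf{7})}$ and $\varpi_3^{(\mathbf{2},\mathbf{7})}$ in the second step: it is precisely the non-simply-laced $\mathfrak{g}_2$ Weyl inequalities (and not merely coroot positivity) that achieve this; any weaker description of $\mathscr{C}$ would leave extra interior walls and the chamber count would be too large. Once this case analysis is completed, the coweight change of basis reduces the remaining combinatorics to an elementary ordering of three real numbers.
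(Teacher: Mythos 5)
Your proof is correct and follows the same strategy as the paper's own: restrict to the open dual fundamental Weyl chamber $\psi_1>0$, $2\phi_1-\phi_2>0$, $-3\phi_1+2\phi_2>0$, observe that among the weights of $(\mathbf{2},\mathbf{7})$ only $\varpi^{(\bf{2},\bf{7})}_5,\varpi^{(\bf{2},\bf{7})}_6,\varpi^{(\bf{2},\bf{7})}_7$ cut its interior, and enumerate the resulting sign vectors, which form a chain of four chambers. Your passage to the fundamental coweight coordinates $(t_1,t_2)=(2\phi_1-\phi_2,-3\phi_1+2\phi_2)$, which orders the three walls as $t_1<t_1+t_2<2t_1+t_2$, is a cleaner way to organize the final case analysis than the paper's direct sign chase, and your evaluations of the three linear forms are the ones actually consistent with Table \ref{Table:Phases} (the sign vector displayed in the paper's proof contains typographical slips in the expressions for $\varpi^{(\bf{2},\bf{7})}_5$ and $\varpi^{(\bf{2},\bf{7})}_7$).
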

\begin{proof}
The open dual fundamental Weyl chamber is the half cone defined by the  the positivity of the linear form induced by the simple roots: 
\begin{equation}
\psi_1>0, \quad 2\phi_1-\phi_2>0, \quad -3\phi_1+2\phi_2>0. 
\end{equation}
There are  only three  hyperplanes intersecting  the interior of the  fundamental Weyl chamber: 
$$
\varpi^{(\bf{2},\bf{7})}_5, \     \varpi^{(\bf{2},\bf{7})}_6,  \   \varpi^{(\bf{2},\bf{7})}_7.
$$
We use them in the order $(\varpi^{(\bf{2},\bf{7})}_5,   \varpi^{(\bf{2},\bf{7})}_6,  \varpi^{(\bf{2},\bf{7})}_7)$, the sign vector is 
 $$(\psi_1-2\phi_1-\phi_2,\psi_1 +\phi_1 - \phi_2,\psi_1 -  \phi_2).$$
 First we consider when $\psi_1-\phi_1>0$. Then $\varpi_5^{\bf{2,7}}=\psi_1-\phi_1>0$, $\varpi_6^{\bf{2,7}}=\psi_1+\phi_1-c=(\psi_1-\phi_1)+(2\phi_1-c)>0$, and $\varpi_7^{\bf{2,7}}=\psi_1-2\phi_1+\phi_2=(\psi_1-\phi_1)+(2\phi_1-\phi_2)+(-3\phi_1+2\phi_2)=\varpi_6^{\bf{2,7}}+(-3\phi_1+2\phi_2)>0$.

When $\psi_1-\phi_1<0$, $\varpi_5^{\bf{2,7}}=\psi_1-\phi_1<0$. Then we have either $\varpi_6^{\bf{2,7}}=\psi_1+\phi_1-\phi_2=(\psi_1-\phi_1)+(2\phi_1-\phi_2)>0$, or $\varpi_6^{\bf{2,7}}=\psi_1+\phi_1-\phi_2=(\psi_1-\phi_1)+(2\phi_1-\phi_2)<0$. If $\varpi_6^{\bf{2,7}}>0$, it follows that $\varpi_7^{\bf{2,7}}=\varpi_6^{\bf{2,7}}+(-3\phi_1+2\phi_2)>0$. When $\varpi_6^{\bf{2,7}}=\psi_1+\phi_1-\phi_2=(\psi_1-\phi_1)+(2\phi_1-\phi_2)<0$, we can have both $\varpi_7^{\bf{2,7}}>0$ and $\varpi_7^{\bf{2,7}}<0$.
 See Table \ref{Table:Phases}. 
\end{proof}

\begin{table}[H]
\begin{center}
$
\begin{array}{|c|ccc||c|}
\hline
\text{ Subchambers} &   \varpi^{(\bf{2},\bf{7})}_5 &   \varpi^{(\bf{2},\bf{7})}_6  &  \varpi^{(\bf{2},\bf{7})}_7 &  \text{Explicit description}  \\
 \hline  
\textcircled{1} & + & + & + &   
0<\frac{1}{2} \phi_2<\phi_1<\frac{2}{3} \phi_2,\quad\quad  \phi_1<\psi_1
\\
 \hline  
\textcircled{2} & + & +& -  & 
0<\frac{1}{2}\phi_2 <\phi_1<\frac{2}{3} \phi_2,\quad \phi_2-\phi_1<\psi_1<\phi _1
\\
 \hline  
\textcircled{3} & + & -& - &  
0<\frac{3}{2} \phi_1<\phi_2<2 \phi_1,\quad 2 \phi_1-\phi_2<\psi_1<\phi_2-\phi_1
 \\
 \hline  
\textcircled{4} & - & - & - & 
0<\frac{3}{2} \phi_1<\phi_2<2 \phi_1,\quad 0<\psi_1<2 \phi_1-\phi_2
 \\
 \hline  
\end{array}
$
\end{center}
\caption{Chambers of the hyperplane arrangement I($\text{A}_1\oplus\text{G}_2, \mathbf{R})$ with $\mathbf{R}=(\bf{2},\bf{7})$.  
 The wall between the chamber $i$  and $i+1$ ($i=1,2,3$)  is given by the weight  $\varpi^{(\bf{2},\bf{7})}_{8-i}$. 
\label{Table:Phases}  }
\end{table}

\subsection{Flops}\label{sec:Flops}
In this section, we discuss the flops between the resolutions I, II, III, IV. 
We identify the flopping curves in Table \ref{flopcurves}. These are curves whose weights differ only by a sign.

\begin{table}[H]
\begin{center}
\begin{tabular}{|c c c c c c c|c|}
\hline 
\multicolumn{7}{|c|}{Flopping curves} & Weight \\
\hline
\hline
Resolution I: & $\eta_1^{0A}$ & $[1;-1,0]$ \, $(\bf{2,7})$ & $\leftrightarrow$ & Resolution II: & $\eta_{01}^2$ & $[-1;1,0]$ \, $\bf{(2,7)}$ & $\omega_7^{(\bf{2,7})}$ \\
\hline 
Resolution II: & $\eta_1^{02}$ & $[1;1,-1]$ \, $\bf{(2,7)}$ & $\leftrightarrow$ & Resolution III: & $\eta_0^{12}$ & $[-1;-1,1]$ \, $\bf{(2,7)}$ & $\omega_6^{(\bf{2,7})}$ \\
\hline 
Resolution III: & $\eta_1^{12}$ & $[1;-2,1]$ \, $\bf{(2,7)}$ & $\leftrightarrow$ & Resolution IV: & $\eta_0^{2B}$ & $[-1;2,-1]$ \, $\bf{(2,7)}$ & $\omega_5^{(\bf{2,7})}$ \\
\hline 
\end{tabular}
\caption{ Flopping curves between different crepant resolutions of the \sug-model. 
 \label{flopcurves}}
\end{center}
\end{table}

\section{The Crepant Resolutions and Fiber Structures \label{sec:SU2G2Res}} 

In this section, we study the fibral structure of the elliptic fibrations obtained by the  crepant resolutions of the \sug-model given by the  Weierstrass model
\begin{equation}
Y_0: \quad y^{2}=x^{3}+fs t^{2}x+gs^{2}t^{3}.
\end{equation}
The resolutions are given by the sequence of blowups listed in Section \ref{SU2G2Collision}. 
We analyze the fiber structure of each of these crepant resolutions and determine the weights of the rational curves produced by the degeneration over codimension-two points. 
These weights are important to determine the representation $\mathbf{R}$. 
We denote the fibral divisors over $S$  and $T$  by $D_a^\text{s}$ and $D_a^\text{t}$ respectively. Their generic fibers are respectively written as $C_a^\text{s}$ and $C_a^\text{t}$.  
We will focus on analyzing the collision III+I$_0^{*ns}$ as we already know the behavior of the III-model and the G$_2$-model. 
\subsection{Resolution I}\label{Sec:IIII0SResI}
Resolution I is given by the following sequence of  blowups:
\begin{equation} 
\begin{tikzcd}[column sep=huge] 
X_0 \arrow[leftarrow]{r} {\displaystyle (x,y, s|e_1)} & X_1 \arrow[leftarrow]{r} {\displaystyle (x,y, t| w_1)} & X_2 \arrow[leftarrow]{r} {\displaystyle (y, w_1| w_2)} & X_3  
\end{tikzcd}  
\end{equation}
The proper transform of the Weierstrass model is
\begin{equation}Y: \quad w_{2}y^2=w_{1}(e_{1}x^{3}+fst^{2}x+gs^{2}t^{3}).
\label{eq:PTofIIII0s}
\end{equation}
The projective coordinates are then given by
\begin{equation}
[e_{1}w_{1}w_{2}x\,;\, e_{1}w_{1}w_{2}^{2}y\,;\, z=1][w_{1}w_{2}x\,;\, w_{1}w_{2}^{2}y\,;\, s][x\,;\, w_{2}y\,;\, t][y\,;\, w_{1}].
\end{equation}
The fibral divisors are given by $se_{1}=0$ for type III and $tw_{1}w_{2}=0$ for type I$_0^{* ns}$:
\begin{align} \label{eq:divisors}
\text{III}:
\begin{cases}
D_{0}^{\text{s}} :&  s=w_{2}y^{2}-w_{1}e_{1}x^{3}=0, \\
D_{1}^{\text{s}} :& e_{1}=w_{2}y^2-st^{2}w_{1}(fx+    gst)=0.
\end{cases}
\quad 
\text{I}_0^{* ns}:
\begin{cases}
D_{0}^\text{t} :& t=w_{2}y^{2}-w_{1}e_{1}x^{3}=0, \\
2D_{1}^\text{t} :& w_{1}=w_{2}=0, \\
D_{2}^\text{t} :& w_{2}=e_{1}x^{3}+fst^{2}x+gs^{2}t^{3}=0.
\end{cases}
\end{align}

On the intersection of $S$ and $T$, we see the following curves:
\begin{equation}
\text{On} \   S\cap T:
\begin{cases}
\begin{array}{cl}
D_{0}^\text{s}\cap D_{0}^{t}\rightarrow & \eta_0^0: s=t=w_{2}y^{2}-w_{1}e_{1}x^{3}=0,\\
D_{1}^\text{s}\cap D_{0}^{t}\rightarrow &\eta_1^{02}: e_{1}=t=w_{2}=0,\quad \eta_1^{0A}:e_{1}=t=y=0,\\
D_{1}^\text{s}\cap D_{1}^{t}\rightarrow &\eta_1^{12}: e_{1}=w_{1}=w_{2}=0,\\
D_{1}^\text{s}\cap D_{2}^{t}\rightarrow &\eta_1^{02}: e_{1}=w_{2}=t=0, \quad \eta_1^2: e_{1}=w_{2}=fx+gst=0.
\end{array}
\end{cases}
\end{equation}
Hence we can deduce that the five fibral divisors split in the following way to produce the fiber in Figure \ref{fig:IIII0s.Res1.cd2}, which is a fiber of type IV$^*$ with contracted nodes.
\begin{equation}
\text{On} \   S\cap T:
\begin{cases}
& \begin{tikzcd}  C_0^{\text{s}} \arrow[rightarrow]{r}  & \eta_0^0,  \end{tikzcd} \\
& \begin{tikzcd}  C_1^{\text{s}} \arrow[rightarrow]{r}  & 2\eta_1^{0A}+3\eta_1^{02}+2\eta_1^{12}+\eta_1^2,  \end{tikzcd}\\
& \begin{tikzcd}  C_0^\text{t} \arrow[rightarrow]{r}  & \eta_0^0+\eta_1^{02}+2\eta_1^{0A},  \end{tikzcd}\\
& \begin{tikzcd}  C_1^\text{t} \arrow[rightarrow]{r}  & \eta_1^{12},  \end{tikzcd}\\
& \begin{tikzcd}  C_2^\text{t} \arrow[rightarrow]{r}  & 2\eta_1^{02}+\eta_1^2.  \end{tikzcd}\\
\end{cases}
\end{equation}
\begin{figure}[H]
\begin{center}
\includegraphics{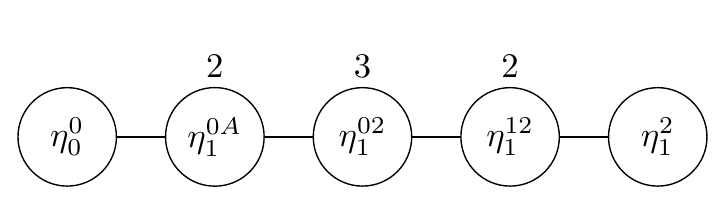}
\caption{Codimension-two Collision of \sug -model  at $S\cap T$, Resolution I. This fiber is of  type III$^*$ (with dual graph $\widetilde{\text{E}}_7$) with contracted nodes.} 
\label{fig:IIII0s.Res1.cd2}
\end{center}
\end{figure}
In order to get the weights of the curves, the intersection numbers are computed between the codimension-two curves and the fibral divisors.

\begin{table}[H]
\begin{center}
\begin{tabular}{|c|c|c|c|c|c|c|c|}
\hline 
  & $D^{\text{s}}_{0}$ & $D^{\text{s}}_{1}$ & $D^\text{t}_{0}$ & $D^\text{t}_{1}$ & $D^\text{t}_{2}$& Weight& Representation\\
\hline 
\hline 
$\eta_0^0$ & -2 & 2 & 0 & 0 & 0 & [-2;0,0] & $\bf{(3,1)}$\\
\hline 
$\eta_1^{0A}$ & 1 & -1 & -1 & 0 & 1 & [1;-1,0] & $\bf{(2,7)}$\\
\hline 
$\eta_1^{02}$ & 0 & 0 & 0 & 1 & -2 & [0;2,-1] & $\bf{(1,7)}\subset \bf{(1,14)}$\\
\hline 
$\eta_1^{12}$ & 0 & 0 & 1 & -2 & 3 & [0;-3,2] & $\bf{(1,14)}$\\
\hline 
$\eta_1^2$ & 0 & 0 & 0 & 1 & -2 & [0;2,-1] & $\bf{(1,7)}\subset \bf{(1,14)}$\\
\hline 
\end{tabular}
\end{center}
\caption{Weights and representations of the components of the generic curve over $S\cap T$  in Resolution I of the \sug -model. See Section \ref{sec:Sat} for more information on the interpretation of these representations.} 
\end{table}

The fiber of Figure \ref{fig:IIII0s.Res1.cd2}  specializes further  when $f=0$: 
\begin{equation}
\begin{tikzcd}  \eta_1^2 \arrow[rightarrow]{r}  & \eta_1^{02}.  \end{tikzcd}
\end{equation}
This corresponds to the non-Kodaira diagram in Figure \ref{fig:IIII0s.Res1.cd3},  which is a fiber of type II$^*$ with contracted nodes.
\begin{figure}[H]
\begin{center}
\includegraphics{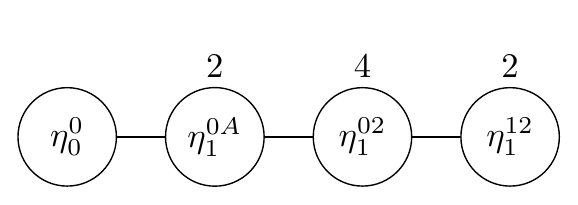}
\caption{Codimension-three enhancement of \sug -model at $S\cap T\cap V(f)$, Resolution I. This fiber is of type II$^*$ (with dual graph $\widetilde{\text{E}}_8$) with contracted nodes.}
\label{fig:IIII0s.Res1.cd3}
\end{center}
\end{figure}

\subsection{Resolution II}\label{Sec:IIII0SResII}
In this section, we study Resolution II in detail. Resolution II requires a first blowup that does not have a smooth center; it is useful to rewrite equation \eqref{Eq:Weierstrass} as
\begin{align}
Y_0 :
\begin{cases}
&y^2= x^{3}+fp_{0}t x+gp_{0}^{2}t , \\
&p_0=st.
\end{cases}
\end{align}
Resolution II is given by the following sequence of blowups
\begin{equation} 
\begin{tikzcd}[column sep=huge] 
X_0 \arrow[leftarrow]{r} {\displaystyle (x,y, p_0|p_1)} &  X_1 \arrow[leftarrow]{r} {\displaystyle (y,t,p_1| w_1)} &  X_2 \arrow[leftarrow]{r} {\displaystyle (t, p_0| w_2)} &  X_3  
\end{tikzcd} , 
\end{equation}
where $X_0=\mathbb{P}[\mathscr{O}_B\oplus\mathscr{L}^{\otimes 2}\oplus \mathscr{L}^{\otimes 3}]$. The projective coordinates are then 
\begin{equation}
[p_{1}w_1x\,:\, p_{1}w_1^{2}y\,:\, z=1][x\,:\, w_1y\,:\, p_0 w_2][y\,:\, tw_2\,:\, p_1][t\,:\, p_0],
\end{equation}
and the proper transform is 
\begin{align}
Y:
\begin{cases}
w_{1}y^2= p_{1}x^{3}+fp_{0}tw_2^{2}x+gp_{0}^{2}tw_{2}^{3},\\
p_0p_1=st.
\end{cases}
\end{align}
$X_1=Bl_{(x,y,p_0)} X_0$ has double point singularities at the ideal $(p_0,p_1, s,t)$.  
Recall that we have two curves from III and three curves from I$_0^{*}$ individually. We denote by $D^{\text{s}}_a$ and $D^\text{t}_a$ the fibral divisors that project to $S$ and $T$:
\begin{equation}
\begin{aligned}
\text{III}:
\begin{cases}
& D^{\text{s}}_0:  s=p_0=w_1y^2-p_{1}x^{3}=0 \\
 & D^{\text{s}}_1:  s=p_{1}=w_1y^2-(p_1x^3+p_0tw_2^2(fx+gp_0w_2)=0 
 \end{cases}
 \\
 \text{I}_0^{*}:
 \begin{cases}
& D^\text{t}_0:  w_2=p_0p_1-st=w_1y^2-p_1x^3=0 \\
 & D^\text{t}_1:  t=p_1=w_1=0 \\
 & D^\text{t}_2:  w_1=p_0p_1-st=p_{1}x^{3}+p_0tw_2^{2}(fx+gp_0w_2)=0
\end{cases}
\end{aligned}
\end{equation}
At the intersection of $S$ and $T$, the fiber enhances to a non-Kodaira fiber presented in Figure \ref{fig:IIII0s.Res2.cd2}, which is a fiber of type III$^*$ with contracted nodes. This is realized by the following splitting of curves. 
\begin{equation}
\text{On} \   S\cap T:
\begin{cases}
& \begin{tikzcd}  C^{\text{s}}_0 \arrow[rightarrow]{r}  & \eta_0^0+\eta_{01}^2  \end{tikzcd} \\
& \begin{tikzcd}  C^{\text{s}}_1 \arrow[rightarrow]{r}  & \eta_{01}^2 +3\eta_1^{02}+2\eta_1^{12}+\eta_1^2  \end{tikzcd}\\
& \begin{tikzcd}  C^\text{t}_0 \arrow[rightarrow]{r}  & \eta_0^0+\eta_1^{02}  \end{tikzcd}\\
& \begin{tikzcd}  C^\text{t}_1 \arrow[rightarrow]{r}  & \eta_1^{12}  \end{tikzcd}\\
& \begin{tikzcd}  C^\text{t}_2 \arrow[rightarrow]{r}  & 2\eta_{01}^2+2\eta_1^{02}+\eta_1^{12}+\eta_1^2  \end{tikzcd}\\
\end{cases}
\end{equation}
The curves at the intersection are given by
\begin{equation}
\text{On} \   S\cap T:
\begin{cases}
\begin{array}{cl}
D^{\text{s}}_{0}\cap D^\text{t}_{1}\rightarrow &\eta_0^0: s=p_{0}=w_2=w_1y^2-p_{1}x^{3}=0, \\
D^{\text{s}}_{0}\cap D^\text{t}_{2}\rightarrow&  \eta_{01}^{2}: s=p_{0}=w_1=p_1=0,\\
D^{\text{s}}_{1}\cap D^\text{t}_{0}\rightarrow &\eta_1^{02}:   s=p_{1}=w_2=w_1=0, \\
D^{\text{s}}_{1}\cap D^\text{t}_{1}\rightarrow & \eta_{1}^{12}: s=p_{1}=t=w_1=0, \\
D^{\text{s}}_{1}\cap D^\text{t}_{2}\rightarrow & \eta_1^2: s=p_{1}=w_1=fx+gp_0w_2=0,\quad \eta_{01}^2:  s=p_{1}=w_1=p_{0}=0, \\
				&\eta_{1}^{12}:  s=p_{1}=w_1=t=0, \quad\quad\quad\qquad  
				\eta_{1}^{02}:  s=p_{1}=w_1=w_2=0.\\
\end{array}
\end{cases}
\end{equation}

\vspace{-1cm}
\begin{figure}[H]
\begin{center}
\includegraphics{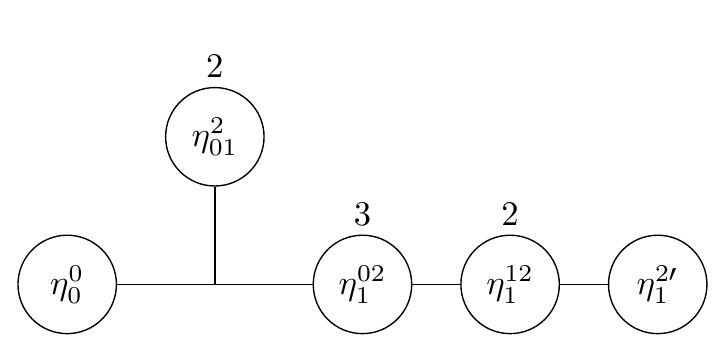}
\caption{Codimension-two Collision of \sug -model at $S\cap T$, Resolution II. This fiber is of type III$^*$ (with dual graph $\widetilde{\text{E}}_7$) with contracted nodes.} 
\label{fig:IIII0s.Res2.cd2}
\end{center}
\end{figure}

\begin{table}[htb]
\begin{center}
\begin{tabular}{|c|c|c|c|c|c|c|c|}
\hline 
& $D^{\text{s}}_{0}$ & $D^{\text{s}}_{1}$ & $D^\text{t}_{0}$ & $D^\text{t}_{1}$ & $D^\text{t}_{2}$& Weight& Representation\\
\hline 
\hline 
$\eta_0^0$ & -1 & 1 & -1 & 0 & 1 & [-1;-1,0] & $\bf{(2,7)}$\\
\hline 
$\eta_{01}^2$ & -1 & 1 & 1 & 0 & -1 & [-1;1,0] & $\bf{(2,7)}$\\
\hline 
$\eta_1^{02}$ & 1 & -1 & -1 & 1 & -1 & [1;1,-1] &   $\bf{(2,7)}$\\
\hline 
$\eta_1^{12}$ & 0  & 0 & 1 & -2 & 3 & [0;-3,2] &   $\bf{(1,14)}$\\
\hline 
$\eta_1^2$ & 0 & 0 & 0 & 1 & -2 & [0;2,-1] & $\bf{(1,7)}\subset \bf{(1,14)}$\\
\hline 
\end{tabular}
\end{center}
\caption{Weights and representations of the components of the generic curve over $S\cap T$  in Resolution II of the \sug -model. See Section \ref{sec:Sat} for more information on the interpretation of these representations.}
\end{table}

The fiber enhances further over $f=0$:
\begin{equation}
\begin{tikzcd}  \eta_1^2 \arrow[rightarrow]{r}  & \eta_{01}^2+\eta_1^{02}  \end{tikzcd} .
\end{equation}
For this codimension-three enhancement, we get a non-Kodaira fiber  corresponding to Figure \ref{fig:IIII0s.Res2.cd3}, which is a fiber of type II$^*$ with contracted nodes.
\begin{figure}[H]
\centering
\includegraphics{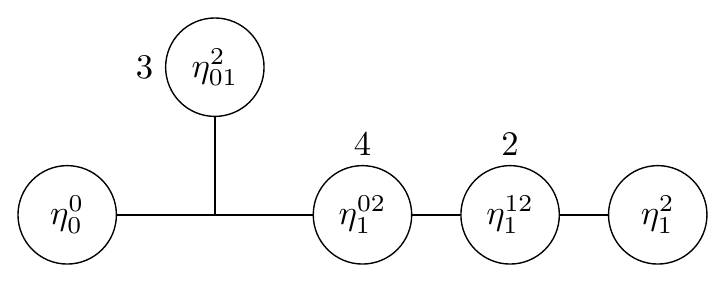}
\caption{Codimension-three enhancement of  \sug -model at $S\cap T\cap V(f)$, Resolution II. This fiber is of type II$^*$ (with dual graph $\widetilde{\text{E}}_8$) with contracted nodes.} 
\label{fig:IIII0s.Res2.cd3}
\end{figure}

\subsection{Resolution III}\label{Sec:IIII0SResIII}
Consider the following sequence of  blowups:
\begin{equation} 
\begin{tikzcd}[column sep=huge] 
X_0 \arrow[leftarrow]{r} {\displaystyle (x,y, t|w_1)} & X_1 \arrow[leftarrow]{r} {\displaystyle (x,y, s| e_1)} & X_2 \arrow[leftarrow]{r} {\displaystyle (y, w_1| w_2)} &  X_3 
\end{tikzcd}.
\end{equation}
The projective coordinates are then given by
\begin{equation}
[e_{1}w_{1}w_{2}x\,;\, e_{1}w_{1}w_{2}^{2}y\,;\, z=1][e_{1}x\,;\, e_{1}w_{2}y\,;\, t][x\,;\, w_{2}y\,;\, s][y\,;\, w_{1}].
\end{equation}
The proper transform is identical to equation \eqref{eq:PTofIIII0s}. It follows that the fibral  divisors are also identical to equation \eqref{eq:divisors}.

On the intersection of $S$ and $T$, we see the following curves:
\begin{equation}
\text{On} \   S\cap T:
\begin{cases}
\begin{array}{cl}
D_{0}^\text{s}\cap D_{0}^{\text{t}}\rightarrow &\eta_0^0 : s=t=w_{2}y^{2}-w_{1}e_{1}x^{3}=0, \\
D_{0}^\text{s}\cap D_{1}^{\text{t}}\rightarrow & \eta_0^{12}: s=w_{1}=w_{2}=0, \\
D_{0}^\text{s}\cap D_{2}^{\text{t}}\rightarrow &\eta_{01}^2: s=w_{2}=e_{1}=0, \quad \eta_0^{12}:  s=w_{2}=w_{1}=0 , \\
D_{1}^\text{s}\cap D_{1}^{\text{t}}\rightarrow &\eta_1^{12}: e_{1}=w_{1}=w_{2}=0, \\
D_{1}^\text{s}\cap D_{2}^{\text{t}}\rightarrow& \eta_1^2: e_{1}=w_{2}=fx+gst=0, \quad  \eta_{01}^2: e_{1}=w_{2}=s=0.
\end{array}
\end{cases}
\end{equation}
The resulting fiber is a non-Kodaira fiber illustrated in Figure \ref{fig:IIII0s.Res3.cd2} and corresponding to a fiber of type  III$^*$ with three contracted nodes.

\begin{equation}
\text{On} \   S\cap T:
\begin{cases}
& \begin{tikzcd}  C_0^{\text{s}} \arrow[rightarrow]{r}  & \eta_0^0+2\eta_0^{12}+\eta_{01}^2 , \end{tikzcd}\\
&\begin{tikzcd}  C_1^{\text{s}} \arrow[rightarrow]{r}  & \eta_{01}^2+2\eta_1^{12}+\eta_1^2 , \end{tikzcd}\\
& \begin{tikzcd}  C_0^\text{t} \arrow[rightarrow]{r}  & \eta_0^0,  \end{tikzcd}\\ 
 &\begin{tikzcd}  C_1^\text{t} \arrow[rightarrow]{r}  & \eta_0^{12}+\eta_1^{12},  \end{tikzcd}\\
 &\begin{tikzcd}  C_2^\text{t} \arrow[rightarrow]{r}  & 2\eta_{01}^2+\eta_1^2 . \end{tikzcd}\\
\end{cases}
\end{equation}
\begin{figure}[H]
\begin{center}
\includegraphics{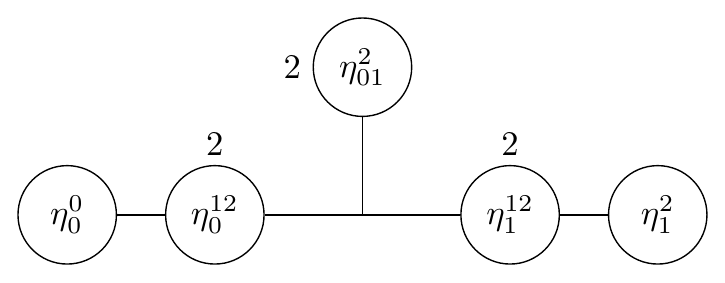}
\caption{Codimension-two Collision of \sug -model  at $S\cap T$, Resolution III. This fiber is of type III$^*$ (with dual graph $\widetilde{\text{E}}_7$) with contracted nodes.} \label{fig:IIII0s.Res3.cd2}
\end{center}
\end{figure}

In order to get the weights of the curves, the intersection numbers are computed between the codimension-two curves and the fibral divisors.
\begin{table}[H]
\begin{center}
\begin{tabular}{|c|c|c|c|c|c|c|c|}
\hline 
& $D^{\text{s}}_{0}$ & $D^{\text{s}}_{1}$ & $D^\text{t}_{0}$ & $D^\text{t}_{1}$ & $D^\text{t}_{2}$& Weight& Representation\\
\hline 
\hline 
$\eta_0^0$ & 0 & 0 & -2 & 1 & 0 & [0;0,-1] & $\bf{(1,7)}\subset \bf{(1,14)}$ \\
\hline 
$\eta_0^{12}$ & -1 & 1 & 1 & -1 & 1 & [-1;-1,1] & $\bf{(2,7)}$\\
\hline 
$\eta_1^{12}$ & 1 & -1 & 0 & -1 & 2 & [1;-2,1] & $\bf{(2,7)}$\\
\hline 
$\eta_1^2$ & 0 & 0 & 0 & 1 & -2 & [0;2,-1] & $\bf{(1,7)}\subset \bf{(1,14)}$\\
\hline 
$\eta_{01}^2$ & 0 & 0 & 0 & 1 & -2 & [0;2,-1] & $\bf{(1,7)} \subset \bf{(1,14)}$\\
\hline 
\end{tabular}
\end{center}
\caption{Weights and representations of the components of the generic curve over $S\cap T$  in Resolution III of the \sug -model.
See Section \ref{sec:Sat} for more information on the interpretation of these representations.
}
\end{table}
\vspace{-.5cm}
\begin{figure}[H]
\begin{center}
\includegraphics{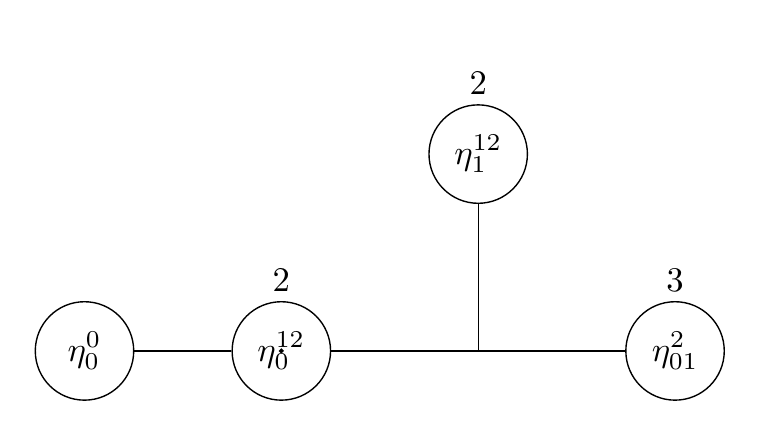}
\caption{ Codimension-three enhancement of \sug -model at $S\cap T\cap V(f)$, Resolution III. This fiber is of type II$^*$ (with dual graph $\widetilde{\text{E}}_8$) with contracted nodes.} \label{fig:IIII0s.Res3.cd3} 
\end{center}
\end{figure}

We have a further enhancement  when $f=0$ as  the rational curve $\eta_1^2$ coincides with $\eta_{01}^2$:  
\begin{equation}
\begin{tikzcd}  \eta_1^2 \arrow[rightarrow]{r}  & \eta_{01}^2. \end{tikzcd} 
\end{equation}
This corresponds to the codimension-three enhancement in Figure \ref{fig:IIII0s.Res3.cd3}, which is a fiber of type III$^*$ with contracted nodes.

\subsection{Resolution IV}\label{Sec:IIII0SResIV}
The last crepant resolution of the \sug-model is given by the following sequence of  blowups:
\begin{equation} 
\begin{tikzcd}[column sep=huge] 
X_0 \arrow[leftarrow]{r} {\displaystyle (x,y, t|w_1)} &  X_1 \arrow[leftarrow]{r} {\displaystyle (y, w_1| w_2)} & X_2 \arrow[leftarrow]{r} {\displaystyle (x,y, s| e_1)} &  X_3  .
\end{tikzcd}  
\end{equation}
Its projective coordinates are then given by
\begin{equation}
[e_{1}w_{1}w_{2}x\,;\, e_{1}w_{1}w_{2}^{2}y\,;\, z=1][e_{1}x\,;\, e_{1}w_{2}y\,;\, t][e_{1}y\,;\, w_{1}][x\,;\, y\,;\, s].
\end{equation}
The proper transform is identical to equation \eqref{eq:PTofIIII0s}. It follows that the divisors are also identical to equation \eqref{eq:divisors}.
On the intersection of $S$ and $T$, we see the following curves:
\begin{equation}\label{eq.21deg}
\text{On} \   S\cap T:
\begin{cases}
\begin{array}{cl}
D_0^{\text{s}}\cap D_0^\text{t}\rightarrow &  \eta_0^0: s=t=w_{2}y^{2}-w_{1}e_{1}x^{3}=0, \\
D_0^{\text{s}}\cap D_1^\text{t}\rightarrow &\eta_0^{12}: s=w_{1}=w_{2}=0,\\
D_0^{\text{s}}\cap D_2^\text{t}\rightarrow& \eta_{02}^{B}:s=w_{2}=x=0, \quad  \eta_{01}^{2} :s=w_{2}=e_{1}=0, \quad  \eta_0^{12}:s=w_{2}=w_{1}=0, \\
D_1^{\text{s}}\cap D_2^\text{t}\rightarrow & \eta_1^2: e_{1}=w_{2}=fx+gst=0 , \quad \eta_{01}^{2}: e_{1}=w_{2}=s=0.
\end{array}
\end{cases}
\end{equation}
Hence, we can deduce that the five fibral divisors split in the following way to produce the fiber in Figure \ref{fig:IIII0s.Res4.cd2}, which is a fiber of type III$^*$ with contracted nodes.
\begin{equation}
\text{On} \   S\cap T:
\begin{cases}
 & \begin{tikzcd}  C_0^{\text{s}} \arrow[rightarrow]{r}  & \eta_0^0+2\eta_0^{12}+3\eta_0^{2B}+\eta_{01}^2,  \end{tikzcd} \\
& \begin{tikzcd}  C_1^{\text{s}} \arrow[rightarrow]{r}  & \eta_{01}^2+\eta_1^2,  \end{tikzcd}\\
& \begin{tikzcd}  C_0^\text{t} \arrow[rightarrow]{r}  & \eta_0^0,\end{tikzcd} \\
& \begin{tikzcd} C_1^\text{t} \arrow[rightarrow]{r}  & \eta_0^{12},  \end{tikzcd}\\ 
&\begin{tikzcd}  C_2^\text{t} \arrow[rightarrow]{r}  & 3\eta_{0}^{2B}+2\eta_{01}^2+\eta_1^2.  \end{tikzcd}\\
\end{cases}
\end{equation}

\begin{figure}[H]
\begin{center}
\includegraphics{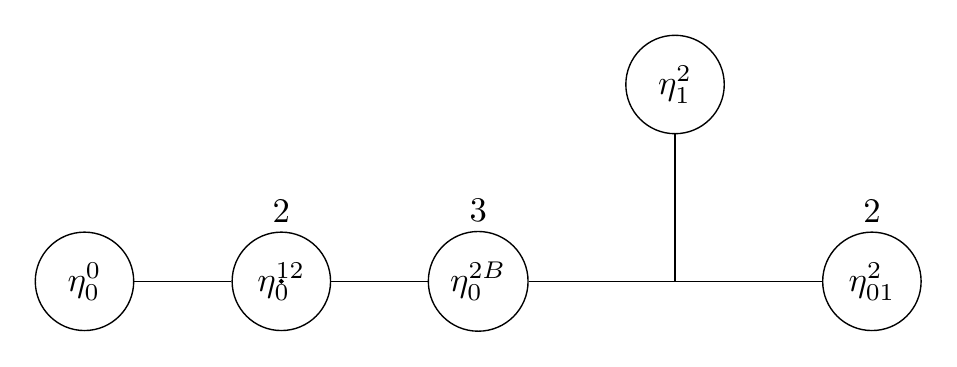}
\caption{Codimension-two Collision of \sug -model at $S\cap T$, Resolution IV. This fiber is of type III$^*$ (with dual graph $\widetilde{\text{E}}_7$) with contracted nodes.} \label{fig:IIII0s.Res4.cd2}
\end{center}
\end{figure}

In order to get the weights of the curves, the intersection numbers are computed between the codimension-two curves and the fibral divisors.
\begin{table}[htb]
\begin{center}
\begin{tabular}{|c|c|c|c|c|c|c|c|}
\hline 
  & $D^{\text{s}}_{0}$ & $D^{\text{s}}_{1}$ & $D^\text{t}_{0}$ & $D^\text{t}_{1}$ & $D^\text{t}_{2}$& Weight& Representation\\
\hline 
\hline 
$\eta_0^0$ & 0 & 0 & -2 & 1 & 0 & [0;0,-1] & $\bf{(1,7)}\subset \bf{(1,14)}$\\
\hline 
$\eta_0^{12}$ & 0 & 0 & 1 & -2 & 3 & [0;-3,2] & $\bf{(1,14)}$\\
\hline 
$\eta_{0}^{2B}$ & -1 & 1 & 0 & 1 & -2 & [-1;2,-1] & $\bf{(2,7)}$\\
\hline 
$\eta_{01}^{2}$ & 1 & -1 & 0 & 0 & 0 & [1;0,0] & $\bf{(2,1)}\subset \bf{(2,7)}$\\
\hline 
$\eta_1^2$ & 1 & -1 & 0 & 0 & 0  & [1;0,0] & $\bf{(2,1)}\subset \bf{(2,7)}$\\
\hline 
\end{tabular}
\end{center}
\caption{Weights and representations of the components of the generic curve over $S\cap T$  in Resolution IV of the \sug -model. 
See Section \ref{sec:Sat} for more information on the interpretation of these representations.\label{Table:Sat}
}
\end{table}

For the codimension-three fiber enhancement, consider when $f=0$.
Note that only the fiber $\eta_1^2$ changes under this condition:
\begin{equation}
\eta_1^2 \rightarrow \eta_{01}^2. 
\end{equation}
Even though only a single curve changed, we get a completely different fiber as a result. The codimension-three enhancement is represented in Figure \ref{fig:IIII0s.Res4.cd3}, which is a fiber of type III$^*$ with contracted nodes.
\begin{figure}[H]
\begin{center}
\scalebox{1.1}{\includegraphics{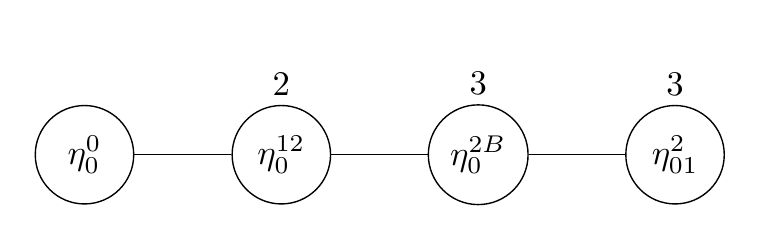}}
\caption{ Codimension-three enhancement of \sug -model at $S\cap T\cap V(f)$, Resolution IV. This fiber is of type II$^*$ (with dual graph $\widetilde{\text{E}}_8$) with contracted nodes.} \label{fig:IIII0s.Res4.cd3}
\end{center}
\end{figure}

\vspace{.5cm}

\subsection{Fiber structure}\label{sec:FibEn}
 In this section, we describe codimension-one and codimension-two fiber enhancements of all the resolutions (Resolutions I, II, III, IV) of the \sug-model. We consider two different ways of realizing the \sug-model: the collision of the fibers of type III+I$^{*S_3}_0$ and III + I$^{*\mathbb{Z}_3}_0$ depending on the Galois group of the fiber I$^{*\text{ns}}_0$. The Weierstrass equation of type III+I$^{*S_3}_0$ is given by $y^2=x^3+fs t^2 x+gs^2 t^3$, whereas that of type III + I$^{*\mathbb{Z}_3}_0$ is given by $y^2=x^3+ f s t^{3+\alpha} x + g s^2 t^3$. Resolutions I, II, III, and IV of the collision of the type III+I$^{*S_3}_0$ are respresented in Tables \ref{fig:IIII0nsRes1}, \ref{fig:IIII0nsRes2}, \ref{fig:IIII0nsRes3}, and \ref{fig:IIII0nsRes4} respectively. Resolutions I, II, III, and IV of the collision of type III + I$^{*\mathbb{Z}_3}_0$ are represented in Table \ref{Fig:I0Z3}.

\begin{table}[H]
\begin{tikzcd}[column sep=normal]
& \raisebox{-1cm}{\includegraphics[scale=.6]{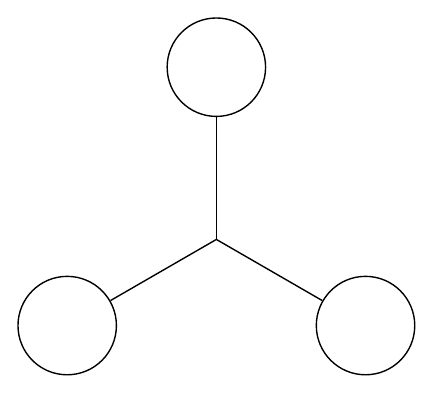}} \arrow[rightarrow]{r}{\displaystyle{g=0}} \arrow[rightarrow]{dd} {\displaystyle{T}} & \begin{tabular}{c} \includegraphics[scale=.8]{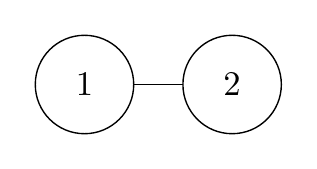} \end{tabular}  &  \\
\begin{tabular}{c} \includegraphics[trim=0cm  1cm 0 0cm,scale=1.4]{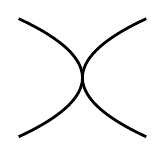} \end{tabular}  \arrow[rightarrow]{ru}{\displaystyle{f=0}} \arrow[rightarrow]{rd} {\displaystyle{T}}&   &    &  \\
 &\begin{tabular}{c} \includegraphics[scale=.8]{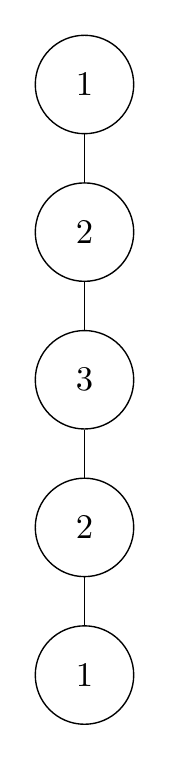} \end{tabular} \arrow[rightarrow]{r} {\displaystyle{f=0}}
 & 
  \begin{tabular}{c} \includegraphics[trim=0cm  0cm 0 0cm,scale=.8]{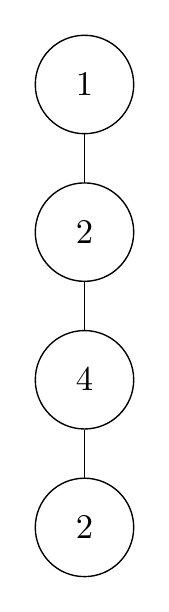} \end{tabular} &   \\
\begin{tabular}{c} \includegraphics[scale=.8]{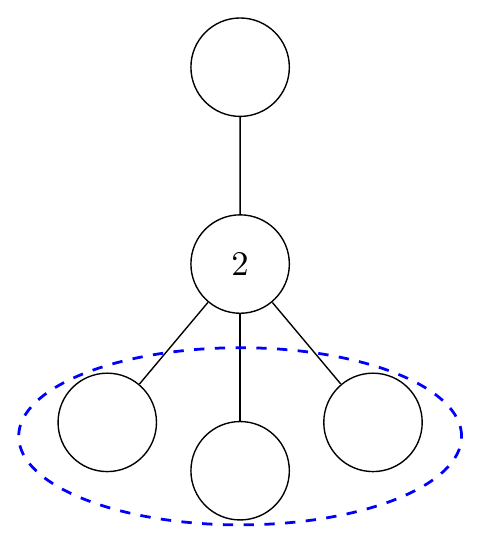}\end{tabular} \arrow[rightarrow]{ru}  {\displaystyle{S}}   \arrow[rightarrow]{r} {\displaystyle{4f^3+27 g^2 s=0}} &\raisebox{-1.5cm}{ \scalebox{.9}{ \includegraphics{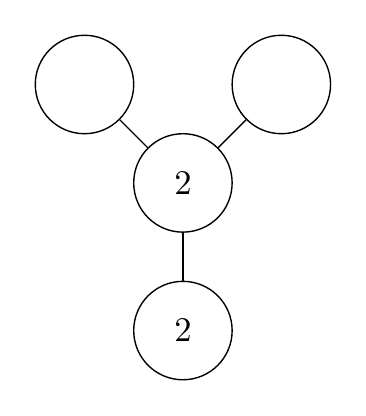}}} \arrow[rightarrow]{r} {\displaystyle{g=0}} \arrow[rightarrow]{ru}  {\displaystyle{S}} &     \raisebox{-1.5cm}{\includegraphics[scale=.9]{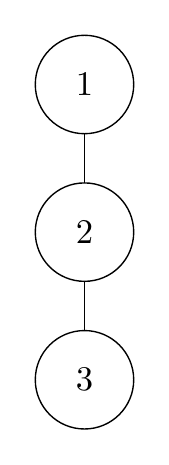}} & 
\end{tikzcd} 
\caption{ III + I$^{*S_3}_0$,  Resolution I. \label{fig:IIII0nsRes1}}
\end{table}

\newpage 
\begin{table}
\begin{tikzcd}[column sep=normal]
& \raisebox{-1cm}{\includegraphics[scale=.6]{IVGraph}} \arrow[rightarrow]{r}{\displaystyle{g=0}} \arrow[rightarrow]{dd} {\displaystyle{T}} & \begin{tabular}{c} \includegraphics[scale=.8]{H12} \end{tabular}  &  \\
\begin{tabular}{c} \includegraphics[trim=0cm  1cm 0 0cm,scale=1.4]{III} \end{tabular}  \arrow[rightarrow]{ru}{\displaystyle{f=0}} \arrow[rightarrow]{rd} {\displaystyle{T}}&   &    &  \\
 &{\begin{tabular}{c} \includegraphics[scale=.7]{Tstar1-2-321} \end{tabular}} \arrow[rightarrow]{r} {\displaystyle{f=0}}&  \begin{tabular}{c} \includegraphics[trim=0cm  0cm 0 0cm,scale=.7]{Tstar1-3-421} \end{tabular} &   \\
\begin{tabular}{c} \includegraphics[scale=.8]{G2dash}\end{tabular} \arrow[rightarrow]{ru}  {\displaystyle{S}}   \arrow[rightarrow]{r} {\displaystyle{4f^3+27 g^2 s=0}} &\raisebox{-1.5cm}{ \scalebox{.9}{ \includegraphics{TV22-1-1}}} \arrow[rightarrow]{r} {\displaystyle{g=0}} \arrow[rightarrow]{ru}  {\displaystyle{S}} &     \raisebox{-1.5cm}{\includegraphics[scale=.9]{TV123}} & 
\end{tikzcd} 
\caption{ III + I$^{*S_3}_0$,  Resolution II. \label{fig:IIII0nsRes2}}
\end{table}

\clearpage 
\newpage 
\begin{table}
\begin{tikzcd}[column sep=normal]
& \raisebox{-1cm}{\includegraphics[scale=.6]{IVGraph}} \arrow[rightarrow]{r}{\displaystyle{g=0}} \arrow[rightarrow]{dd} {\displaystyle{T}} & \begin{tabular}{c} \includegraphics[scale=.8]{H12} \end{tabular}  &  \\
\begin{tabular}{c} \includegraphics[trim=0cm  1cm 0 0cm,scale=1.4]{III} \end{tabular}  \arrow[rightarrow]{ru}{\displaystyle{f=0}} \arrow[rightarrow]{rd} {\displaystyle{T}}&   &    &  \\
 &\begin{tabular}{c} \includegraphics[scale=.7]{Tstar12-21-2} \end{tabular} \arrow[rightarrow]{r} {\displaystyle{f=0}}
 & 
  \begin{tabular}{c} \includegraphics[trim=0cm  0cm 0 0cm,scale=.7]{Tstar12-2-3} \end{tabular} &   \\
\begin{tabular}{c} \includegraphics[scale=.8]{G2dash}\end{tabular} \arrow[rightarrow]{ru}  {\displaystyle{S}}   \arrow[rightarrow]{r} {\displaystyle{4f^3+27 g^2 s=0}} &\raisebox{-1.5cm}{ \scalebox{.9}{ \includegraphics{TV22-1-1}}} \arrow[rightarrow]{r} {\displaystyle{g=0}} \arrow[rightarrow]{ru}  {\displaystyle{S}} &     \raisebox{-1.5cm}{\includegraphics[scale=.9]{TV123}} & 
\end{tikzcd} 
\caption{ III + I$^{*S_3}_0$,  Resolution III. \label{fig:IIII0nsRes3}}
\end{table}

\clearpage

 \newpage 
\begin{table}
\begin{tikzcd}[column sep=normal]
& \raisebox{-1cm}{\includegraphics[scale=.6]{IVGraph}} \arrow[rightarrow]{r}{\displaystyle{g=0}} \arrow[rightarrow]{dd} {\displaystyle{T}} & \begin{tabular}{c} \includegraphics[scale=.8]{H12} \end{tabular}  &  \\
\begin{tabular}{c} \includegraphics[trim=0cm  1cm 0 0cm,scale=1.4]{III} \end{tabular}  \arrow[rightarrow]{ru}{\displaystyle{f=0}} \arrow[rightarrow]{rd} {\displaystyle{T}}&   &    &  \\
 &\begin{tabular}{c} \includegraphics[scale=.8]{Tstar123-1-2} \end{tabular} \arrow[rightarrow]{r} {\displaystyle{f=0}}
 & 
  \begin{tabular}{c} \includegraphics[trim=0cm  0cm 0 0cm,scale=.8]{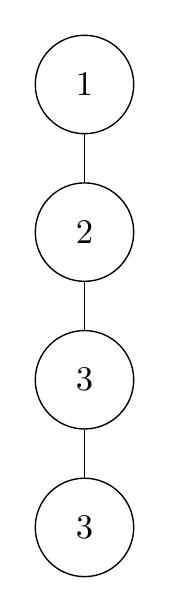} \end{tabular} &   \\
\begin{tabular}{c} \includegraphics[scale=.8]{G2dash}\end{tabular} \arrow[rightarrow]{ru}  {\displaystyle{S}}   \arrow[rightarrow]{r} {\displaystyle{4f^3+27 g^2 s=0}} &\raisebox{-1.5cm}{ \scalebox{.9}{ \includegraphics{TV22-1-1}}} \arrow[rightarrow]{r} {\displaystyle{g=0}} \arrow[rightarrow]{ru}  {\displaystyle{S}} &     \raisebox{-1.5cm}{\includegraphics[scale=.9]{TV123}} & 
\end{tikzcd} 
\caption{ III + I$^{*S_3}_0$,  Resolution IV. \label{fig:IIII0nsRes4}}
\end{table}

\newpage 
\begin{table}
\begin{tikzcd}[column sep=normal]
& \raisebox{-1cm}{\includegraphics[scale=.6]{IVGraph}} \arrow[rightarrow]{r}{\displaystyle{g=0}} \arrow[rightarrow]{dd} {\displaystyle{T}} & \begin{tabular}{c} \includegraphics[scale=.8]{H12} \end{tabular}  &  \\
\begin{tabular}{c} \includegraphics[trim=0cm  1cm 0 0cm,scale=1.4]{III} \end{tabular}  \arrow[rightarrow]{ru}{\displaystyle{{f=0}}} \arrow[rightarrow]{rd} {\displaystyle{T}}&   &    &  \\
 &  \begin{tabular}{c} \includegraphics[trim=0cm  0cm 0 0cm,scale=.6]{TV1242}    \end{tabular}  \begin{tabular}{c}or  \includegraphics[trim=0cm  0cm 0 0cm,scale=.5]{Tstar1-3-421} or \end{tabular}  
 &  \begin{tabular}{c} \includegraphics[trim=2.5cm  0cm 0 0cm,scale=.5]{Tstar12-2-3} or \end{tabular} \   \begin{tabular}{c} \includegraphics[trim=0cm  0cm 0 0cm,scale=.6]{TV1233} \end{tabular}
  \\
\begin{tabular}{c} \includegraphics[scale=.8]{G2dash}\end{tabular} \arrow[rightarrow]{ru}  {\displaystyle{S}}  
 \arrow[rightarrow]{r} {\displaystyle{g=0}} 
 &     \raisebox{-1.5cm}{\includegraphics[scale=.9]{TV123}} & 
\end{tikzcd} 
\caption{ III + I$^{*\mathbb{Z}_3}_0$,  $y^2=x^3+ f s t^{3+\alpha} x + g s^2 t^3$.  Resolution I, II, III, and IV. The fibers in codimension-two are arranged in the order of the resolution. There are no more enhancements in higher codimension. 
\label{Fig:I0Z3}}
\end{table}

\clearpage 
\newpage

\section{Deriving the matter representation of an \sug-model}\label{sec:Rep}
In this section, we explain how to derive the  representation attached to an \sug-model. 
We first explain how  the Katz-Vafa method does not give the correct representation for the \sug-model. 
We use instead the notion of saturation of weights starting with a set of weights derived geometrically by intersecting the fibral divisors by the irreducible components of singular fibers over codimension-two points.

\subsection{Failing of the Katz--Vafa method for the \sug-model}\label{sec:KatzVafa}

The Katz--Vafa method was defined for rank-one enhancements between Lie algebras of type ADE \cite{KatzVafa}.
For a $G$-model with a simple Lie group $G$, the Katz--Vafa method provides an elegant derivation of the matter content. This method had enabled the F-theory constructions to be independent from the existence of a Heterotic dual to explain their matter content.
In collision of singularities, given two simple groups $G_1$ and $G_2$, we can determine the group $G$ that corresponds to their collision by considering an elliptic surface whose discriminant locus pass by one of the points of intersection of the divisors supporting $G_1$ and $G_2$.

The Katz--Vafa method uses as input a triple $(G_1,G_2, G)$ of Lie groups  such that  $G_1\otimes G_2\subset G$  and  returns representations $\mathbf{R}_i$ appearing in the  branching rule for the decomposition of the  adjoint of $\mathfrak{g}=\text{Lie}(G)$ in terms of representations of $G_1\times G_2$.  
$$
\mathfrak{g}=\mathfrak{g}_1\oplus\mathfrak{g}_2\oplus R_1\oplus\cdots\oplus R_n,
$$
where $\mathbf{R}_i$ are representations of $G_1\times G_2$.

The branching rules for $G\supset G_1\otimes G_2$ provide the Katz--Vafa representations $\mathbf{R}_i= (r^{(1)}_i,r^{(2)}_i)$ for the collision of $G_1$ and $G_2$ :
\begin{equation}
adj(\mathfrak{g})=adj(\mathfrak{g}_1)\oplus adj(\mathfrak{g}_2) \oplus \bigoplus_i  (r^{(1)}_i,r^{(2)}_i).
\end{equation}

The original motivation for Katz--Vafa was to give an intrinsic derivation of the matter content of F-theory without relying on the duality with the Heterotic string theory. 
In six-dimensional theories, anomaly cancellations are a sanity check for the identification of the correct representations. But a direct derivation is still necessary as anomalies do not uniquely identify the matter content. In explicit examples, we often have a rank-one enhancement with $G_1\times U(1)\subset G$ with $G_1$ given via the fiber structure and $G$ to be determined. 

The Katz-Vafa method does not always produce the correct matter content, in particular, the matter content of the \sug-model cannot be derived by the Katz-Vafa method. 
 The geometry of the fibration shows that at the collision of $S$ and $T$, the fibers are contractions of fibers of type III$^*$ whose dual graph is the affine Dynkin diagram of type $\widetilde{\text{E}}_7$.
 The Lie algebra of \sug\ is a maximal subalgebra of type $S$ of E$_7$.  The branching rule for  the adjoint representation of E$_7$  in terms of the representation of \sug\ is  as follows \cite{McKayPatera}
: 
 \begin{equation}
 \mathbf{133}= (\mathbf{3,1})\oplus (\mathbf{1,14})\oplus (\mathbf{5,7)}\oplus(\mathbf{3,27}). 
  \end{equation}
The  two summands $(\mathbf{3,1})$ and $(\mathbf{1,14})$  are the two components of the adjoint representation of \sug . 
Thus, the remaining parts $(\mathbf{5,7})\oplus (\mathbf{3,27})$ should give the matter content of the \sug-model according to the Katz--Vafa method.   
 However, the representations $(\mathbf{5,7})$ and  $(\mathbf{3,27})$ are incompatible with the weights from the geometry after resolving the singularities of the \sug-model, as they do not contain the weights of the bifundamental representation $(\mathbf{2,7})$. 

\subsection{Saturation of weights and  representations}\label{sec:Sat}

To determine the representation $\mathbf{R}$ attached to a given elliptic fibration, over each codimension-two point, we  identify a subset of weights of $\mathfrak{g}$ using intersection theory.
This process can be explained in the following steps:
\begin{enumerate}  
\item We first identify all the fibral divisors\footnote{ Under mild assumptions, the discriminant locus of an elliptic fibration $\varphi: Y\to B$ is a divisor $\Delta$. 
The {\em fibral divisors} are by definition the irreducible components of the pullback $\varphi^*({\Delta}_{red})$ where ${\Delta}_{red}$ is the reduced discriminant. 
} of the elliptic fibration. Each fibral divisor is a fibration over an irreducible component of the discriminant locus. 
\item Then, we collect all the  singular fibers over codimension-two points of the base where the generic fibers of the fibral divisors degenerate further. 
Assuming that the fibration is equidimensional up to codimension-two points, all the irreducible components of the singular fibers are necessarily rational curves that we call {\em rational vertical curves}. 
\item We compute the  intersection numbers of these rational vertical curves with the fibral divisors. 
In this way, we attach to each irreducible components of a singular fiber over a codimension-two point a vector with integer entries. 
If we consider only the fibral divisors not touching the section of the elliptic fibration, we get a vector of dimension equal to the rank of the Lie algebra $\mathfrak{g}$; we identify the negative of such a vector with an element of the weight lattice of $\mathfrak{g}$.
\item We determine the representation $\mathbf{R}$ attached to the weights of the rational vertical curves over each codimension-two points. 
The representation $\mathbf{R}$ has to necessarily contain all the weights that we have identified by intersections of rational vertical curves with the fibral divisors.
\end{enumerate}

In the last step, the goal is then to determine the minimal representation generated by the subset of weights given to us by the geometry.
  A systematic way to do so is  to use the notion  of  {\em saturation of weights} introduced by Bourbaki (see \cite[Chap.VIII.\S 7. Sect. 2]{Bourbaki.GLA79}).  
  The identification of a representation using weights obtained by intersection numbers can be traced back to  Aspinwall and Gross \cite{Aspinwall:1996nk}; it is  based on the M-theory picture of M2-branes wrapping chains of curves and becoming massless when the curve shrinks to a point. The use of saturation of weights in this context was first explained in \cite{G2} and inspired directly by \cite{Aspinwall:1996nk} and \cite{Bourbaki.GLA79}. See also \cite{Marsano:2011hv,Morrison:2011mb}.

  \begin{defn}
 A set $\Pi$ of integral weights is said to be \emph{saturated} if for all  $\mu\in \Pi$, for all roots $\alpha$, and for all $t$ between $0$ and  $\langle \mu, \alpha \rangle $, we have   $\mu-t \alpha\in \Pi$. The {\em saturation} of a set of weights $\Pi$ is the smallest saturated set containing $\Pi$. 
 \end{defn}
   We refer to Bourbaki \cite[Chap.VIII.\S7. Sect.2]{Bourbaki.GLA79} for more information on saturated sets of weights. 
   A direct consequence of the definition is that a saturated set of weights is invariant under the Weyl group. 
Given a representation $\mathbf{R}$, its set of weights  $\Phi_{\mathbf{R}}$  is a saturated set of weights.     

 \begin{thm}[{See Corollary of Proposition 5 of \cite[Chap.VIII.\S7. Sect.2]{Bourbaki.GLA79}}]
 For any finite saturated set of weights $\Pi$, there exists a finite-dimensional representation $\bf{R}$ such that $\Pi$ is the set of weights of $\mathbf{R}$.
 \end{thm}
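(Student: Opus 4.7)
The plan is to construct $\mathbf{R}$ as a direct sum of irreducible highest-weight modules indexed by the dominant weights in $\Pi$. The first step is to observe that a finite saturated set is automatically invariant under the Weyl group: for each $\mu\in\Pi$ and each root $\alpha$, the reflection $s_\alpha\mu = \mu - \langle\mu,\alpha\rangle\alpha$ is obtained by taking $t=\langle\mu,\alpha\rangle$ in the saturation condition, so it lies in $\Pi$ (after replacing $\alpha$ by $-\alpha$ if the pairing is negative, to keep $t$ within the required range). Consequently $\Pi$ decomposes as a disjoint union of Weyl orbits, each of which meets the closed dominant Weyl chamber in exactly one point. Let $\Pi^+$ denote the resulting finite set of dominant weights.

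The auxiliary fact I would invoke is the classical lemma that for every dominant integral weight $\lambda$ the set of weights $\Phi_\lambda$ of the irreducible finite-dimensional representation $V(\lambda)$ of highest weight $\lambda$ coincides with the saturation $\mathrm{sat}(\{\lambda\})$. The inclusion $\Phi_\lambda\subseteq\mathrm{sat}(\{\lambda\})$ holds because $V(\lambda)$ is cyclically generated from a highest weight vector by the negative Chevalley operators $f_\alpha$, so every weight has the form $\lambda$ minus a nonnegative integer combination of simple roots. For the reverse inclusion, the $\alpha$-string through any weight $\mu$ of $V(\lambda)$ decomposes into finite-dimensional modules under the $\mathfrak{sl}_2$-triple $(e_\alpha,h_\alpha,f_\alpha)$; standard $\mathfrak{sl}_2$-theory then forces every intermediate weight $\mu - t\alpha$ with $t$ between $0$ and $\langle\mu,\alpha\rangle$ to actually occur in $V(\lambda)$, so $\Phi_\lambda$ is itself saturated and contains $\lambda$.

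With this lemma in hand, set
\[
\mathbf{R} \;=\; \bigoplus_{\lambda\in\Pi^+} V(\lambda),
\]
a finite-dimensional representation since $\Pi^+$ is finite. Its weight set equals $\bigcup_{\lambda\in\Pi^+}\Phi_\lambda = \bigcup_{\lambda\in\Pi^+}\mathrm{sat}(\{\lambda\})$. Each summand sits inside $\Pi$, because $\lambda\in\Pi$ and $\Pi$ is saturated, giving one inclusion. Conversely, any $\mu\in\Pi$ admits, by the first step, a dominant Weyl-conjugate $\bar\mu\in\Pi^+$; since the Weyl orbit of $\bar\mu$ is always contained in $\mathrm{sat}(\{\bar\mu\})$, we conclude $\mu\in\Phi_{\bar\mu}$. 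Hence the weight set of $\mathbf{R}$ is exactly $\Pi$.

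The main obstacle in this program is the key lemma identifying $\Phi_\lambda$ with $\mathrm{sat}(\{\lambda\})$. Its proof draws on genuine representation-theoretic input: the existence of the finite-dimensional irreducible $V(\lambda)$ for each dominant integral $\lambda$ (via Serre's theorem or the Verma module construction, combined with the Weyl character formula for finite-dimensionality), together with the $\mathfrak{sl}_2$-decomposition of root strings. Once these inputs are granted, the remainder of the argument is pure bookkeeping on Weyl orbits and the definition of saturation.
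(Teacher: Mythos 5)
Your argument is correct and is essentially the one the paper relies on: the statement is quoted directly from Bourbaki (Corollary of Proposition~5, Chap.~VIII, \S 7, Sect.~2), whose proof is exactly your construction $\mathbf{R}=\bigoplus_{\lambda\in\Pi^{+}}V(\lambda)$ over the dominant representatives of the Weyl-invariant set $\Pi$, together with the identification of the weight set $\Phi_\lambda$ of $V(\lambda)$ with the saturation of $\{\lambda\}$ (which is Proposition~5 itself). The only soft spot is your one-line justification of the inclusion $\Phi_\lambda\subseteq\mathrm{sat}(\{\lambda\})$: knowing that every weight of $V(\lambda)$ is $\lambda$ minus a nonnegative sum of simple roots only yields $\mu\le\lambda$, and one still needs the standard induction showing that every dominant weight $\mu\le\lambda$, hence its entire Weyl orbit, lies in any saturated set containing $\lambda$ --- but that is precisely the classical content of the result you are citing, so the overall structure stands.
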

   Thus, the saturation of weights  attaches a representation $\mathbf{R}$ to a finite set of weights $\Pi$ in a systematic way: $\mathbf{R}$ is the unique representation whose set of weights is the saturation of  $\Pi$.   
In other words, $\mathbf{R}$ is the representation generated by the set of weights $\Pi$. There can be other representations containing the saturation of $\Pi$ as a proper subset of their set of weights.

We determine the representation $\mathbf{R}$ attached to an elliptic fibration by first interpreting the negative of the intersection of vertical curves (over codimension-two points) with the fibral divisors as elements of the weight lattice.  
We then associate a representation to these weights using the notion of saturation of weights \cite{F4,G2,EKY2}. 

We illustrate the method with Resolution I.  
In Resolution I,  the curve $C_1^\text{s}$ degenerates as follows over $S\cap T$:
$$
 \begin{tikzcd}  C_1^{\text{s}} \arrow[rightarrow]{r}  & 3\eta_1^{02}+2\eta_1^{0A}+2\eta_1^{12}+\eta_1^2,  \end{tikzcd}
 $$
 where $\eta_1^{02}$ and $\eta_1^2$  have weight $[0;2,-1]$ , which is a weight of the  fundamental representation $(\bf{1,7})\subset (\bf{1,14})$, $\eta_1^{0A}$ has weight $[1;-1,0]$ of the bifundamental representation $(\bf{2}, \bf{7})\subset (\bf{2,14})$, and $\eta_1^{12}$ has  weight $[0;-3,2]$  of the adjoint representation $(\bf{1,14})$.
 
 In Resolution IV, the curve $C_1^\text{s}$ undergoes the following degeneration over $S\cap T$: 
  $$
 \begin{tikzcd}  C_1^{\text{s}} \arrow[rightarrow]{r}  & \eta_{01}^2+\eta_1^2,  \end{tikzcd}\
 $$
 where both $\eta_{01}^2$ and $\eta_1^2$ have weight  $[1;0,0]$, which is a weight of the representation $(\bf{2,1})\subset(\bf{2,7})$. 
There is another curve ($\eta_{0}^{2B}$) coming from the degeneration of $C_0^\text{s}$ that carries the weight $[-1,2,-1]$ of the bifundamental $(\bf{2,7})$ of \sug . 

 To make sense of the representation we should attach to these degenerations, we recall first few fact about the weights involved. 
The adjoint representation of G$_2$ consists of fourteen weights that split into four orbits under the action of the Weyl group: each of the two zero weights form an orbit on its own,  the six short roots form an orbit, and the six long weights form the fourth orbit. 
The short roots of the adjoint of G$_2$ are the nonzero weights of the fundamental representation $\bf{7}$ of G$_2$.  
The highest weight of the adjoint representation is a long root; thus,  the adjoint representation is the smallest representation containing the saturation set of any subset of long roots. 
The bifundamental representation of \sug\  $(\bf{2,7})$ consists of the following  Weyl orbits: 
the two weights $[\pm 1;0,0]$ form an orbit corresponding to the representation $(\bf{2,1})$, and the other twelve weights form a unique orbit.  
Thus we can conclude the saturations from the weights are given by Table \ref{table:sat}.

\begin{table}[h!]
\begin{center}
\begin{tabular}{|c|c|}
\hline 
Set of weights $\Pi$ & Saturation of $\Pi$ \\ 
\hline 
$\{[0;2,-1]\}$  &  the fundamental rep.  $(\bf{1,7})$\\
 $\{[1;-1,0]\}$ &the bifundamental rep.  $(\bf{2}, \bf{7})$\\
$\{[0;-3,2]\}$ & the adjoint representation $(\bf{1,14})$\\
 $\{[1;0,0]\}$ &   the fundamental  rep.  $(\bf{2,1})$\\
 $\{ [0;-3,2], [0;2,-1]\}$ & the adjoint representation  $(\bf{1,14})$\\
 $\{[-1;0,0], [-1;2,-1]\}$ &  the bifundamental rep.  $(\bf{2,7})$\\
 \hline
\end{tabular}
\end{center}
\caption{ Representations derived  from saturation of weights. }
\label{table:sat}
\end{table}

\section{$5d$ and $6d$ Supergravity Theories with Eight Supercharges}\label{sec:Physics}
 In this section, we determine the prepotential in each Coulomb branch of the five-dimensional supergravity theory with eight supercharges and match with the triple intersection polynomial for each corresponding crepant resolution. We determine the number of multiplets of the corresponding uplifted six-dimensional theory and check if the anomalies can be canceled using Green-Schwarz mechanism. We also explain the numerical oddities we encounter while counting the number of hypermultiplets.
 
\subsection{ $5d$ $\mathcal{N}=1$ prepotentials}

In the Coulomb phase of an $\mathcal{N}=1$ supergravity theory in five dimension, 
 the scalar fields of the vector multiplets are restricted to the Cartan sub-algebra of the Lie group as the Lie group is broken to $U(1)^r$ where $r$ is the rank of the group. It follows that the charge of an  hypermultiplet is simply given by a  
 weight of the representation under which it transforms \cite{IMS}.  
The Intrilligator-Morrison-Seiberg (IMS) prepotential is the quantum contribution to the prepotential of a five-dimensional gauge theory with the matter fields in the representations $\mathbf{R}_i$ of the gauge group. Let $\phi$ be in the Cartan subalgebra of a Lie algebra $\mathfrak{g}$. 
The  weights are in the dual space of the Cartan subalgebra.  We denote the evaluation of a  weight on a coroot vector $\phi$ as a scalar product $\langle \mu,\phi \rangle$.  We recall that the roots are the weights of the adjoint representation of  $\mathfrak{g}$.
Denoting the fundamental simple roots by $\alpha$ and the weights of $\mathbf{R}_i$ by $\varpi$ we have \cite{IMS}
\begin{align}
6\mathscr{F}_{\text{IMS}}(\phi) =&\frac{1}{2} \left(
\sum_{\alpha} |\langle \alpha, \phi \rangle|^3-\sum_{i} \sum_{\varpi\in \mathbf{R}_i} n_{\mathbf{R}_i} |\langle \varpi, \phi\rangle|^3 
\right).\label{Eq:IMS}
\end{align}
For all simple groups with the exception of SU$(N)$ with $N\geq 3$, this is the full cubic prepotential as there are non-trivial third Casimir invariants. 

The open dual fundamental Weyl chamber is defined as the cone $ \langle \alpha, \phi \rangle>0$, where $\alpha$ runs through the set of all simple positive roots. 
 For a given choice of a group $G$ and representations $\mathbf{R}_i$, we have to determine a Weyl chamber to remove the absolute values in the sum over the roots. 
 We then consider the  hyperplane arrangement  $\langle \varpi, \phi\rangle=0$, where $\varphi$ runs through all the weights of all the representations $\mathbf{R}_i$ and $\phi$ is an element of the coroot space. 
 If none of these hyperplanes intersect the interior of the dual Weyl chamber of $\mathfrak{g}$, we can safely remove the absolute values in the sum over the weights. 
 Otherwise, we have hyperplanes partitioning the dual fundamental Weyl chamber into subchambers. Each of these subchambers is defined by the signs of the linear forms $\langle \varpi, \phi\rangle$. 
 Two such subchambers are adjacent when they differ by the sign of a unique linear form.  
 Within each of these subchambers, the prepotential is a cubic polynomial. 
 But as we go from one subchamber to an adjacent one, we have to go through one of the walls defined by the weights. 
 The transition from one chamber to an adjacent chamber is a phase transition.

We compute $\mathscr{F}_{\text{IMS}}$ for each of the four chambers of an \sug-model. The chambers are defined by Table \ref{Table:Phases}. 
\begin{thm}\label{Thm:Prepotentials}
The prepotential of an \sug-model in the four phases defined by the 
chambers of Table \ref{Table:Phases}.
\quad

\begin{description}
\item[Chamber 1:]
\begin{align}\nonumber
\begin{aligned}
6\mathscr{F}^{(1)}_{\text{IMS}}=& -4 \phi_2^3 (2 n_{\bf{1,14}}+n_{\bf{1,7}}-2)+9 \phi_1 \phi_2^2 (-2 n_{\bf{1,14}}+n_{\bf{1,7}}+2) -8 (n_{\bf{1,14}}-1) \phi_1^3\\
& +3 \phi_1^2 \phi_2 (8 n_{\bf{1,14}}-n_{\bf{1,7}}-8) + \psi _1^3 (-n_{\bf{2,1}}-7 n_{\bf{2,7}}-8 n_{\bf{3,1}}+8) \\
& +12 \psi _1 \left(-3 n_{\bf{2,7}} \phi_2^2+3 n_{\bf{2,7}} \phi_1 \phi_2- n_{\bf{2,7}} \phi_1^2\right),
\end{aligned}
\end{align}
\item[Chamber 2:]
\begin{align}\nonumber
\begin{aligned}
6\mathscr{F}^{(2)}_{\text{IMS}}=& -2 \phi_2^3 (4 (n_{\bf{1,14}}+n_{\bf{1,7}}-1)+n_{\bf{2,7}})+9 \phi_1 \phi_2^2 (-2 n_{\bf{1,14}}+n_{\bf{1,7}}+2) -8 (n_{\bf{1,14}}-1) \phi_1^3 \\
& +3 \phi_1^2 \phi_2 (8 n_{\bf{1,14}}-n_{\bf{1,7}}-8) +\psi _1^3 (-n_{\bf{2,1}}-5 n_{\bf{2,7}}-8 n_{\bf{3,1}}+8)-6 n_{\bf{2,7}} \psi _1^2 \phi_2 \\
& +\psi _1 \left(-30 n_{\bf{2,7}} \phi_2^2+36 n_{\bf{2,7}} \phi_1 \phi_2-12 n_{\bf{2,7}} \phi_1^2\right),
\end{aligned}
\end{align}
\item[Chamber 3:]
\begin{align}\nonumber
\begin{aligned}
6\mathscr{F}^{(3)}_{\text{IMS}}=& \ 3 \phi_1 \phi_2^2 (-6 n_{\bf{1,14}}+3 n_{\bf{1,7}}-2 n_{\bf{2,7}}+6)+3 \phi_1^2 \phi_2 (8 n_{\bf{1,14}}-n_{\bf{1,7}}+2 n_{\bf{2,7}}-8) \\
& -8 \phi_2^3 (n_{\bf{1,14}}+n_{\bf{1,7}}-1)-2 \phi_1^3 (4 n_{\bf{1,14}}+n_{\bf{2,7}}-4)+\psi _1^3 (-n_{\bf{2,1}}-3 n_{\bf{2,7}}-8 n_{\bf{3,1}}+8) \\
& + \psi _1 \left(-24 n_{\bf{2,7}} \phi_2^2+24 n_{\bf{2,7}} \phi_1 \phi_2-6 n_{\bf{2,7}} \phi_1^2\right)-6 n_{\bf{2,7}} \psi _1^2 \phi_1,
\end{aligned}
\end{align}
\item[Chamber 4:]
\begin{align}\nonumber
\begin{aligned}
6\mathscr{F}^{(4)}_{\text{IMS}}=&-8 (n_{\bf{1,14}}-1) \phi_1^3 + 9 \phi_1 \phi_2^2 (-2 n_{\bf{1,14}}+n_{\bf{1,7}}+2 n_{\bf{2,7}}+2)-8 \phi_2^3 (n_{\bf{1,14}}+n_{\bf{1,7}}+2 n_{\bf{2,7}}-1) \\
& +3 \phi_2 \phi_1^2 (8 n_{\bf{1,14}}-n_{\bf{1,7}}-2 n_{\bf{2,7}}-8) +\psi _1^3 (-n_{\bf{2,1}}-n_{\bf{2,7}}-8 n_{\bf{3,1}}+8)-12 n_{\bf{2,7}} \psi _1^2 \phi_2.
\end{aligned}
\end{align}
\end{description}
\end{thm}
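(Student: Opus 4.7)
The plan is to evaluate equation \eqref{Eq:IMS} directly by substituting the weights of each summand of $\mathbf{R}=(\mathbf{3},\mathbf{1})\oplus(\mathbf{1},\mathbf{14})\oplus(\mathbf{2},\mathbf{7})\oplus(\mathbf{2},\mathbf{1})\oplus(\mathbf{1},\mathbf{7})$ listed in Table~\ref{Table:Weight}, together with the positive roots of $\mathfrak{g}=\text{A}_1\oplus\mathfrak{g}_2$, and determining their signs chamber by chamber. Writing $\phi=(\psi_1;\phi_1,\phi_2)$ in the basis of fundamental coroots, a weight $(a;b,c)$ evaluates as $a\psi_1+b\phi_1+c\phi_2$. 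The six positive roots of $\mathfrak{g}_2$ are $(2,-1),(-3,2),(-1,1),(1,0),(3,-1),(0,1)$ and the positive root of A$_1$ is $(2)$; one checks that all seven are strictly positive on the open dual fundamental Weyl chamber, so their absolute values can be dropped throughout the four chambers.

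Next I would classify the weights of the matter representations according to whether their sign is constant on the Weyl chamber. The pair $\pm(1;0,0)\in(\mathbf{2},\mathbf{1})$, the nonzero weights of $(\mathbf{1},\mathbf{7})\subset(\mathbf{1},\mathbf{14})$, and the $(\mathbf{2},\mathbf{7})$-weights $\varpi_1,\ldots,\varpi_4$ (and their negatives $\varpi_8,\ldots,\varpi_{11}$) all have definite signs throughout, so their contribution to $\mathscr{F}_{\text{IMS}}$ is identical in every chamber. Zero weights of the adjoints contribute nothing since $|0|^3=0$. Only the three weights $\varpi_5^{(\mathbf{2},\mathbf{7})}=(1;-2,1)$, $\varpi_6^{(\mathbf{2},\mathbf{7})}=(1;1,-1)$, $\varpi_7^{(\mathbf{2},\mathbf{7})}=(1;-1,0)$ (and their negatives) can change sign, and by Table~\ref{Table:Phases} the four chambers are determined precisely by these signs. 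This reduces the problem to computing one master expression plus a chamber-dependent correction coming from only six weights.

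The computation then proceeds in two stages. First I evaluate the chamber-independent contribution: the root sum gives $\tfrac12\!\left[(2\psi_1)^3+\sum_{\alpha>0\text{ of }\mathfrak{g}_2}2\langle\alpha,\phi\rangle^3\right]$, which produces the polynomial pieces with coefficient $+1$ proportional to the roots cubed; the $(\mathbf{3},\mathbf{1})$ subtracts $8n_{\mathbf{3},\mathbf{1}}\psi_1^3$, the $(\mathbf{2},\mathbf{1})$ subtracts $n_{\mathbf{2},\mathbf{1}}\psi_1^3$, the $(\mathbf{1},\mathbf{7})$ and $(\mathbf{1},\mathbf{14})$ subtract appropriate cubic polynomials in $\phi_1,\phi_2$ (with no $\psi_1$ dependence), and the $\varpi_1,\ldots,\varpi_4,\varpi_8,\ldots,\varpi_{11}$ subtract the chamber-independent part of the $(\mathbf{2},\mathbf{7})$ contribution. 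Second, in each chamber I add the signed cubes of $\pm\varpi_5,\pm\varpi_6,\pm\varpi_7$ read off from Table~\ref{Table:Phases}: for example, in Chamber 1 all three are positive so the $(\mathbf{2},\mathbf{7})$ sum equals $2\sum_{i=1}^{7}\langle\varpi_i,\phi\rangle^3$, while in Chamber 4 all three flip sign.

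Collecting terms in each case produces the four polynomials claimed in the statement. The computation is entirely mechanical, and the main risk lies in bookkeeping: correctly expanding cubes such as $(\psi_1-2\phi_1+\phi_2)^3$, carrying the minus signs across the seven terms in each of four chambers, and grouping the monomials $\psi_1^3,\psi_1^2\phi_i,\psi_1\phi_i\phi_j,\phi_1^3,\phi_1^2\phi_2,\phi_1\phi_2^2,\phi_2^3$ with the right coefficients of $n_{\mathbf{R}_i}$. A useful consistency check is that adjacent chambers differ by the cube of a single weight (the flop wall), which matches the flopping curves in Table~\ref{flopcurves}; the triple intersection polynomials of Theorem~\ref{Thm:TripleInt} provide an independent cross-check via the M-theory/F-theory matching performed in Section~\ref{sec:Physics}.
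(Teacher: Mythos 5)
Your proposal is correct and takes essentially the same route as the paper, whose proof is exactly this: a direct evaluation of equation \eqref{Eq:IMS} using Table \ref{Table:Phases} to remove the absolute values, with the observation that only $\varpi_5^{(\mathbf{2},\mathbf{7})},\varpi_6^{(\mathbf{2},\mathbf{7})},\varpi_7^{(\mathbf{2},\mathbf{7})}$ change sign across chambers. One bookkeeping slip to fix before executing it: the A$_1$ root contribution should be $\tfrac12\cdot 2\,(2\psi_1)^3=8\psi_1^3$ (both roots $\pm\alpha$ enter the sum), not $\tfrac12(2\psi_1)^3$ as written in your displayed bracket — otherwise the constant $+8$ in the $\psi_1^3$ coefficient comes out as $+4$.
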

\begin{proof}
Direct computation starting with equation \eqref{Eq:IMS} and using Table   \ref{Table:Phases} to remove the absolute values.
\end{proof}

The number of hypermultiplets are computed by comparing the prepotential and the intersection polynomial. Comparing the triple intersection numbers obtained in the Resolutions  I, II, III, IV with the prepotentials computed respectively in Chambers 1,2,3,4, we get
\begin{align}\label{eq:numbers}
\begin{aligned}
&  n_{\mathbf{2,1}}+ 8 n_{\mathbf{3,1}}=S\cdot \left(4 L+2 S-\frac{7}{2} T\right)+8 , \quad &&n_{\mathbf{1,14}} = \frac{1}{2} (-L \cdot T + T^2 + 2), \\
& n_{\mathbf{2,7}} = \frac{1}{2} S \cdot T,  \quad  &&n_{\mathbf{1,7}}=T \cdot (5 L - S - 2 T). 
\end{aligned}
\end{align}
We see in particular that the numbers $ n_{\mathbf{2,1}}$ and $n_{\mathbf{3,1}}$ are  not completely fixed by this method but restricted by a linear relation.   
The same is true for the SU($2$)-model in \cite{ES}. 
 Using Witten's genus formula  to restrict $n_{\mathbf{3,1}}$ with $K=-L$, we get that 
$ n_{\mathbf{3,1}}$ and $ n_{\mathbf{1,14}}$ become respectively the arithmetic genus of the curves $S$ and $T$:
\begin{equation}\label{Eq.5d.hyp}
n_{\mathbf{3,1}} =g(S),\quad n_{\mathbf{1,14}}= g(T), \quad n_{\mathbf{2,1}} =-S\cdot \left(8K+2 S+\frac{7}{2}T\right),\quad n_{\mathbf{1,7}}= -T\cdot (5 K+ S +2 T).
\end{equation}

The geometric origin of  the number $n_{\bf{1,7}}$ is as follows. 
The G$_2$ fiber contains a non-split curve that splits into three curves after an appropriate field extension. Over $T$, this non-split curve defines a triple cover of $T$ with a branch locus given by its discriminant $  s^3 t^6 (4 f^3+27 g^2 s)$.  The reduced discriminant is $s(4 f^3+27 g^2 s)$  as $t$ is a unit. Using Witten's genus formula \cite{Witten:1996qb,Aspinwall:2000kf,Witten:1996qb}, we get 
\begin{equation}
n_{\bf{1,7}}=(d-1) (g-1) +\frac{1}{2} R,
\end{equation}
where $d=3$, $g=\frac{1}{2}(K T + T^2+2)$, and the class of $R$ is the class of the reduced discriminant, i.e. $R=[t][s(4 f^3+27 g^2 s)]=-3T\cdot (4 K+S+2T)+S\cdot T$. 

As explained in Remark \ref{rem1}, the representation $n_{\bf{1,7}}$ is frozen when $S$ and $T$ intersect transversally at one point and $T$ is a $-3$-curve. 
The meaning of  $n_{\bf{2,1}}$ is much more complicated. One would expect it to be just an intersection number, but that is not the case as will discuss in section \ref{sec:odd}.

\subsection{Generalities on $6d$ $\mathcal{N}=(1,0)$  anomaly cancellations}

We consider a gauged six-dimensional $\mathcal{N}=(1,0)$ supergravity theory \cite{Green:1984bx} with a semi-simple gauge group $G=\prod_a G_a$, $n_V^{(6)}$ vector multiplets, $n_T$ tensor multiplets, and $n_H$ hypermultiplets consisting of $n_H^0$ neutral hypermultiplets and $n_H^{ch}$ charged hypermultiplets under a representation $\bigoplus_i\mathbf{R}_i$ of the gauge group with $\mathbf{R}_i=\bigotimes_{a} \mathbf{R}_{i,a}$, where $\mathbf{R}_{i,a}$ is an irreducible representation of the simple component  $G_a$ of the semi-simple group $G$. The vector multiplets transform under the adjoint of the gauge group.  CPT invariance requires the representation to be quaternionic, and hence we have quaternions of dimension $n_H$ for the hypermultiplets.

In particular, F-theory compactified  on a Calabi-Yau threefold $Y$ gives a six-dimensional supergravity theory with eight supercharges coupled to n$_V$ vector, $n_T$ tensor,  and $n_H^0=h^{2,1}(Y)+1$ neutral hypermultiplets \cite{Cadavid:1995bk}. 
When the Calabi-Yau variety is elliptically fibered  with a gauge group $G$ and a representation $\mathbf{R}$, the number of vectors is $n_V=\dim G$, the number of tensors is $n_T=9 -K^2$,  and we have charged hypermultiplets transforming in the representation $\mathbf{R}$ of $G$. 
 The base of the fibration is then necessarily a rational surface $B$ whose canonical class is denoted $K$ \cite{Sadov:1996zm}.
 For anomalies in six dimensions, we used  \cite{Green:1984bx,Sadov:1996zm,GM1,Kumar:2010ru,Salam, Monnier:2017oqd} and we refer to \cite{Park} for an elegant  general review.

Since we consider only local anomalies, we only have pure gravitational anomalies, pure gauge anomalies, and mixed gravitational and gauge anomalies. In order to address these anomalies, we use Green-Schwarz mechanism in six-dimensions. The anomaly polynomial I$_8$ has a pure gravitational contribution of the form  $\propto (n_H-n_V^{(6)}+29n_T-273)\tr{}  R^4$ where $R$ is the Riemann tensor thought of as a $6\times 6$ real matrix of two-form values. 
To apply the Green-Schwarz mechanism, the coefficient of $\tr{}  R^4$ is required to vanish \cite{Salam}:
\begin{equation}
n_H-n_V^{(6)}+29n_T-273=0.
\end{equation}
The remainder terms of the anomaly polynomial I$_8$, is given by
\begin{equation}
I_8=\frac{K^2}{8} (\tr R^2)^2 +\frac{1}{6}\sum_{a} X^{(2)}_{a} \tr R^2-\frac{2}{3}\sum_{a} X^{(4)}_{a}+4\sum_{a<b}Y_{ab},
\end{equation}
where each gauge group contributions $X^{(n)}_{a}$ for $n=2,4$ and the mixed contribution $Y_{ab}$ are given by 
\begin{equation}
X^{(n)}_{a}=\tr_{\bf{adj}}F^n_a -\sum_{i}n_{\bf{R}_{i,a}}\tr_{\bf{R}_{i,a}}F^n_a, \quad 
Y_{ab}=\sum_{i,j} n_{\bf{R}_{i,a}, \bf{R}_{j,b}} \tr_{\bf{R}_{i,a}}F^2_a \tr_{\bf{R}_{j,b}}F^2_b,
\end{equation}
and $n_{\bf{R}_{i,a}, \bf{R}_{j,b}}$ is the number of hypermultiplets transforming in the representation $(\mathbf{R}_{i,a},\mathbf{R}_{j,b})$ of $G_a\times G_b$.

It is important to note that when a representation is charged on more than a simple component of the group, it affects not only  $Y_{ab}$ but also $X^{(2)}_a$ and $X^{(4)}_a$. Consider a representation $(\bf{R_1},\bf{R_2})$ for of a semisimple group with two simple components $G=G_1\times G_2$, where $\bf{R_a}$ is a representation of $G_a$. Then this representation contributes $\dim{\bf{R_2}}$ times   to $n_{\bf{R_1}}$, and contributes  $\dim{\bf{R_1}}$ times to $n_{\bf{R_2}}$:
$$
n_{\bf{R_1}}=\cdots+\dim{\bf{R_2}} \ n_{\bf{R_1},\bf{R_2}}, \quad n_{\bf{R_2}}=\cdots +\dim{\bf{R_1}} \ n_{\bf{R_1},\bf{R_2}}.
$$

By denoting the zero weights of a representation $\bf{R}_i$ as $\bf{R}^{(0)}_i$, the charged dimension of the hypermultiplets in representation $\bf{R}_i$ is given by $\dim{\bf{R}_i}-\dim{\bf{R}_{i}^{(0)}}$ since the hypermultiplets of zero weights are considered neutral. For a representation $\bf{R}_i$, $n_{\bf{R}_i}$ denotes the multiplicity of the representation $\bf{R}_i$. Then the number of charged hypermultiplets is as follows \cite{GM1}
\begin{equation}
n_H^{ch}=\sum_{i} n_{\bf{R}_i} \left( \dim{\bf{R}_i}-\dim{\bf{R_{i}^{(0)}}} \right).
\end{equation}

The trace identities for a representation $\mathbf{R}_{i,a}$ of a simple group $G_a$ are
\begin{equation}
\tr_{\bf{R}_{i,a}} F^2_a=A_{\bf{R}_{i,a}} \tr_{\bf{F}_a} F^2_a , \quad \tr_{\bf{R}_{i,a}} F^4_a=B_{\bf{R}_{i,a}} \tr_{\bf{F}_a} F^4_a+C_{\bf{R}_{i,a}} (\tr_{\bf{F}_a} F^2_a )^2
\end{equation}
with respect to a  reference representation $\bf{F}_a$ for each simple component $G_a$.\footnote{We denoted this representation as $\bf{F}_a$ as we picked the fundamental representations. However, any representation can be used as a reference representation.}
These  group theoretical coefficients are listed in \cite{Erler,Avramis:2005hc,vanRitbergen:1998pn}.

We then have 
\begin{align}
X^{(2)}_a&=\left(A_{a,\bf{adj}}-\sum_{i}n_{\bf{R}_{i,a}}A_{\bf{R}_{i,a}}\right)\tr_{\bf{F}_a}F^2_a, \\
X^{(4)}_a&=\left(B_{a,\bf{adj}}-\sum_{i}n_{\bf{R}_{i,a}}B_{\bf{R}_{i,a}}\right)\tr_{\bf{F}_a}F^4_a
+\left(C_{a,\bf{adj}}-\sum_{i}n_{\bf{R}_{i,a}}C_{\bf{R}_{i,a}}\right)(\tr_{\bf{F}_a}F^2_a)^2 , \\
Y_{ab}&=\sum_{i,j} n_{\bf{R}_{i,a},\bf{R}_{j,b}} A_{\bf{R}_{i,a}} A_{\bf{R}_{j,b}} \tr_{\bf{F}_a}F^2_a \tr_{\bf{F}_b}F^2_b, \quad (a\neq b).
\end{align}
For each simple component $G_a$, the anomaly polynomial  I$_8$ has a pure gauge contribution proportional to the quartic term $\tr F^4_a$ that is required to vanish in order to factorize I$_8$: 
$$
B_{a,\bf{adj}}-\sum_{i}n_{\bf{R}_{i,a}}B_{\bf{R}_{i,a}}=0.
$$
When the coefficients of all quartic terms  ($\tr R^4$ and $\tr F^4_a$) vanish,  the remaining part of the anomaly polynomial I$_8$ is
\begin{align}
\begin{cases}
&\mathrm{I}_8=\frac{K^2}{8} (\tr R^2)^2 +\frac{1}{6}\sum_{a} X^{(2)}_{a} \tr R^2-\frac{2}{3}\sum_{a} X^{(4)}_{a}+4\sum_{a<b}Y_{ab}, \\
&X^{(2)}_a=\left(A_{a,\bf{adj}}-\sum_{i}n_{\bf{R}_{i,a}}A_{\bf{R}_{i,a}}\right)\tr_{\bf{F}_a}F^2_a, \quad
X^{(4)}_a=\left(C_{a,\bf{adj}}-\sum_{i}n_{\bf{R}_{i,a}}C_{\bf{R}_{i,a}}\right)(\tr_{\bf{F}_a}F^2_a)^2, \\
&Y_{ab}=\sum_{i,j} n_{\bf{R}_{i,a},\bf{R}_{j,b}} A_{\bf{R}_{i,a}} A_{\bf{R}_{j,b}} \tr_{\bf{F}_a}F^2_a \tr_{\bf{F}_b}F^2_b, \quad (a\neq b).
\end{cases}
\end{align}
The anomalies are canceled by the Green-Schwarz mechanism when I$_8$ factorizes \cite{Green:1984bx,Sagnotti:1992qw,Schwarz:1995zw}. 
The modification of the field strength $H$ of  the antisymmetric tensor $B$ is 
\begin{equation}
H= dB + \frac{1}{2} K \omega_{3L}+ 2\sum_a\frac{S_a}{\lambda_a}\omega_{a,3Y}, 
\end{equation}
where  $\omega_{3L}$ and $\omega_{a,3Y}$ are respectively the gravitational and Yang-Mills Chern-Simons  terms. 
 If I$_8$ factors as 
 \begin{equation}
 \text{I}_8= X\cdot  X,
 \end{equation}
  then the anomaly is canceled by adding the following Green-Schwarz counter-term 
\begin{equation}
\Delta L_{GS}\propto \frac{1}{2} B \wedge X,
\end{equation}
which implies that $X$ carries string charges.  Following Sadov \cite{Sadov:1996zm}, to cancel the local anomalies, we consider
\begin{equation}
X= \frac{1}{2} K \tr R^2 +\sum_a \frac{2}{\lambda_a} S_a\tr F^2_a,
\label{eq:Xfactor}
\end{equation} where the  $\lambda_a$ are normalization factors  chosen such that the  smallest
topological charge of an embedded SU($2$) instanton in G$_a$ is one \cite{Kumar:2010ru, Park, Bernard}. 
This forces $\lambda_a$ to be the Dynkin index of the fundamental representation of  $G_a$ as summarized in Table \ref{tb:normalization} \cite{Park}. 
\begin{table}[h!]
\begin{center}
\begin{tabular}{|c|c|c|c|c|c|c|c|c|c|}
\hline
 $\mathfrak{g}$ & A$_n$ & B$_n$ & C$_n$ & D$_n$ & E$_8$ & E$_7$ & E$_6$&  F$_4$ & G$_2$ \\
 \hline
 $\lambda$ & $1$ & $2$  & $1$ & $2$ & $60$ & $12$ & $6$ & $6$ & $2$ \\
 \hline  
\end{tabular}
\caption{The normalization factors for each simple gauge algebra. See \cite{Kumar:2010ru}.}
\label{tb:normalization}
\end{center}
\end{table}

If the simple group $G_a$ is supported on a divisor $S_a$, the local anomaly cancellation conditions are  the following equations
\begin{subequations}
\begin{align}
n_T&=9-K^2 , \\
n_H-n_V^{(6)}+29n_T-273 &=0,\\
\left(B_{a,\bf{adj}}-\sum_{i}n_{\bf{R}_{i,a}}B_{\bf{R}_{i,a}}\right)& = 0, \\
\lambda_a  \left(A_{a,\bf{adj}}-\sum_{i}n_{\bf{R}_{i,a}}A_{\bf{R}_{i,a}}\right) & =6  K\cdot S_a, \\
\lambda_a^2 \left(C_{a,\bf{adj}}-\sum_{i}n_{\bf{R}_{i,a}}C_{\bf{R}_{i,a}}\right) & =-3 S_a ^2, \\
\lambda_a \lambda_b \sum_{i,j} n_{\bf{R}_{i,a},\bf{R}_{j,b}} A_{\bf{R}_{i,a}} A_{\bf{R}_{j,b}}  & =S_a\cdot S_b, \quad (a\neq b).
\end{align}
\end{subequations}

Assuming the first three equations hold, 
cancelling the anomalies is equivalent to factoring the  anomaly polynomial 
\begin{equation}
I_8 =\frac{K^2}{8} (\tr R^2)^2 +\frac{1}{6} (X^{(2)}_{1} +X^{(2)}_{2}) \tr R^2-\frac{2}{3} (X^{(4)}_{1}+X^{(4)}_{2})+4Y_{12} .
\end{equation}

The total number of hypermultiplets is the sum of the neutral hypermultiplets and the charged hypermultiplets. 
For a compactification on a Calabi-Yau threefold $Y$, the  number of  neutral hypermultiplets is  $h^{2,1}(Y)+1$ \cite{Cadavid:1995bk}. The number of each multiplet is \cite{GM1}
\begin{align}
n_V^{(6)}&=\dim{G}, \quad n_T=h^{1,1}(B)-1=9-K^2 , \\
n_H&=n_H^0+n_H^{ch}=h^{2,1}(Y)+1+\sum_{i} n_{\bf{R}_i} \left( \dim{\bf{R}_i}-\dim{\bf{R_{i}^{(0)}}} \right),
\end{align}
where the (elliptically-fibered) base $B$ is a rational surface.

\subsection{Anomalies cancellations for an \sug-model}

In this section, we check that the gravitational, gauged, and mixed anomalies of the six-dimensional supergravity are all canceled when the Lie algebra and the representation are
$$\mathfrak{g}= \text{A}_1\oplus \text{G}_2, \quad \bold{R}=(\bold{3},\bold{1})\oplus (\bold{1},\bold{14})\oplus (\bold{2},\bold{1})\oplus(\bold{1},\bold{7})\oplus(\bold{2},\bold{{7}}).$$
First, we recall that for the case of a Calabi-Yau threefold (defined as a crepant resolution of the Weierstrass model of an \sug-model), the Euler characteristic is (see Lemma \ref{lemEulerCY3})
\begin{equation}
\chi(Y)=-6 (10 K^2+5 K\cdot S+8 K\cdot T+S^2+2 S\cdot T+2 T^2),
\end{equation}
 where $S$ and $T$ are the curves supporting  $\text{A}_1$ and $\text{G}_2$ respectively. 
The Hodge numbers are
\begin{equation}
h^{1,1}(Y)=14-K^2, \quad h^{2,1}(Y)=29 K^2+15 K\cdot S+24 K \cdot T+3 S^2+6 S \cdot T+6 T^2+14 .
\end{equation}
The numbers of vector multiplets and tensor multiplets, and neutral hypermultiplets are
\begin{align}\label{Eq:An}
\begin{aligned}
& n_T=9-K^2, \quad n_V=\dim G=\dim\  \text{SU}(2)+\dim\ \text{G}_2=3+14=17, \\
&  n_H^0=h^{2,1}(Y)+1=29 K^2+15 K\cdot S+24 K\cdot  T+3 S^2+6 S \cdot T+6 T^2+15.
\end{aligned}
\end{align}

We will use the anomaly cancellation conditions to explicitly compute the number of hypermultiplets transforming in each representation by requiring all anomalies to cancel.
We will see that they are the same as those found in five-dimensional supergravity  by comparing the triple intersection numbers of the fibral divisors of a given crepant resolution and the cubic prepotential of the corresponding  Coulomb chamber.  

The Lie algebra of type A$_1$ (resp. G$_2$) only has a unique  quartic Casimir invariant so that we do not have to impose the vanishing condition for the coefficients of $ \mathrm{tr}\  F^4_1$  (resp. $ \mathrm{tr}\  F^4_2$). 
We have the following  trace identities for SU($2$) and G$_2$ \cite{Erler,Avramis:2005hc}:
\begin{align}
\begin{aligned}
& \mathrm{tr}_{\bf{3}}F^2_1=4 \mathrm{tr}_{\bf{2}}F^2_2, \quad  \mathrm{tr}_{\bf{3}} F^4_1=8 ( \mathrm{tr}_{\bf{3}} F^2_1)^2, \quad  \mathrm{tr}_{\bf{2}}F^4_1=\frac{1}{2} ( \mathrm{tr}_{\bf{2}}F^2_1)^2,\\
&\mathrm{tr}_{\mathbf 14} F^2_2 = 4 \mathrm{tr}_{\mathbf 7}\ F^2_2 ,  \quad 
 \mathrm{tr}_{\mathbf 14} F^4_2 = \frac{5}{2} (\mathrm{tr}_{\mathbf 7}\ F^2_2)^2,
 \quad
\mathrm{tr}_{\mathbf 7} F^4_2=\frac{1}{4}  (\mathrm{tr}_{\mathbf 7}\ F^2_2)^2,
\end{aligned}
\end{align}
which give 
\begin{align}
\begin{split}
X^{(2)}_1 &= (4 -4 n_{\bf{3,1}} - n_{\bf{2,1}} - 7 n_{\bf{2,7}}) \ { \mathrm{tr}}_{\bf{2}} F^2_1,\quad 
X^{(2)}_2 =( 4 - 4n_{\bf{1,14}} - n_{\bf{1,7}} - 2 n_{\bf{2,7}})\ { \mathrm{tr}}_{\bf{7}} F^2_2,\\
X^{(4)}_1 &= (8  -8 n_{\bf{3,1}} - \frac{1}{2} n_{\bf{2,1}} - \frac{7}{2}n_{\bf{2,7}}) ( \mathrm{tr}_{\bf{2}} F^2_1)^2,\quad
X^{(4)}_2 =   (\frac{5}{2} -\frac{5}{2} n_{\bf{1,14}}-\frac{1}{4} n_{\bf{1,7}}-\frac{2}{4}n_{\bf{2,7}})(\mathrm{tr}_{\bf{7}} F^2_2)^2,\\
Y_{\bf{27}} \ &= n_{\bf{2,7}}  \ { \mathrm{tr}}_{\bf{2}} F^2_1 \ { \mathrm{tr}}_{\bf{7}} F^2_7.
\end{split}
\end{align}
Following Table \ref{tb:normalization}, we  take $\lambda_1=1$ for A$_1$ and $\lambda_2=2$ for G$_2$.
 If the pure gravitational anomaly vanishes, the anomaly cancellation conditions are 
\begin{align}
\begin{aligned}
2( 4 - 4n_{\bf{1,14}} - n_{\bf{1,7}} - 2 n_{\bf{2,7}}) &=  6  K \cdot T,\quad 
  (4 -4 n_{\bf{3,1}} - n_{\bf{2,1}} - 7 n_{\bf{2,7}}) = 6 K \cdot S, \\
 (10 -10 n_{\bf{1,14}}- n_{\bf{1,7}}-2n_{\bf{2,7}})&=-3  T^2,\quad
(8  -8 n_{\bf{3,1}} - \frac{1}{2} n_{\bf{2,1}} - \frac{7}{2}n_{\bf{2,7}}) =-3 S^2,\\
 2n_{\bf{2,7}} &= S \cdot T.
 \end{aligned}
\end{align}
These linear equations  have the following unique solution
\begin{align}\label{eq:numbersReal}
\begin{aligned}
n_{\bf{2,7}} &= \frac{1}{2}S\cdot T, \quad &&  n_{\bf{3,1}} = \frac{1}{2}(K\cdot S+S^2+2) &&  \quad  n_{\bf{2,1}} &&=-S\cdot (8 K +2S+ \frac{7}{2}T), \\
& \quad  
   &&   n_{\bf{1,14}}=\frac{1}{2}(K\cdot T+T^2+2), && \quad n_{\bf{1,7}} &&= -T\cdot (5 K+S+2 T).
 \end{aligned}
 \end{align}
We note that the numbers of charged hypermultiplets match exactly the values found in the five-dimensional theory (see equation \eqref{Eq.5d.hyp}) after using Witten's genus formula which asserts that the number of adjoint hypermultiplets $n_{\bf{3,1}}$ is the genus of the curve $S$ supporting the group SU($2$)  \cite{Witten:1996qb}.
 
 We can now check that the pure gravitational anomaly is cancelled.
The total number of hypermultiplets is the sum of the neutral hypermultiplets coming from the compactification and the charged hypermultiplets transforming under the different irreducible summands of the representation $\bf{R}$. Since the 
charge of a hypermultiplet is given by a weight of a representation, we remove the zero weights when counting charged hypermultiplets \cite{GM1}. In the present case, we have:
\begin{align}
\begin{aligned}
n_H & =n_H^0+n_H^{ch}\\
& =(h^{2,1}(Y)+1)+2 n_{\bf{2,1}}+(7-1) n_{\bf{1,7}}+14 n_{\bf{2,7}}+(3-1) n_{\bf{3,1}}+(14-2) n_{\bf{1,14}}=29 K^2+29.
\end{aligned}
\end{align}

 Using equation \eqref{Eq:An}, we  check that the  coefficient of $\mathrm{tr}\ R^4$ vanishes as required by the cancellation of the pure gravitational anomaly \cite{Salam}:
\begin{equation}
n_H-n_V+29n_T-273=0.
\end{equation}
Finally, we show that the  anomaly polynomial I$_8$ factors as a perfect square:
\begin{align}
\begin{split}
I_8 &= \frac{K^2}{8} ( \mathrm{tr} R^2)^2 + \frac{1}{6} (X_1^{(2)} + X_2^{(2)})  \mathrm{tr} R^2 - \frac{2}{3} (X_1^{(4)} + X_2^{(4)}) + 4 Y_{\bf{27}},\\
&=\frac{1}{2} \left(\frac{1}{2} K {\mathrm{tr}} R^2 +2 S {\mathrm{tr}}_{\bf{2}} F^2_1 + T {\mathrm{tr}}_{\bf{7}} F^2_2 \right)^2.
\end{split}
\end{align}
Hence, we  conclude that all the local anomalies are canceled via the Green--Schwarz--Sagnotti--West mechanism \cite{Green:1984bx,Sagnotti:1992qw}.

\section{Counting hypermultiplets: numerical oddities}\label{sec:odd}
The physics of D-branes in presence of singularities is full of subtleties \cite{AE1,CDE,Esole:2012tf}.  The situations  that are well understood rely on fundamental physical properties  such as  the cancellations of  anomalies and tadpoles or string dualities. 
For example, the induced D3-charge of a singular D7-brane is derived from a tadpole cancellation condition \cite{AE1, CDE}. 

The numbers of charged hypermultiplets we get from the triple intersection numbers of the five-dimensional theory match the numbers of hypermultiplets we get by solving the six-dimensional theory anomaly cancellation  conditions. 
The number of fundamental $n_{\bf{1,7}}$ can  be explained using Witten's genus formula and the number of bifundamental  n$_{\bf{2,7}}$ is given by an intersection number as expected by the usual D-brane picture. 

We point out an interesting observation about the number of hypermultiplets charged in the representation $(\bf{2,1})$ in the  \sug-model. 
One would expect that the number of fundamental matter $n_{\bf{2,1}}$ is also given by a direct intersection number, but that is not the case. 
We cannot explain the number $n_{\bf{2,1}}$  by looking at the D-brane configuration attached to this geometry.

\subsection{Determining  $n_{\bf{2,1}}$ from triple intersection numbers in $5d$ or anomalies in $6d$}
If we evaluate the number of multiplets in a five-dimensional supergravity theory with eight supercharges resulting from a compactification of  M-theory on an elliptically fibered Calabi-Yau threefold obtained by one of the resolution of the \sug-model, we can compare the triple intersection numbers of the divisors of the Calabi-Yau with the one-loop prepotential and deduce the number of multiplets. Moreover, in  a compactification of F-theory on the same variety, we get a six-dimensional supergravity theory and we can use anomaly cancellation conditions to determine the number of hypermultiplets. In both cases, we find that  $n_{\bf{2,1}}$ is given by 
\begin{equation}\nonumber
n_{\bf{2,1}}=-2S\cdot (4K+S+\frac{7}{4} T)=-2S\cdot (4K+S+2 T)+\frac{1}{2} S\cdot T.
\end{equation}

\subsection{Localization of hypermultiplets charged under the representation $({\bf{2,1}})$ }
 We recall that the discriminant of the \sug-model is supported on three divisors, namely, $S$ (which supports SU($2$)),  $T$ (which supports G$_2$) and $\Delta'=4 f^3 + 27 g^2 s$ (which does not support any gauge group).
While $S$  has a transverse intersection with $T$, $S$ intersects $\Delta'$ non-transversally along triple points located at $s=f=0$. 
The hypermultiplets transforming in the representation $(\bf{2,1})$ are localized at these  non-transverse intersection points of $S$ and $\Delta'$. 
The divisor $\Delta'$ has cuspidal singularities at $V(f,g)$. However, since  the divisor $S$ intersects it along its smooth locus,  the cuspdical points should not affect the discussion of the hypermultiplets charged under $(\bf{2,1})$.

The reduced locus of the intersection of $S$ and $\Delta'$ is $V(s,f)$. One might think that  following the usual D-brane picture,  the number of  hypermultiplets  $n_{\bf{2,1}}$ will just be the intersection number 
\begin{equation}
S\cdot [f]=-S\cdot (4K+S+2 T).
\end{equation}
  We might also think that since the intersection of $S$ and $\Delta'$ is given by the non-reduced scheme $(s,f^3)$,  we might take the multiplicity seriously and have 
  \begin{equation}
S\cdot[ \Delta']=-3S\cdot (4K+S+2 T).
\end{equation}
 But what we get is 
 \begin{equation} \label{Eq:n21}
   n_{\bf{2,1}}=-2S\cdot (4K +S+2T)+ \frac{1}{2}S\cdot T.
 \end{equation}
 As discussed in Remark \ref{rem1}, this formula reproduces what is known for the case when $T$ and $S$ are rational curves of self-intersection $-2$ and $-3$ intersecting transversally at one point \cite{Candelas:1997jz,Morrison:2012np}.
 
 \begin{rem}\label{rem.oddity}
There are two surprising numerical oddities about  this formula (equation \ref{Eq:n21}). 
\begin{enumerate}
\item The first oddity  is the coefficient of $2$  in the first term $-2S\cdot (4K +S+2T)$. 
\item The second numerical oddity is the presence of the second term as it predicts a nonzero  number of hypermultiplets charged in the representation $(\bf{2,1})$ even  when the fiber of type III does not intersect the component $\Delta'$ and therefore does not enhance to a fiber of type IV. 
\end{enumerate}
We will discuss these two points further below using the geometric point of view of this paper, namely, the fiber geometry in a crepant resolution and the notion of saturations of weights.
\end{rem}

\subsection{Mismatch between the brane and the anomaly picture?}
 We can justify the contribution $-2S\cdot (4K +S+2T)$ to $n_{\bf{2,1}}$  by analyzing  the crepant resolution. 
  To be specific, we consider the case of the Resolution I studied in Section \ref{Sec:IIII0SResI}.  The fiber III consists of two curves, a projective line ($C_0^\text{s}$) and a conic ($C_1^\text{s}$):
\begin{equation}
C_1^\text{s}:e_1=w_2 y^2- t^2s w_1 (fx+ gts)=0.
\end{equation}
The matrix of this conic is (with respect to the projective coordinates $[y:s:x]$):
\begin{equation}
M=
\begin{pmatrix}
w_2 & 0 & 0 \\
0  &0 &  - \frac{1}{2}w_1t^2 f\\
0  &  - \frac{1}{2}t^2 f w_1& - t^3 g w_1
\end{pmatrix}.
\end{equation}
The determinant of this matrix is 
\begin{equation}
\det M= -\frac{1}{4}w_2 w_1^2 t^4 f^2.\label{Eq.M}
\end{equation}
The determinant of $M$ vanishes over two distinct loci in the base: the intersection $T\cap S$ (in particular $V(e_1, w_1 w_2 t)$)  and  the intersection of $S$ and $V(f)$ (located on $S$ and away from $T$). 

At $f=0$, $M$ has  rank two and the conic splits into two lines inducing an enhancement 
\begin{equation}
\text{III}\to \text{IV}.
\end{equation}
Each of the  two line has weight $[1;0,0]$ since they are both away from the intersection with $T$ and each of them intersects $D_0^\text{s}$ transversally at one point (and therefore by linearity has intersection $-1$ with $D_1^\text{s}$).

The rank of the matrix $M$ collapses to two when $w_2=0$ and one when  $w_1=0$. We recall that after the three blowups, the divisor $T$ has total transform $w_1 w_2 t$. Over $S\cap T$, the generic fiber $C_1^\text{s}$ of $D_1^\text{s}$ degenerates to a collection of  four distinct rational curve  with multiplicities. We are at the intersection of $S$ and $T$ and the fiber III collides with the fiber I$_0^*$ to produce a non-Kodaira fiber corresponding to a Kodaira fiber of type III$^*$ (whose dual graph $\widetilde{\text{E}}_7$) with some components contracted to points. 
\begin{equation}
\text{III}+\text{I$_0^*$}\to\text{ III$^*$  with contracted components}. 
\end{equation}
Over $S\cap T$, the weights produced are in the  adjoint representation $(\bold{1,14})$ and the bifundamental representation $(\bold{2,7})$. 
When $f=0$, $C_1^\text{s}$ produces weights of the fundamental representation $(\bf{2,1})$ of \sug\ . 
One should keep in mind that the nonzero weights of the $(\bold{1,7})$ are the short nonzero weights of the adjoint representation.

The discriminant of the conic indicates that the number of hypermultiplets in the representation  $(\bf{2,1})$ of \sug\  at points away from $T$ might be  (taking into account the multiplicity of $f$ in $\det M$)
\begin{equation}\label{Eq.Vf}
S\cdot [\det M]=-2S\cdot (4K+S+2 T).
\end{equation}
In the expression of $n_{\bf{2,1}}$ in equation  \eqref{Eq:n21}, the  first term $-2S\cdot (4K+S+2 T)$ is exactly the contribution from the discriminant of the conic away from the divisor $T$ seen in equation \eqref{Eq.Vf}  when we take  into account the multiplicity of $f^2$ in $\det M$.  This explains the first numerical oddity mentioned in Remark \ref{rem.oddity}.

 \subsection{Hypers in the representation ($\bold{2,1}$)  at  $S\cap T$?}

The second term  $\tfrac{1}{2} S\cdot T$  in equation  \eqref{Eq:n21}  is a correction term that is equal to $n_{\bf{2,7}}$, the number of hypermultiplets  transforming in the bi-fundamental representation $(\bf{2,7})$. 
This correction term  gives an explicit mismatch with the number of hypermultiplets in the fundamental representation  $n_{\mathbf{2,1}}$ from what we expect from the D-brane picture which should be proportional to the intersection number computed in 
equation \eqref{Eq.Vf}.

One might argue that this second term  $\tfrac{1}{2} S\cdot T$ is an evidence of the existence of extra hypermultiplets transforming in the representation $(\bf{2,1})$ that is localized at the intersection of $S$ and $T$. This argument is in direct conflict with the fibral geometry of the \sug-model, since at the collision of $S$ and $T$, the geometry only predicts  bifundamental matters in the representation $(\bf{2,7})$.

\begin{rem} \label{last.rem}
The representation $\bf{2}$ is a minuscule representation of SU($2$) and $\bf{7}$ is a quasi-minuscule representation of G$_2$, in particular, the representation $\bf{7}$ of G$_2$ has a zero weight and all the six remaining nonzero weights are in the same orbit of the Weyl group.  
The representation $(\bf{2,7})$ is not a (quasi)-minuscule representation of \sug\   but its set of weights contains all the weights of the minuscule representation $(\bf{2,1})$  of \sug. 
 We note that the set of weights of the representation  $(\bf{2,1})$ is a maximal subweight system of the set of weights of the  representation  $(\bf{2,7})$. 
The weights of the  representation $(\bf{2,7})$ consists of two orbits under the Weyl group. One of the orbits has twelve elements that are the products of the weights of the $\bf{2}$ of SU$(2$) with the non-zero weights of the $\bf{7}$ of G$_2$. 
There also exists an orbit that is consisted of exactly of the  weights of the representation $(\bf{2,1})$ and is obtained by the direct product of the weights of the $\mathbf{2}$ of SU($2$) with the zero weight of the $\mathbf{7}$ of G$_2$.  
It follows that the weights of the fundamental representation $(\bf{2,1})$ are exactly the short weights of the bifundamental representation $(\bf{2,7})$. They form an orbit under the action of the Weyl group and the unique  sub-system of weights of the representation $(\bf{2,7})$. 
\end{rem}

In  the Resolution I, II, and III, the geometric weights in the representation $(\bf{2,7})$ are always weights of  the representation $(\bf{2,7})$  that are not weights of the representation $(\bf{2,1})$. In contrast, in the Resolution IV,  some of the rational curves that compose the singular fibers over  $S\cap T$  have weights  $[1;0,0]$, which is a weight of the representation $(\bf{2,1})$. However, on the same locus, we also get a rational curve with the weight $[-1;2,-1]$, which belongs to the representation $(\bf{2,7})$. 
As summarized on Table \ref{Table:Sat}, the saturation of these  two weights, $\{ [1;0,0],\ [-1;2,-1]\}$, is the full set of weights of the representation $(\bf{2,7})$ as can be seen directly  by computing their Weyl orbits (see Remark \ref{last.rem}). 
We note that the representation $(\bf{2,7})$ was derived in all the three other chambers. Hence, finding $(\bf{2,7})$ on $S\cap T$ in Resolution IV is expected since the representations seen in F-theory should not depend on a choice of a crepant resolution, as each crepant resolution corresponds to a distinct Coulomb chamber of the same gauge theory.  

  In conclusion, the number of SU($2$) fundamental hypermultiplets $n_{\bf{2,1}}$ is not given by an intersection number.  It has a correction that is at odd with the fibral geometry. While we see only the  representation $(\bf{2,7})$ at the intersection $S\cap T$, we note that the  representation $(\bf{2,1})$ is a minuscule representation of \sug\ and its weights form  a subset of the weights of the representation $(\bf{2,7})$. In that sense, the weights of the representation $(\bf{2,1})$ are visible in all chambers even though they appear explicitly only as weights of vertical irreducible curves in the Resolution IV. 
 It would be interesting to check if this phenomena occurs  in other models with semi-simple Lie groups.

\section*{Acknowledgements}
The authors are grateful to Patrick Jefferson, Ravi Jagadeesan,  and Shing-Tung Yau for helpful discussions. 
 The authors are thankful  to Jiahua Tian for comments on an earlier version of the paper. 
M.E. is supported in part by the National Science Foundation (NSF) grant DMS-1701635 ``Elliptic Fibrations and String Theory''.
M.J.K. would like to thank the University of Heidelberg and CERN for their hospitality and support during part of this work. 
M.J.K. is partly supported by the National Science Foundation (NSF) grant PHY-1352084.

\appendix

\bibliography{mboyoBib}

\end{document}